\newtheorem{theorem}{Theorem}
\newtheorem{lemma}{Lemma}
\newtheorem{remark}{Remark}
\newtheorem{definition}{Definition}
\newtheorem{corollary}{Corollary}
\newtheorem{proposition}{Proposition}
\newcommand{\EE}{\mathbb{E}}
\newcommand{\matr}[1]{{\underline{#1}}}
\newcommand{\xh}[1]{\hat{X}_{#1}}
\newcommand{\yh}[1]{\hat{Y}_{#1}}
\newcommand{\uh}{\hat{U}}
\newcommand{\vh}{\hat{V}}
\newcommand{\vecY}{\mathbf{Y}}
\newcommand{\vecX}{\mathbf{X}}
\newcommand{\Tr}{\operatorname{Tr}}
\begin{document}

\title{An Extremal Inequality for Long Markov Chains
\thanks{This work was supported by the NSF Center for Science of Information
under grant agreement CCF-0939370.}
}

\author{
\IEEEauthorblockN{Thomas~A.~Courtade}
\IEEEauthorblockA{Department of Electrical Engineering and Computer Sciences\\University of California, Berkeley
              }
\and
\authorblockN{Jiantao Jiao}
\authorblockA{Department of Electrical Engineering\\
Stanford University}
}
\date{\today}

\vspace{-10pt}

\maketitle

\thispagestyle{plain}
\pagestyle{plain}

\vspace{-15pt}

\begin{abstract}
Let $\vecX,\vecY$ be jointly Gaussian vectors, and consider random variables $U,V$ that satisfy the Markov constraint $U-\vecX-\vecY-V$.  We prove an extremal inequality relating the mutual informations between all ${4 \choose 2}$ pairs of random variables from the set $(U,\vecX,\vecY,V)$.  As a first application, we show that the rate region for the two-encoder quadratic Gaussian source coding problem follows as an immediate corollary of the the extremal inequality.  In a second application, we establish the rate region for a vector-Gaussian  source coding problem where L\"owner-John ellipsoids are approximated based on rate-constrained descriptions of the data.
\end{abstract}

\section{Introduction}
In this paper, we prove the following extremal result, which resembles an entropy power inequality:
\begin{theorem}\label{thm:vector}
For $n\times n$ positive definite matrices $\Sigma_X, \Sigma_Z$, let $\mathbf{X}\sim N(\mu_X,\Sigma_X)$ and  $\mathbf{Z}\sim N(\mu_Z,\Sigma_Z)$ be independent $n$-dimensional Gaussian vectors, and define $\mathbf{Y} = \mathbf{X} + \mathbf{Z}$.   For any $U,V$ such that $U-\mathbf{X}-\mathbf{Y}-V$ form a Markov chain, the following inequality holds:
\begin{align}
2^{-\frac{2}{n} (I(\mathbf{Y};U)+ I(\mathbf{X};V))} &\geq \frac{| \Sigma_X |^{1/n}}{| \Sigma_X  + \Sigma_Z |^{1/n}}  ~2^{-\frac{2}{n} (I(\mathbf{X};U)+I(\mathbf{Y};V))} + 2^{-\frac{2}{n}( I(\mathbf{X};\mathbf{Y})+I(U;V))}.\label{introVec}%
\end{align}
\end{theorem}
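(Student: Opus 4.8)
The plan is to prove the scalar case $n=1$ first and then lift it to general $n$. For the lifting step, observe that both sides of \eqref{introVec} are invariant under replacing $(\mathbf{X},\mathbf{Z},\mathbf{Y})$ by $(T\mathbf{X},T\mathbf{Z},T\mathbf{Y})$ for any invertible $T$: the six mutual informations are unchanged, and the factor $|T|^2$ cancels in $|\Sigma_X|/|\Sigma_X+\Sigma_Z|$. Choosing $T$ to simultaneously diagonalize the pair by congruence, one may assume $\Sigma_X = I$ and $\Sigma_Z$ diagonal, so that $\mathbf{Y}$ has independent coordinates $Y_i=X_i+Z_i$. The inequality still does not tensorize, since $U$ and $V$ may couple the coordinates; to pass from the scalar to the vector statement I would use the enhancement technique of Weingarten--Steinberg--Shamai — replace $\Sigma_Z$ by a suitably enhanced noise covariance $\widetilde{\Sigma}_Z$ so that a coordinatewise application of the scalar result together with the usual chain-rule-plus-conditioning bookkeeping reassembles \eqref{introVec} — or, alternatively, a Gaussian-perturbation (monotone path) argument that differentiates the difference of the two sides along a path $\Sigma_Z(t)$ and uses the I-MMSE/de~Bruijn identity to reduce monotonicity to the scalar inequality. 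I expect this lifting to be one of the two main technical burdens.

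For the scalar case it is convenient to introduce conditional entropy powers $N(X|U):=\tfrac{1}{2\pi e}e^{2h(X|U)}$, etc., so that $2^{-2I(X;U)}=N(X|U)/\sigma_X^2$, $2^{-2I(Y;V)}=N(Y|V)/(\sigma_X^2+\sigma_Z^2)$, $2^{-2I(X;Y)}=\sigma_Z^2/(\sigma_X^2+\sigma_Z^2)$, and similarly for the remaining pairs. Clearing denominators, \eqref{introVec} with $n=1$ becomes
\[
N(Y|U)\,N(X|V)\ \ge\ \frac{\sigma_X^2}{\sigma_X^2+\sigma_Z^2}\,N(X|U)\,N(Y|V)\ +\ \sigma_X^2\sigma_Z^2\,2^{-2I(U;V)}.
\]
The naive attack is to bound the two factors on the left separately by the conditional entropy power inequality — $N(Y|U)\ge N(X|U)+\sigma_Z^2$ (legitimate since $Z\perp(X,U)$), and, writing $X=\mathbb{E}[X\mid Y]+W$ with $W\perp(Y,V)$, $N(X|V)\ge \bigl(\tfrac{\sigma_X^2}{\sigma_X^2+\sigma_Z^2}\bigr)^{2}N(Y|V)+\tfrac{\sigma_X^2\sigma_Z^2}{\sigma_X^2+\sigma_Z^2}$ — each tight for jointly Gaussian $(U,X)$, resp.\ $(Y,V)$. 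Substituting reduces the claim to the single scalar inequality $2^{-2I(U;V)}\le 1-\tfrac{\sigma_X^2}{\sigma_X^2+\sigma_Z^2}\bigl(1-2^{-2I(X;U)}\bigr)\bigl(1-2^{-2I(Y;V)}\bigr)$, which holds with equality for the Gaussian quadruple. \emph{But this inequality is false in general}: with $\sigma_X^2=\sigma_Z^2=1$, $U=\mathrm{sign}(X)$, $V=\mathrm{sign}(Y)$ one has $I(X;U)=I(Y;V)=1$ and $I(U;V)=1-h_b(1/4)$ (in bits), so the right side equals $1-\tfrac12(3/4)^2\approx 0.72$ while the left side equals $2^{-2(1-h_b(1/4))}\approx 0.77$. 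Hence the two factors on the left cannot be bounded independently; the correct argument must preserve the coupling between the ``$U$-side'' and the ``$V$-side'' that flows through $(X,Y)$.

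The route I would pursue is to apply the entropy power inequality not to $Y=X+Z$ alone but to a one-parameter family of linear combinations $\sqrt{\mu}\,X\pm Y$ of $X$ and $Y$ (the ``$\mu$-sum'' device of Wang--Tavildar--Viswanath), while carrying the pair $(U,V)$ through both conditionings via the identities $h(Y\mid U)=h(Y\mid U,V)+I(Y;V\mid U)$ and $h(X\mid V)=h(X\mid U,V)+I(X;U\mid V)$; the product $N(Y|U)\,N(X|V)$ then carries the extra factor $2^{\,2[I(Y;V\mid U)+I(X;U\mid V)]}$ that is missing from the naive bound, and $\mu$ is optimized so the resulting estimate meets the Gaussian value. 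Since \eqref{introVec} holds with equality for the Gaussian quadruple having the same pairwise mutual informations, an equivalent implementation is a doubling-and-rotation argument — take two i.i.d.\ copies, apply the rotation $(\mathbf{X}_1,\mathbf{X}_2)\mapsto\bigl(\tfrac{\mathbf{X}_1+\mathbf{X}_2}{\sqrt2},\tfrac{\mathbf{X}_1-\mathbf{X}_2}{\sqrt2}\bigr)$ (and likewise to $\mathbf{Z},\mathbf{Y}$), and argue that the inequality is preserved with controlled deficit, so that iterating drives the extremal configuration to a Gaussian one where equality holds. The crux of the whole proof is exactly this step — producing an EPI-type bound that is simultaneously tight in both conditionings — with the vector enhancement of the first paragraph the remaining, largely technical, hurdle.
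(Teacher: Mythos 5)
Your diagnosis that the naive two-factor conditional-EPI attack fails is correct and the counterexample is valid; and the second of your two candidate routes --- the doubling-and-rotation argument, with the rotation $(\mathbf{X}_1,\mathbf{X}_2)\mapsto\bigl(\tfrac{\mathbf{X}_1+\mathbf{X}_2}{\sqrt2},\tfrac{\mathbf{X}_1-\mathbf{X}_2}{\sqrt2}\bigr)$ applied to two i.i.d.\ copies --- is indeed what the paper does (it is the Geng--Nair technique). But your proposal stops at naming the idea, and the gap between the idea and a proof is precisely where all the work lies. What is missing: the paper does not argue directly on the inequality but on its dual form $F(\lambda)=\inf\{I(X;U)-\lambda I(Y;U)+I(Y;V|U)-\lambda I(X;V|U)\}$; it must first establish that the infimum is attained (a lower-semicontinuity/tightness argument occupying an appendix); it then needs a conservation law $I(X;U_\lambda|Y)+I(Y;V_\lambda|X)=F(\lambda)-(\lambda-1)F'(\lambda)$ valid across \emph{all} minimizers at a.e.\ $\lambda$, obtained by differentiating $F$; this conservation law is what turns the per-step improvement of the doubling into a telescoping sum $\sum_k I(\hat X_1;\hat X_2\mid \hat U^{(k)},\hat Y_1,\hat Y_2)<\infty$, forcing the summands to zero; and finally the Ghurye--Olkin/Skitovich--Darmois characterization converts the vanishing mutual information into Gaussianity of $X\mid U$. ``Iterate and argue the extremizer is driven to a Gaussian'' compresses Lemma~\ref{lem:candidateMinimizers} through Corollary~\ref{existUV_XuGauss} plus Appendix~\ref{app:InfimaObtained} into one sentence; as written you have no mechanism to make the iteration converge or to conclude Gaussianity from its limit. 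The $\mu$-sum alternative you float first is not what the paper uses and there is no indication it would localize the coupling between $U$ and $V$ in the way needed here.

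The scalar-to-vector lifting plan is also a departure, and likely a dead end. The paper does not reduce the vector statement to the scalar one: it reruns the identical doubling argument with $\mathbf X,\mathbf Y,U,V$ replacing $X,Y,U,V$ (the proofs ``generalize verbatim''), obtains that $\mathbf X\mid U$ is Gaussian for an extremizer, and only then invokes a vector conditional entropy power inequality (Lemma~\ref{lem:vecEPI}, a direct computation after a linear change of variables) plus the Minkowski determinant inequality to evaluate the dual functional. Neither enhancement in the Weingarten--Steinberg--Shamai sense nor a de~Bruijn/monotone-path argument is used, and it is far from clear that either would apply here: those techniques exploit structure (degradedness, a specific channel family) that is absent when $U,V$ range over all auxiliaries satisfying the long Markov chain. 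Since the doubling argument already works coordinate-free, reducing to $n=1$ first buys you nothing and creates a tensorization problem you have no tool to solve.
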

\noindent In the simplest case, where $\mathbf{Y} = \rho \mathbf{X} + \mathbf{Z}$,  $\Sigma_X = I_n$ and $\Sigma_Z = (1-\rho^2) I_n$, Theorem  \ref{thm:vector} implies
\begin{align}
2^{-\frac{2}{n} (I(\mathbf{Y};U)+ I(\mathbf{X};V))}  &\geq (1-\rho^2)2^{-\frac{2}{n} I(V;U)} + \rho^2 2^{-\frac{2}{n} (I(\mathbf{X};U)+I(\mathbf{Y};V))}. \label{introWhite}%
\end{align}
If $V$ is degenerate, \eqref{introWhite} further simplifies to an inequality shown by Oohama in \cite{Oohama1997}, which proved to be instrumental in establishing the rate-distortion region for the one-helper quadratic Gaussian source coding problem.  Together with Oohama's work, the sum-rate constraint established by Wagner \emph{et al}. in their \emph{tour de force}  \cite{WagnerRateRegion2008} completely characterized the rate-distortion region for  the two-encoder quadratic Gaussian source coding problem.  It turns out that the sum-rate constraint of Wagner \emph{et al}. can be recovered as an immediate corollary to \eqref{introWhite}, thus unifying the works of Oohama and Wagner \emph{et al}. under a common inequality.  The entire argument is given as follows. 

\subsection{Recovery of the scalar-Gaussian sum-rate constraint}
Using the Markov relationship $U-\mathbf{X}-\mathbf{Y}-V$, we can rearrange the exponents in \eqref{introWhite} to obtain the equivalent inequality 
\begin{align}
2^{-\frac{2}{n} (I(\mathbf{X};U,V) +I(\mathbf{Y};U,V))} \geq 2^{-\frac{2}{n}I(\mathbf{X},\mathbf{Y};U,V) } \left( 1-\rho^2 + \rho^2 2^{-\frac{2}{n}I(\mathbf{X},\mathbf{Y};U,V) }\right) . \label{LRHSmonotone}
\end{align}
The left- and right-hand sides of \eqref{LRHSmonotone} are monotone decreasing in $\frac{1}{n}(I(\mathbf{X};U,V) +I(\mathbf{Y};U,V))$ and $\frac{1}{n}I(\mathbf{X},\mathbf{Y};U,V)$, respectively.  Therefore, if 
\begin{align}
&\frac{1}{n} (I(\mathbf{X};U,V) +I( \mathbf{Y};U,V) ) \geq \frac{1}{2} \log \frac{1}{D} \mbox{~~~and~~~}\frac{1}{n}I(\mathbf{X},\mathbf{Y};U,V) \leq R \label{eqnRD}
\end{align}
for some pair $(R,D)$, then we have $D \geq 2^{-2 R } \left( 1-\rho^2 + \rho^2 2^{-2 R }\right)$,
 which is a quadratic inequality with respect to the term $2^{-2 R }$.  This is easily solved using the quadratic formula to obtain:
\begin{align}
2^{-2R} \leq \frac{2D}{(1-\rho^2)\beta(D)} \quad \Rightarrow \quad R \geq \frac{1}{2}\log \frac{(1-\rho^2)\beta(D)}{2D}, \label{RDineq}
\end{align}
where $\beta(D)\triangleq 1 + \sqrt{1 + \frac{4\rho^2 D}{(1-\rho^2)^2}}$.  Note that Jensen's inequality and the maximum-entropy property of Gaussians imply 
\begin{align}
\frac{1}{n} (I(\mathbf{X};U,V) +I( \mathbf{Y};U,V) ) \geq \frac{1}{2} \log \frac{1}{\mathsf{mmse}(\mathbf{X}|U,V) \mathsf{mmse}(\mathbf{Y}|U,V) },\label{mmseEq}
\end{align}
where $\mathsf{mmse}(\mathbf{X}|U,V) \triangleq \frac{1}{n} \| \mathbf{X} - \mathbb{E}[\mathbf{X}|U,V]  \|^2$, and $\mathsf{mmse}(\mathbf{Y}|U,V)$ is defined similarly.  Put $U = f_x(\mathbf{X})$ and $V = f_y(\mathbf{Y})$, where $f_x : \mathbb{R}^n \rightarrow [1:2^{nR_x}]$ and $f_y : \mathbb{R}^n \rightarrow [1:2^{nR_y}]$.  Supposing  $\mathsf{mmse}(\mathbf{X}|U,V)\leq d_x$ and $\mathsf{mmse}(\mathbf{Y}|U,V)\leq d_y$,  inequalities \eqref{eqnRD}-\eqref{mmseEq} together imply
\begin{align}
R_x + R_y \geq \frac{1}{2}\log \frac{(1-\rho^2)\beta(d_x d_y)}{2d_x d_y},\label{recoveredSumRate}
\end{align}
which is precisely the sum-rate constraint for the two-encoder quadratic Gaussian source coding problem.

\subsection{Distributed compression of minimal-volume ellipsoids}

Above, recovery of the quadratic Gaussian sum-rate constraint \eqref{recoveredSumRate} demonstrated the utility of Theorem \ref{thm:vector} in proving nontrivial results.  Now, we consider a new problem which, the the authors' knowledge, is not a consequence of known results in the open literature.  In particular, we study the problem of compressing ellipsoids that cover a set of points which, subject to rate constraints, have approximately minimal volume.  Such ellipsoids are similar to L\"owner-John ellipsoids, which are defined as the (unique) ellipsoid of minimal volume that covers a finite set of points \cite{henk2012lowner}.  These minimum-volume ellipsoids and their approximations play a prominent role in the fields of optimization, data analysis, and computational geometry (e.g., \cite{boyd2004convex}). 

To begin, we recall that an $n$-dimensional ellipsoid $\mathcal{E}$ can be parameterized by a positive semidefinite matrix $A\in \mathbb{R}^{n\times n}$ and a vector $b\in \mathbb{R}^n$ as follows:
\begin{align}
\mathcal{E} = \mathcal{E}(A,b) = \left\{ x\in \mathbb{R}^n : \|Ax - b\|\leq 1 \right\}.\label{ellParam}
\end{align}
The volume of $\mathcal{E}(A,b)$ is related to the determinant of $A$ by
\begin{align}
\operatorname{vol}\left(\mathcal{E}(A,b)\right) = \frac{c_n}{|A|},
\end{align}
where $c_n \sim \frac{1}{\sqrt{n\pi}}\left(\frac{2\pi e}{n}\right)^{n/2}$ is the volume of the $n$-dimensional unit ball.

Fix $\rho\in(0,1)$, and let   
 $\{\Sigma_n : n\geq 1\}$ be a sequence of positive definite $n\times n$ matrices. Suppose $(\vecX_1, \vecY_1), \dots, (\vecX_k, \vecY_k)$ are $k$ independent pairs of jointly Gaussian vectors, each equal in distribution to $(\vecX,\vecY)$, where $\mathbb{E}[\vecX \vecX^T] = \mathbb{E}[\vecY \vecY^T]=\Sigma_n$, and $\mathbb{E}[\vecX \vecY^T] =\rho\Sigma_n$. 

 A $(n,R_x,R_y,\nu_x,\nu_y,k,\Sigma_n, \epsilon)$-code consists of encoding functions 
\begin{align}
f_x &: \mathbb{R}^{k n} \rightarrow \{1,2,\dots,2^{kn R_x }\}\\
f_y &: \mathbb{R}^{k n} \rightarrow \{1,2,\dots,2^{kn R_y }\}
\end{align}
and a decoding function
\begin{align}
\psi : \left(f_x(\vecX_1, \dots, \vecX_k), f_y(\vecY_1, \dots, \vecY_k)\right) \mapsto (A_x, A_y, b_x, b_y)
\end{align}
such that 
\begin{align}
\max_{1\leq i\leq k} \Pr\left\{ \vecX_i \notin \mathcal{E}(A_x,b_x) \right\} < \epsilon \mbox{~~and~~} 
\max_{1\leq i\leq k} \Pr\left\{ \vecY_i \notin \mathcal{E}(A_y,b_y) \right\} < \epsilon,
\end{align}
and 
\begin{align}
\Big( \operatorname{vol}(\mathcal{E}(A_x,b_x) ) \Big)^{1/n} &\leq (1+\epsilon) {c_n^{1/n}}{\sqrt{{n \nu_x}  |\Sigma_n|^{1/n}}} \label{vol1}\\
\Big( \operatorname{vol}(\mathcal{E}(A_y,b_y) ) \Big)^{1/n} &\leq (1+\epsilon) {c_n^{1/n}}{\sqrt{{n \nu_y}  |\Sigma_n|^{1/n}}}.\label{vol2}
\end{align}%

We remark that $\sqrt{n}c_n^{1/n}\rightarrow \sqrt{2 \pi e}$  as $n\rightarrow \infty$ by Stirling's approximation, which explains the normalization factor of $\sqrt{n}$ in the volume constraint.  In particular, \eqref{vol1}-\eqref{vol2} can be replaced with
\begin{align}
\Big( \operatorname{vol}(\mathcal{E}(A_x,b_x) ) \Big)^{1/n} &\leq (1+\epsilon) {\sqrt{{(2\pi e) \nu_x}  |\Sigma_n|^{1/n}}} \label{vol1a}\\
\Big( \operatorname{vol}(\mathcal{E}(A_y,b_y) ) \Big)^{1/n} &\leq (1+\epsilon) {\sqrt{{(2\pi e) \nu_y}  |\Sigma_n|^{1/n}}}.\label{vol2a}
\end{align}

\begin{definition}
For a sequence $\{\Sigma_n : n\geq 1\}$ of positive definite $n\times n$ matrices, a tuple $(R_x,R_y,\nu_x,\nu_y,k)$ is $\{\Sigma_n : n\geq 1\}$-achievable if there exists a sequence of $(n,R_x,R_y,\nu_x,\nu_y,k,\Sigma_n,\epsilon_n)$ codes satisfying $\epsilon_n\rightarrow 0$ as $n\rightarrow \infty$.
\end{definition}

If $(R_x,R_y,\nu_x,\nu_y,k)$ is a Pareto-optimal $\{\Sigma_n : n\geq 1\}$-achievable point, the corresponding ellipsoids $\mathcal{E}(A_x,b_x) ,\mathcal{E}(A_y,b_y)$ can be viewed as the best approximations to  L\"owner-John ellipsoids subject to rate-constrained descriptions of the data.  That is, the two ellipsoids cover the $k$ points observed at their respective encoders, and are (essentially) the minimum-volume such ellipsoids that can be computed from rate-constrained descriptions of the data. The general problem setup is illustrated in Figure \ref{fig:ellipse}.

\begin{figure}
\def\svgwidth{\textwidth}
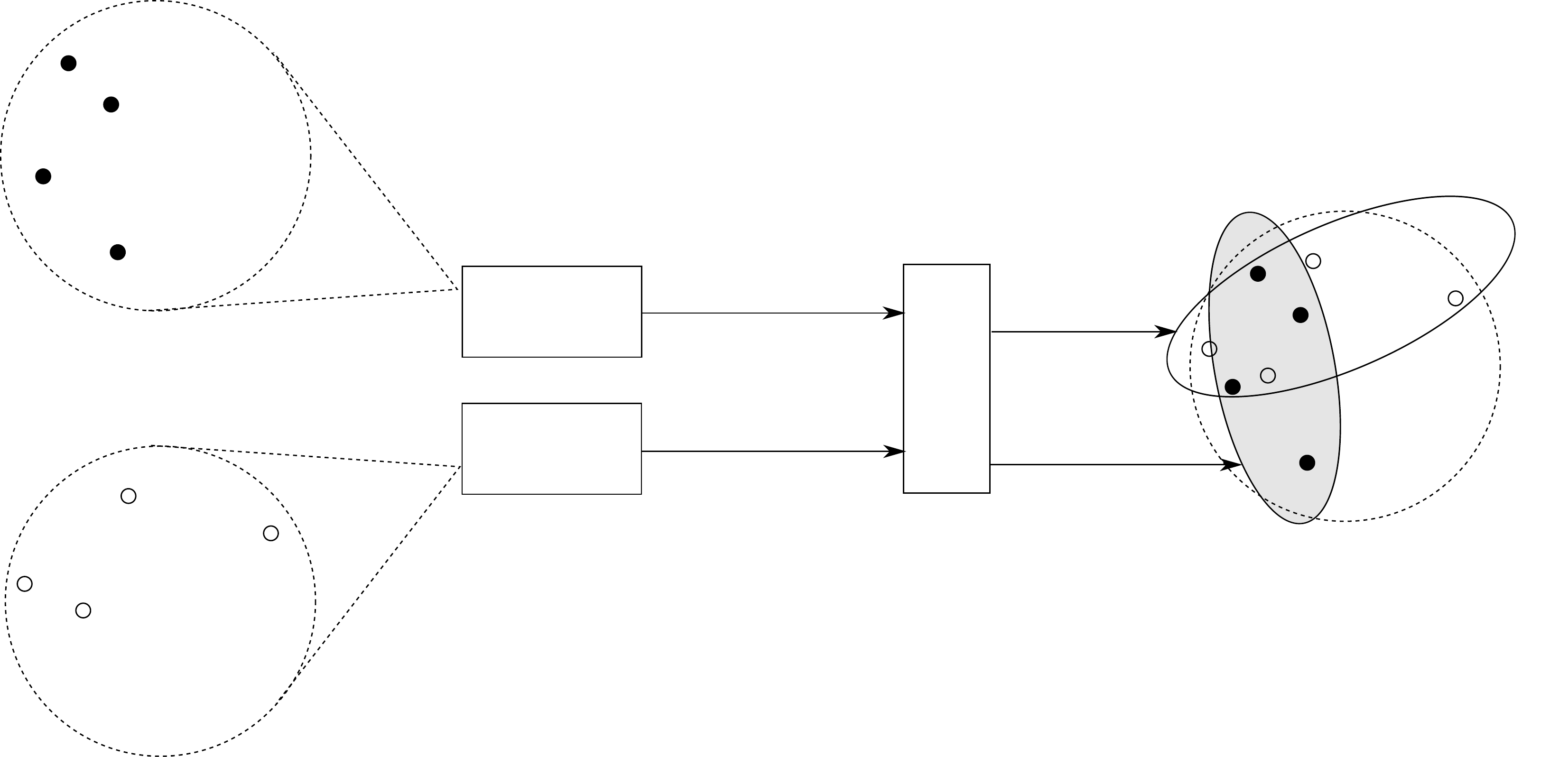
\caption{Computation of covering-ellipsoids from compressed descriptions of the observed data points ($k=4$).  Note the decoder only computes the ellipsoids $\mathcal{E}(A_x,b_x),\mathcal{E}(A_y,b_y)$.  The data points at the output of the decoder are only shown for reference.}\label{fig:ellipse}
\end{figure}

\begin{theorem}\label{thm:Ellipsoid}
For any sequence $\{\Sigma_n : n\geq 1\}$ of positive definite $n\times n$ matrices, a tuple $(R_x,R_y,\nu_x,\nu_y,k)$ is $\{\Sigma_n : n\geq 1\}$-achievable if and only if
\begin{align}
R_x &\geq\frac{1}{2}\log \left[\frac{1}{\nu_x}\left(1-\rho^2 + \rho^2 2^{-2 R_y} \right) \right]\label{RxConstratint}\\%
R_y &\geq\frac{1}{2}\log \left[\frac{1}{\nu_y}\left(1-\rho^2 + \rho^2 2^{-2 R_x} \right) \right]\label{RyConstratint}\\
R_x + R_y &\geq \frac{1}{2}\log \frac{(1-\rho^2)\beta(\nu_x \nu_y)}{2\nu_x \nu_y},\label{RxyConstratint}
\end{align}
where $\beta(z)\triangleq 1 + \sqrt{1 + \frac{4\rho^2 z}{(1-\rho^2)^2}}$.
\end{theorem}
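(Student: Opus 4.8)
\medskip\noindent\emph{Proof plan.} Both directions begin with a \emph{whitening reduction}: replacing $\vecX_i$ by $\Sigma_n^{-1/2}\vecX_i$ and $\vecY_i$ by $\Sigma_n^{-1/2}\vecY_i$, we may assume $\Sigma_n=I_n$, since this linear change of coordinates scales the volume of every ellipsoid by $|\Sigma_n|^{1/2}$ (so \eqref{vol1a}--\eqref{vol2a} become $\operatorname{vol}(\mathcal E_x)^{1/n}\le(1+\epsilon)\sqrt{2\pi e\,\nu_x}$, and similarly for $\mathcal E_y$) and shifts every differential entropy below by a fixed amount. After whitening, the $kn$ scalar coordinates of $\widetilde{\vecX}:=(\vecX_1,\dots,\vecX_k)$ and $\widetilde{\vecY}:=(\vecY_1,\dots,\vecY_k)$ are $kn$ i.i.d.\ pairs of unit-variance, $\rho$-correlated Gaussians; in particular $\widetilde{\vecY}=\rho\widetilde{\vecX}+\widetilde{\vecZ}$ with $\widetilde{\vecZ}\sim N(0,(1-\rho^2)I_{kn})$, so \eqref{introWhite} and the rearrangement \eqref{LRHSmonotone}--\eqref{RDineq} apply to the pair $(\widetilde{\vecX},\widetilde{\vecY})$ with the ambient dimension ``$n$'' replaced by $kn$. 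The information-theoretic content of both directions is then carried entirely by \eqref{introWhite}; what remains is to pass between ``a rate-constrained covering ellipsoid of normalized volume $\approx\sqrt{2\pi e\,\nu}$'' and ``a conditional differential entropy $\approx\tfrac n2\log(2\pi e\,\nu)$''.

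\smallskip\noindent\emph{Converse.} Put $U=f_x(\widetilde{\vecX})$, $V=f_y(\widetilde{\vecY})$, so $U-\widetilde{\vecX}-\widetilde{\vecY}-V$ is Markov, $H(U)\le knR_x$, $H(V)\le knR_y$. The key estimate is that the covering and volume constraints force, for each $i$,
\begin{equation}
h(\vecX_i\mid U,V)\ \le\ \tfrac n2\log\!\big(2\pi e\,\nu_x\big)+n\,\gamma_n,\qquad \gamma_n\to0, \label{eq:planEnt}
\end{equation}
and symmetrically for $\vecY_i$. To see \eqref{eq:planEnt}, let $W=\mathbf 1\{\vecX_i\in\mathcal E(A_x,b_x)\}$; then $h(\vecX_i\mid U,V)=H(W\mid U,V)+h(\vecX_i\mid W,U,V)$ with $H(W\mid U,V)\le1$, $h(\vecX_i\mid W=1,U,V)\le\log\operatorname{vol}(\mathcal E(A_x,b_x))\le n\log(1+\epsilon)+\tfrac n2\log(2\pi e\,\nu_x)$ by the (surely satisfied) volume constraint, and the leftover $\Pr(W=0)\,h(\vecX_i\mid W=0,U,V)$ made $o(n)$ by a routine truncation argument (restricting $\vecX_i$ to a ball of radius $O(\sqrt n)$ changes all entropies by $o(n)$ and makes $h(\vecX_i\mid W=0,U,V)=O(n)$, while $\Pr(W=0)<\epsilon$); here $\gamma_n$ collects these vanishing errors along any sequence of codes with $\epsilon_n\to0$. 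Chaining \eqref{eq:planEnt} over $i$ gives $h(\widetilde{\vecX}\mid U,V)\le\tfrac{kn}2\log(2\pi e\,\nu_x)+kn\gamma_n$, i.e.\ $I(\widetilde{\vecX};U,V)\ge\tfrac{kn}2\log\tfrac1{\nu_x}-kn\gamma_n$, and likewise for $\widetilde{\vecY}$. Now the three bounds fall out as in the introduction. For \eqref{RxConstratint}: $knR_x\ge H(U\mid V)\ge I(U;\widetilde{\vecX}\mid V)=h(\widetilde{\vecX}\mid V)-h(\widetilde{\vecX}\mid U,V)$, and \eqref{introWhite} (dimension $kn$) with $U$ degenerate gives $2^{-\frac2{kn}I(\widetilde{\vecX};V)}\ge(1-\rho^2)+\rho^2 2^{-\frac2{kn}I(\widetilde{\vecY};V)}\ge(1-\rho^2)+\rho^2 2^{-2R_y}$, hence $h(\widetilde{\vecX}\mid V)=\tfrac{kn}2\log(2\pi e)-I(\widetilde{\vecX};V)\ge\tfrac{kn}2\log\!\big(2\pi e[(1-\rho^2)+\rho^2 2^{-2R_y}]\big)$; combining with \eqref{eq:planEnt} and letting $n\to\infty$ yields \eqref{RxConstratint}, and \eqref{RyConstratint} follows by symmetry (use \eqref{introWhite} with $V$ degenerate). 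For \eqref{RxyConstratint}, feed $\tfrac1{kn}\big(I(\widetilde{\vecX};U,V)+I(\widetilde{\vecY};U,V)\big)\ge\tfrac12\log\tfrac1{\nu_x\nu_y}-2\gamma_n$ and $\tfrac1{kn}I(\widetilde{\vecX},\widetilde{\vecY};U,V)\le\tfrac1{kn}(H(U)+H(V))\le R_x+R_y$ into \eqref{LRHSmonotone}--\eqref{RDineq} (with $D=\nu_x\nu_y 2^{4\gamma_n}$) to obtain $R_x+R_y\ge\tfrac12\log\frac{(1-\rho^2)\beta(\nu_x\nu_y 2^{4\gamma_n})}{2\,\nu_x\nu_y 2^{4\gamma_n}}$, then send $n\to\infty$.

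\smallskip\noindent\emph{Achievability.} Fix small $\delta>0$, and set $D_x=\nu_x(1-3\delta)$ if $\nu_x<1$, else $D_x=\nu_x$ (similarly $D_y$). For $\delta$ small enough $(R_x,R_y,D_x,D_y)$ lies in the rate--distortion region of the two-encoder quadratic Gaussian problem for unit-variance $\rho$-correlated scalar source pairs, which is precisely \eqref{RxConstratint}--\eqref{RxyConstratint} with $\nu$ replaced by $D$ \cite{Oohama1997,WagnerRateRegion2008}. Run $k$ independent copies of a corresponding good blocklength-$n$ code (vanishing excess-distortion probability), the $i$-th acting on $(\vecX_i,\vecY_i)$; this costs total rate $knR_x$ at the $x$-encoder and $knR_y$ at the $y$-encoder and yields reconstructions $\widehat{\vecX}_i,\widehat{\vecY}_i$ with $\tfrac1n\|\vecX_i-\widehat{\vecX}_i\|^2\le D_x(1+\delta)$ and $\tfrac1n\|\vecY_i-\widehat{\vecY}_i\|^2\le D_y(1+\delta)$ with probability $\to1$. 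The $x$-decoder outputs $\mathcal E_x=\{x:(x-\bar{\vecX})^{\!\top}M^{-1}(x-\bar{\vecX})\le1\}$ where $\bar{\vecX}=\tfrac1k\sum_i\widehat{\vecX}_i$, $M=\alpha P+\beta(I-P)$, $P$ is the orthogonal projector onto $\operatorname{span}\{\widehat{\vecX}_i-\bar{\vecX}\}$ (of rank $r\le k-1$), $\beta=n\nu_x$, $\alpha=n/\delta^2$. Since $\widehat{\vecX}_i-\bar{\vecX}\in\operatorname{range}P$ one has $(I-P)(\vecX_i-\bar{\vecX})=(I-P)(\vecX_i-\widehat{\vecX}_i)$, so the $(I-P)$-part of the quadratic form is $\le D_x(1+\delta)/\nu_x<1-2\delta$ w.h.p., while $\|P(\vecX_i-\bar{\vecX})\|^2\le\|\vecX_i-\bar{\vecX}\|^2=O(n)$ makes the $P$-part $\le O(n)/\alpha=O(\delta^2)$; their sum is $<1$, so $\vecX_i\in\mathcal E_x$. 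Finally $\operatorname{vol}(\mathcal E_x)^{1/n}=c_n^{1/n}\,\alpha^{r/2n}\,\beta^{(n-r)/2n}\to\sqrt{2\pi e\,\nu_x}$, using $\sqrt n\,c_n^{1/n}\to\sqrt{2\pi e}$ and that $r$ is bounded, so \eqref{vol1a} holds for $n$ large; undoing the whitening scales volumes by $|\Sigma_n|^{1/2}$. (When $\nu_x\ge1$ take $\widehat{\vecX}_i=0$, $M=n\nu_x I$, i.e.\ the ball of radius $\sqrt{n\nu_x}\ge\sqrt n$.) The $y$-side is identical.

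\smallskip\noindent\emph{Main difficulty.} The information theory is off-the-shelf once \eqref{introWhite} is in hand; the work lies in the two geometric translations. On the converse side it is the estimate \eqref{eq:planEnt}, whose only subtlety is absorbing the $\epsilon$-probability event that $\vecX_i$ leaves the ellipsoid (the truncation step). On the achievability side the crux is recognizing that a \emph{single} covering ellipsoid for the $k$ spread-out reconstructions $\widehat{\vecX}_1,\dots,\widehat{\vecX}_k$ may be taken thin, of half-width $\sqrt{n\nu_x}$, in all $\ge n-k+1$ directions orthogonal to their affine hull; those directions dominate the volume exponent, so the normalized volume cost collapses to exactly what a single point ($k=1$) would require — which is also why the region is independent of $k$.
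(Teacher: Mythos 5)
Your proposal follows the same overall plan as the paper's proof: both directions reduce to the two-encoder quadratic Gaussian problem via Theorem~\ref{thm:vector}, plus a geometric translation between ellipsoids and entropies/distortions. The achievability constructions are essentially identical—a single ellipsoid, thin ($\sim\sqrt{n\nu_x}$) in the $\ge n-k$ directions orthogonal to the affine span of the reconstruction points and stretched in the remaining $\le k$ directions; the paper parameterizes it as $\{x:\|\matr{\mathbf{B_x}}x\|\le 1\}$ with $\matr{\mathbf{B_x}}=(\tau I_n-\gamma\sum_j u_ju_j^T)A_x$, whereas you use the quadratic form $(x-\bar\vecX)^TM^{-1}(x-\bar\vecX)\le 1$, but the geometry and the volume calculation coincide.

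The converse differs in a way worth noting, and your route is arguably cleaner. The paper conditions on the event $E_i=0$, bounds the conditional covariance determinant via AM--GM, and is then forced to control $\tfrac1n\big(h(\vecX_i\mid E_i=0)-h(\vecX_i)\big)$; this is their Lemma~\ref{lem:hXE_hX}, whose proof takes genuine work (relative-entropy contraction, Pinsker, uniform integrability). You instead decompose $h(\vecX_i\mid U,V)=H(W\mid U,V)+\Pr\{W{=}1\}\,h(\vecX_i\mid W{=}1,U,V)+\Pr\{W{=}0\}\,h(\vecX_i\mid W{=}0,U,V)$, bound the $W{=}1$ term by $\log\operatorname{vol}(\mathcal E)$ directly, and the $W{=}0$ term carries a vanishing probability weight---so Lemma~\ref{lem:hXE_hX} is never invoked. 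That is a genuine simplification, at the cost of a slightly weaker (max-entropy-on-support versus max-entropy-for-given-covariance) bound, which happens not to matter here.

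Two details need tightening. First, your ``routine truncation argument'' for the $W{=}0$ term is not actually an argument as stated (there is no well-defined operation that ``restricts $\vecX_i$ to a ball''). What does close the gap is the direct trace bound: $h(\vecX_i\mid W{=}0,U,V)\le h(\vecX_i\mid W{=}0)\le \tfrac n2\log\!\big(\tfrac{2\pi e}{n}\mathbb E[\|\vecX_i\|^2\mid W{=}0]\big)$ by AM--GM, and $\mathbb E[\|\vecX_i\|^2\mid W{=}0]\le n/\Pr\{W{=}0\}$; hence $\Pr\{W{=}0\}\,h(\vecX_i\mid W{=}0,U,V)\le\tfrac n2\,\Pr\{W{=}0\}\log\tfrac{2\pi e}{\Pr\{W{=}0\}}=o(n)$ since $\Pr\{W{=}0\}<\epsilon_n\to0$ and $p\log(1/p)\to0$. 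Second, your fallback for $\nu_x\ge1$ (a bare ball of radius $\sqrt{n\nu_x}$) does not actually contain $\vecX_i$ with probability tending to one when $\nu_x=1$; take radius $\sqrt{n\nu_x(1+\delta)}$, or observe that achievability for $\nu_x\ge1$ follows from achievability for any $\nu_x'<1$ by monotonicity. Both fixes are minor.
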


\begin{remark}
As we will see, the direct part of Theorem \ref{thm:Ellipsoid} follows from an application of the achievability scheme for the two-encoder quadratic Gaussian source coding problem.  However, the converse result does not appear to be a similar consequence since the matrices $A_x,A_y$ describing the principal axes of the ellipsoids are allowed to depend on the source realizations.  %
Nonetheless, with Theorem \ref{thm:vector} at our disposal, the proof of the converse is fairly routine.  
\end{remark}
Since the primary focus of this paper is on the extremal inequality \eqref{introVec}, we defer the proof of Theorem \ref{thm:Ellipsoid} until Appendix \ref{app:ellipsoidProof}.  The remainder of this paper is divided into two parts: a treatment of the scalar version of \eqref{introVec} is given in Section \ref{sec:scalar}, and the  vector generalization  is considered in Section \ref{sec:vector}.  Closing remarks are provided in Section \ref{sec:conclusion}.

\section{Scalar Setting}\label{sec:scalar}
We begin the journey toward our main result by studying the scalar version of Theorem \ref{thm:vector}.  Most of our effort will carry over to the vector setting, but the notation in the scalar case is less cumbersome.  Therefore, for the remainder of this  section, we will assume that $X,Y$ are jointly Gaussian, each with unit variance and correlation $\rho$. Our main result in this section is the following rearrangement of \eqref{introVec}.
\begin{theorem}\label{thm:scalar}
Suppose $X,Y$ are jointly Gaussian, each with unit variance and correlation $\rho$.  Then, for any $U,V$ satisfying $U-X-Y-V$, the following inequality holds:
\begin{align}
2^{-2I(Y;U)}  2^{-2I(X;V|U)} \geq (1-\rho^2) + \rho^2 2^{-2I(X;U)} 2^{-2I(Y;V|U)}  .\label{main}
\end{align}
\end{theorem}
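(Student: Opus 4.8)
The plan is to pass to conditional entropy powers and then peel off the Gaussian channel $X\to Y$ one conditioning variable at a time. Write $Y=\rho X+Z$ with $Z\sim N(0,1-\rho^2)$ independent of $X$, and let $N(\,\cdot\mid\cdot\,)=\frac{1}{2\pi e}2^{2h(\,\cdot\mid\cdot\,)}$ denote the conditional entropy power. Since $X$ and $Y$ are standard Gaussian, $2^{-2I(X;U)}=N(X\mid U)$, $2^{-2I(Y;U)}=N(Y\mid U)$, $2^{-2I(X;V\mid U)}=N(X\mid U,V)/N(X\mid U)$, and $2^{-2I(Y;V\mid U)}=N(Y\mid U,V)/N(Y\mid U)$, so \eqref{main} is purely a relation among conditional entropy powers. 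Because $U-X-Z$ and $U\perp Z$, the noise $Z$ remains independent of $X$ after conditioning on $U$, so the conditional entropy power inequality gives $N(Y\mid U)\ge\rho^2 N(X\mid U)+(1-\rho^2)$; this is exactly \eqref{main} when $V$ is degenerate, i.e., Oohama's inequality \cite{Oohama1997}.

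To bring $V$ into play, I would first rewrite \eqref{main} using the Markov identities $I(X;U)=I(Y;U)+I(X;U\mid Y)$ and $I(Y;V\mid U)=I(X;V\mid U)+I(Y;V\mid X)$ (the second using $(Y,V)\perp U\mid X$ to replace $I(Y;V\mid X,U)$ by $I(Y;V\mid X)$). Collecting the exponents, \eqref{main} becomes equivalent to
\[
2^{-2I(Y;U)}\,2^{-2I(X;V\mid U)}\Bigl(1-\rho^2\,2^{-2I(X;U\mid Y)}\,2^{-2I(Y;V\mid X)}\Bigr)\ \ge\ 1-\rho^2 .
\]
Now Oohama's inequality for the chain $U-X-Y$ reads $2^{-2I(Y;U)}\bigl(1-\rho^2\,2^{-2I(X;U\mid Y)}\bigr)\ge 1-\rho^2$, and — invoking the symmetry of the long Markov chain, i.e., applying Oohama to $V-Y-X$ — one gets $2^{-2I(X;V)}\bigl(1-\rho^2\,2^{-2I(Y;V\mid X)}\bigr)\ge 1-\rho^2$, together with the trivial monotonicity $2^{-2I(X;V\mid U)}\ge 2^{-2I(X;V)}$ (because $I(X;V\mid U)=I(X;V)-I(U;V)$).

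I expect the main obstacle to be that these two one-sided bounds \emph{cannot be combined by simply multiplying}. Writing $c=2^{-2I(X;U\mid Y)}$ and $d=2^{-2I(Y;V\mid X)}$, the elementary identity $(1-\rho^2)(1-\rho^2 cd)-(1-\rho^2 c)(1-\rho^2 d)=-\rho^2(1-c)(1-d)\le 0$ shows the product of the two Oohama bounds falls short of $1-\rho^2$. The structural reason is that conditioning on $U$ keeps the forward channel $Y=\rho X+Z$ in clean (independent-noise) form only until one further conditions on $V$, and symmetrically conditioning on $V$ preserves the backward channel $X=\rho Y+Z'$ ($Z'\perp Y$) only until one conditions on $U$; thus one can never run the entropy-power step in both directions at once, and a genuinely new estimate is needed. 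I would look for this in one of three forms: (i) an enhancement / worst-case-additive-noise reduction replacing $(U,V)$ with Gaussian test channels in a way that does not decrease the left-hand side nor increase the right-hand side of \eqref{main}; (ii) a factorization (doubling-trick) argument in the spirit of Geng--Nair showing that the extremal $(U,V)$ must be jointly Gaussian with $(X,Y)$, after which \eqref{main} reduces to a finite-dimensional computation; or (iii) an I-MMSE / single-crossing-type argument interpolating between the $V$-degenerate and $U$-degenerate endpoints along a parametrized family of channels. Pinning down the correct auxiliary construction (or interpolating family) is the crux; granting the scalar inequality, the vector statement of Theorem \ref{thm:vector} should then follow by running the same scheme with the matrix (conditional) entropy power inequality in place of the scalar one.
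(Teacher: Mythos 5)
Your analysis correctly identifies the obstruction (the elementary identity showing the product of the two one-sided Oohama bounds falls short by $\rho^2(1-c)(1-d)$), and you correctly guess that option (ii), a Geng--Nair style doubling argument, is the right route --- that is in fact what the paper does. But you have not carried it out, and the part you leave unproved is the entire content of the theorem. What you have actually established is only: (a) the $V$-degenerate special case (Oohama), and (b) that a naive product of Oohama bounds fails. The crux --- pinning down why an extremizer must have $X\mid U$ Gaussian and then how to close the gap --- is exactly the step you defer with ``I would look for this in one of three forms.''

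To close the gap, the paper passes to the Lagrangian dual $F(\lambda)=\inf_{U-X-Y-V}\{I(X;U)-\lambda I(Y;U)+I(Y;V\mid U)-\lambda I(X;V\mid U)\}$ and runs the doubling construction: take two i.i.d.\ copies, rotate to $\hat X_1=(X_1+X_2)/\sqrt2$, $\hat X_2=(X_1-X_2)/\sqrt2$ (similarly $\hat Y_i$), and set $\hat U=(U_1,U_2)$, $\hat V=(V_1,V_2)$. A careful conditional-mutual-information bookkeeping (Lemma~\ref{lem:candidateMinimizers}) shows the rotated, conditioned tuples remain minimizers, and a telescoping/averaging argument driven by a conserved quantity (Corollary~\ref{cor:Stability}) forces $I(\hat X_1;\hat X_2\mid \hat U,\hat Y_1,\hat Y_2)=0$ for some minimizer. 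The Ghurye--Olkin / Skitovich--Darmois characterization then yields that $X\mid\{U=u\}$ is Gaussian for a.e.\ $u$. With that Gaussian structure in hand, one applies Oohama's inequality \emph{conditionally on $U=u$} to the channel $X_u\to Y_u$ and integrates over $u$; the deficit you identified in the ``multiply two Oohama bounds'' approach disappears because the second Oohama application is nested inside the conditioning on $U$, not run independently alongside it. None of this nesting, nor the factorization argument that licenses it, appears in your proposal, so as written it is an informed sketch of the problem rather than a proof.
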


\subsection{Proof of Theorem \ref{thm:scalar}}
Instead of working directly with inequality \eqref{main}, it will be convenient to consider a dual form.  To this end, for $\lambda\geq 0$, define 
\begin{align}
F(\lambda)\triangleq \inf_{U,V: U-X-Y-V}\Big\{ I(X;U)-\lambda I(Y;U) + I(Y;V|U)-\lambda I(X;V|U)\Big\}. \label{funL}
\end{align}
The remainder of this section is devoted to characterizing the function $F(\lambda)$.  We remark that that the infimum in \eqref{funL} is attained for any $\lambda$.  The proof of this is routine, and deferred to Appendix \ref{app:InfimaObtained}.  The bulk of the work ahead is devoted to establishing the existence of valid minimizers $U,V$ for which $X|\{U = u\}$ is normal for almost every $u$.

To accomplish this, we now describe a simple construction that will be used throughout much of the sequel.  This construction was first introduced for proving extremal inequalities in \cite{GengNairPrivateMessages2014}.  Suppose $U,X,Y,V$ satisfy the Markov relationship $U-X-Y-V$, and consider two independent copies of $U,X,Y,V$, which will be denoted by the same variables with subscripts $1$ and $2$.  Define
\begin{align}
&\xh1 = \frac{X_1 + X_2}{\sqrt{2}} &\xh2 = \frac{X_1 - X_2}{\sqrt{2}}.
\end{align}
In a similar manner, define $\yh1, \yh2$.  Note that $(\xh1,\xh2,\yh1,\yh2)$ and $(X_1,X_2,Y_1,Y_2)$ are equal in distribution.  Let $g:\mathbb{R}^2\rightarrow \mathbb{R}$ be a one-to-one measurable transformation\footnote{Every uncountable Polish space is Borel isomorphic to $\mathbb{R}$ \cite{srivastava1998course}.}. Define $\uh = g(U_1,U_2)$ and $\vh = g(V_1,V_2)$.

\begin{lemma} \label{lem:candidateMinimizers}
If  $U,X,Y,V$ minimize the functional \eqref{funL}, and $\xh1,\xh2,\yh1,\yh2,\uh,\vh$ are constructed as  above, then 
\begin{enumerate}
\item For almost every $y$, $\uh,\xh1,\yh1,\vh$ conditioned on $\{\yh2=y\}$ is a valid minimizer of \eqref{funL}.
\item For almost every $y$,  $\uh,\xh2,\yh2,\vh$ conditioned on $\{\yh1=y\}$ is a valid minimizer of \eqref{funL}.
\end{enumerate}
\end{lemma}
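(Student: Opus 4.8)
The plan is to run the standard ``doubling'' argument of Geng--Nair type (\cite{GengNairPrivateMessages2014}): first show that the two conditioned configurations are \emph{admissible} candidates in the infimum \eqref{funL}, so that their objective values are bounded below by $F(\lambda)$; then, by tensorization and a chain-rule reorganization, bound their averaged objective values above by $F(\lambda)$; and finally invoke a ``pointwise bound plus equal average implies a.e.\ equality'' step.

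\textbf{Admissibility.} Since the two copies are independent and each satisfies $U_i-X_i-Y_i-V_i$, and since $\uh,\vh,(\xh1,\xh2),(\yh1,\yh2)$ are measurable bijections of $(U_1,U_2),(V_1,V_2),(X_1,X_2),(Y_1,Y_2)$ respectively, the ``tensorized'' Markov chain $\uh-(\xh1,\xh2)-(\yh1,\yh2)-\vh$ holds. Together with the rotation invariance of the i.i.d.\ Gaussian pair, which gives $(\xh1,\yh1)\perp(\xh2,\yh2)$ with each pair distributed as $(X,Y)$, the joint law factors as
\[
p(\uh,\xh1,\xh2,\yh1,\yh2,\vh)=p(\xh1,\yh1)\,p(\xh2,\yh2)\,p(\uh\mid \xh1,\xh2)\,p(\vh\mid \yh1,\yh2).
\]
Conditioning on $\{\yh2=y\}$ and integrating out $\xh2$ collapses this to $p(\uh,\xh1,\yh1,\vh\mid \yh2=y)=p(\xh1,\yh1)\,p(\uh\mid\xh1,\yh2=y)\,p(\vh\mid\yh1,\yh2=y)$, which is precisely the assertion that $\uh-\xh1-\yh1-\vh$ holds in the conditioned probability space; moreover $(\xh1,\yh1)\mid\{\yh2=y\}$ still has the prescribed Gaussian law because it is independent of $\yh2$. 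Hence $(\uh,\xh1,\yh1,\vh)$ conditioned on $\{\yh2=y\}$ is an admissible candidate in \eqref{funL} for a.e.\ $y$, and its objective value, call it $J_1(y)$, satisfies $J_1(y)\ge F(\lambda)$. The same computation with the two coordinates interchanged establishes the analogous bound for part~2.

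\textbf{Tensorization and sandwich.} Because the copies are independent and mutual information is invariant under the bijections above and under the Gaussian rotation, the objective of \eqref{funL} evaluated at the doubled configuration $(\uh,(\xh1,\xh2),(\yh1,\yh2),\vh)$ splits additively over the two copies, and since each copy is a minimizer it equals $2F(\lambda)$; call this quantity $T$. I would then expand $T$ by chain rules so as to expose the $\yh2$-conditioned objective $\EE_{\yh2}[J_1(\yh2)]$, writing $T=\EE_{\yh2}[J_1(\yh2)]+R$ and arguing $R\ge F(\lambda)$ --- either by identifying $R$ as the averaged objective of a second conditioned (hence admissible) configuration, or by a direct data-processing bound on $R$. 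Granting this, $2F(\lambda)=T\ge \EE_{\yh2}[J_1(\yh2)]+F(\lambda)\ge F(\lambda)+F(\lambda)$, which forces $\EE_{\yh2}[J_1(\yh2)]=F(\lambda)$; combined with $J_1(y)\ge F(\lambda)$ for a.e.\ $y$ this forces $J_1(y)=F(\lambda)$ for a.e.\ $y$, which is part~1. Part~2 follows the same way (equivalently, it is part~1 applied to the construction run with the roles of the two coordinates swapped).

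\textbf{The main obstacle.} The delicate point is the chain-rule accounting in the second step: one must convert the unconditional mutual informations in $T$ --- the terms $I(\xh1,\xh2;\uh)$ and $I(\yh1,\yh2;\vh\mid\uh)$ together with their $\lambda$-weighted partners --- into the $\yh2$-conditioned versions that appear in $J_1$, keeping careful track of the fact that the $+$ terms and the $-\lambda$ terms require inequalities in opposite directions and that the correct auxiliary variables must land in the conditioning. Two structural facts make this go through: the long Markov chain annihilates cross terms such as $I(\yh2;\uh\mid\xh1,\xh2)$, and the independence $(\xh1,\yh1)\perp(\xh2,\yh2)$ allows $\yh2$ (resp.\ $\yh1$) to be inserted into or deleted from the conditioning in the places where it is needed. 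It may be necessary to let the residual term $R$ carry extra conditioning (e.g.\ on $\xh1$ as well) and to recover the clean statement of part~2 separately via the swap observation above. The remaining ingredients --- attainment of the infimum (deferred in the paper to an appendix), invariance of mutual information under bijections, and the a.e.-equality argument --- are routine.
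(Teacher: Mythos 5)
Your high-level architecture matches the paper's: double, show the conditioned configurations are admissible (hence their objectives are $\ge F(\lambda)$), then show their average equals $2F(\lambda)$ and conclude a.e.\ equality. The admissibility argument you give is correct: $\uh$ is a function of $(\xh1,\xh2)$ alone, $\vh$ of $(\yh1,\yh2)$ alone, and the cross-pair independence $(\xh1,\yh1)\perp(\xh2,\yh2)$ lets you integrate out $\xh2$ while preserving both the joint Gaussian law of $(\xh1,\yh1)$ and the long Markov chain in the conditioned space. The ``a.e.\ equality from pointwise lower bound plus equal average'' closing step is also sound.

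However, the middle step is where the entire content of the lemma lives, and your proposal leaves it unresolved. The decomposition $T=\EE_{\yh2}[J_1(\yh2)]+R$ with $R$ equal to ``the averaged objective of a second conditioned configuration'' is not quite available: when you expand the tensorized functional by the chain rule, what you get is $\EE[J_1]+\EE[J_2]$ \emph{plus a collection of genuine cross-terms} such as $I(\xh1;\phi_1\mid\uh)$, $I(\yh2;\phi_2\mid\uh,\vh)$, $I(\xh1;\xh2\mid\uh)$, $I(\yh1;\yh2\mid\uh,\vh)$ (with $(\lambda\pm1)$- and $\lambda$-weights). These terms do not vanish by Markovity or independence alone; $\uh$ couples $\xh1,\xh2$ and $\vh$ couples $\yh1,\yh2$, so quantities like $I(\xh1;\xh2\mid\uh)$ are positive in general. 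The sandwich argument only yields that the aggregate of cross-terms is $\le 0$. What makes the paper's proof work is the specific \emph{asymmetric} choice of conditioning auxiliaries --- for assertion 1, $\phi_1=\yh2$ on the $U$-side and $\phi_2=\xh1$ on the $V$-side (swap the indices for assertion 2) --- under which all the cross-terms cancel algebraically except for a single quantity proportional to $(\lambda-1)\,I(\xh1;\yh2\mid\uh,\vh)$. Only then does the one-sided inequality force this to be exactly zero (for $\lambda>1$), upgrading the inequality to equality and delivering the claim. Your proposal gestures at this (``the correct auxiliary variables must land in the conditioning'') but neither selects them nor verifies the cancellation, and this is precisely the nonroutine step. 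As written, the residual $R$ is undefined, the bound $R\ge F(\lambda)$ is unproven, and the claim that cross-terms are ``annihilated by the long Markov chain'' is false for most of them. So the proposal has the right scaffolding but a genuine gap exactly where the paper's proof does its real work.
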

\begin{proof}
Let $\phi_1$ be such that $\xh1,\yh1$ are independent of $\phi_1$ and $\uh-\xh1-\yh1 - \vh$ conditioned on $\phi_1$. Valid assignments of $\phi_1$ include any nonempty subset of $\xh2, \yh2$.  Let $\phi_2$ be defined similarly. 

Now, observe that we can write:
\begin{align}
2 I(X;U) &= I(\xh1,\xh2;\uh)  = I(\xh1;\uh) + I(\xh2;\uh | \xh1) \\
&= I(\xh1;\uh) + I(\xh2;\uh , \xh1) \\
&=I(\xh1;\uh) + I(\xh2;\uh) + I(\xh2; \xh1|\uh) \\
&=I(\xh1;\uh|\phi_1) + I(\xh2;\uh|\phi_2) - I(\xh1;\phi_1|\uh)  - I(\xh2;\phi_2|\uh) + I(\xh1; \xh2|\uh).
\end{align}
Similarly,
\begin{align}
2 I(Y;U) 
&=I(\yh1;\uh|\phi_1) + I(\yh2;\uh|\phi_2) - I(\yh1;\phi_1|\uh)  - I(\yh2;\phi_2|\uh) + I(\yh1; \yh2|\uh)\\
\end{align}
Also, we have
\begin{align}
2I(Y;V|U) &= I(\yh1,\yh2;\vh|\uh) =  I(\yh1;\vh|\uh) +  I(\yh2;\vh |\uh)  -I(\yh2;\yh1|\uh) + I(\yh2;\yh1|\uh,\vh) \\
&= I(\yh1; \phi_1,\vh|\uh) +  I(\yh2;\phi_2,\vh |\uh) - I(\yh1;\phi_1|\uh,\vh)  - I(\yh2;\phi_2|\uh,\vh) \\
&\quad-I(\yh2;\yh1|\uh) + I(\yh2;\yh1|\uh,\vh) \notag\\
&= I(\yh1; \vh|\uh,\phi_1) +  I(\yh2;\vh |\uh,\phi_2) \\
&\quad- I(\yh1;\phi_1|\uh,\vh)  - I(\yh2;\phi_2|\uh,\vh) + I(\yh2;\yh1|\uh,\vh) \notag\\
&\quad  + I(\yh1; \phi_1|\uh) + I(\yh2; \phi_2|\uh) -I(\yh1;\yh2|\uh).\notag
\end{align}
And, similarly,
\begin{align}
2I(X;V|U) &= I(\xh1; \vh|\uh,\phi_1) +  I(\xh2;\vh |\uh,\phi_2) \\
&\quad- I(\xh1;\phi_1|\uh,\vh)  - I(\xh2;\phi_2|\uh,\vh) + I(\xh2;\xh1|\uh,\vh) \notag\\
&\quad  + I(\xh1; \phi_1|\uh) + I(\xh2; \phi_2|\uh) -I(\xh1;\xh2|\uh).\notag
\end{align}

Assume $U,V$ minimize the functional \eqref{funL} subject to the Markov constraint $U-X-Y-V$, the existence of such $U,V$ was established in Lemma \ref{lem:InfAttained}.  Then, combining above, we have
\begin{align}
2F(\lambda) =&  I(\xh1;\uh|\phi_1) + I(\xh2;\uh|\phi_2) - I(\xh1;\phi_1|\uh)  - I(\xh2;\phi_2|\uh) + I(\xh1; \xh2|\uh)\\
& -\lambda\left( I(\yh1;\uh|\phi_1) + I(\yh2;\uh|\phi_2) - I(\yh1;\phi_1|\uh)  - I(\yh2;\phi_2|\uh) + I(\yh1; \yh2|\uh) \right) \notag\\
& + I(\yh1; \vh|\uh,\phi_1) +  I(\yh2;\vh |\uh,\phi_2) \notag\\
&\quad- I(\yh1;\phi_1|\uh,\vh)  - I(\yh2;\phi_2|\uh,\vh) + I(\yh2;\yh1|\uh,\vh) \notag\\
&\quad  + I(\yh1; \phi_1|\uh) + I(\yh2; \phi_2|\uh) -I(\yh1;\yh2|\uh)\notag\\
& -\lambda \Big( I(\xh1; \vh|\uh,\phi_1) +  I(\xh2;\vh |\uh,\phi_2) \notag\\
&\quad- I(\xh1;\phi_1|\uh,\vh)  - I(\xh2;\phi_2|\uh,\vh) + I(\xh2;\xh1|\uh,\vh) \notag\\
&\quad  + I(\xh1; \phi_1|\uh) + I(\xh2; \phi_2|\uh) -I(\xh1;\xh2|\uh)\Big) \notag\\
 =& I(\xh1;\uh|\phi_1) - \lambda I(\yh1;\uh|\phi_1) + I(\yh1;\vh |\uh,\phi_1)  - \lambda  I(\xh1;\vh |\uh,\phi_1) \\
& + I(\xh2;\uh|\phi_2)  -\lambda I(\yh2;\uh|\phi_2) + I(\yh2;\vh |\uh,\phi_2) - \lambda I(\xh2;\vh |\uh,\phi_2)\notag\\
& -(\lambda+1) I(\xh1;\phi_1|\uh)  -(\lambda+1) I(\xh2;\phi_2|\uh) + (\lambda+1) I(\xh1; \xh2|\uh) \notag\\
& +(\lambda+1) I(\yh1;\phi_1|\uh)  +(\lambda+1) I(\yh2;\phi_2|\uh) - (\lambda+1) I(\yh1; \yh2|\uh) \notag\\
& - I(\yh1;\phi_1|\uh,\vh)  - I(\yh2;\phi_2|\uh,\vh) + I(\yh2;\yh1|\uh,\vh) \notag\\
& +\lambda I(\xh1;\phi_1|\uh,\vh)  +\lambda I(\xh2;\phi_2|\uh,\vh) - \lambda I(\xh2;\xh1|\uh,\vh) \notag\\
 \geq& 2 F(\lambda) \\
& -(\lambda+1) I(\xh1;\phi_1|\uh)  -(\lambda+1) I(\xh2;\phi_2|\uh) + (\lambda+1) I(\xh1; \xh2|\uh) \notag\\
& +(\lambda+1) I(\yh1;\phi_1|\uh)  +(\lambda+1) I(\yh2;\phi_2|\uh) - (\lambda+1) I(\yh1; \yh2|\uh) \notag\\
& - I(\yh1;\phi_1|\uh,\vh)  - I(\yh2;\phi_2|\uh,\vh) + I(\yh2;\yh1|\uh,\vh) \notag\\
& +\lambda I(\xh1;\phi_1|\uh,\vh)  +\lambda I(\xh2;\phi_2|\uh,\vh) - \lambda I(\xh2;\xh1|\uh,\vh) .\notag
\end{align}
The last inequality follows since $\uh-\xh1-\yh1 - \vh$ conditioned on $\phi_1$ is a candidate minimizer of the functional, and same for $\uh-\xh2-\yh2 - \vh$ conditioned on $\phi_2$.  Hence, we can conclude that the following must hold
\begin{align}
& (\lambda+1) I(\xh1;\phi_1|\uh)  +(\lambda+1) I(\xh2;\phi_2|\uh) - (\lambda+1) I(\xh1; \xh2|\uh) \notag\\
& + I(\yh1;\phi_1|\uh,\vh)  + I(\yh2;\phi_2|\uh,\vh) - I(\yh2;\yh1|\uh,\vh) \notag\\
\geq&(\lambda+1) I(\yh1;\phi_1|\uh)  +(\lambda+1) I(\yh2;\phi_2|\uh) - (\lambda+1) I(\yh1; \yh2|\uh) \label{mainIneq}\\
& +\lambda I(\xh1;\phi_1|\uh,\vh)  +\lambda I(\xh2;\phi_2|\uh,\vh) - \lambda I(\xh2;\xh1|\uh,\vh) .\notag
\end{align}

Now, set $\phi_1 = \xh2, \phi_2 = \yh1$.  The LHS of \eqref{mainIneq} is given by
\begin{align}
& (\lambda+1) I(\xh1;\phi_1|\uh)  +(\lambda+1) I(\xh2;\phi_2|\uh) - (\lambda+1) I(\xh1; \xh2|\uh) \notag\\
& + I(\yh1;\phi_1|\uh,\vh)  + I(\yh2;\phi_2|\uh,\vh) - I(\yh2;\yh1|\uh,\vh) \notag\\
=&  (\lambda+1) I(\xh1;\xh2|\uh)  +(\lambda+1) I(\xh2;\yh1|\uh) - (\lambda+1) I(\xh1; \xh2|\uh) \\
& + I(\yh1;\xh2|\uh,\vh)  + I(\yh2;\yh1|\uh,\vh) - I(\yh2;\yh1|\uh,\vh) \notag\\
=& (\lambda+1) I(\yh1;\xh2|\uh)  + I(\yh1;\xh2|\uh,\vh)  .
\end{align}
Also, the RHS of \eqref{mainIneq} can be expressed as
\begin{align}
&(\lambda+1) I(\yh1;\phi_1|\uh)  +(\lambda+1) I(\yh2;\phi_2|\uh) - (\lambda+1) I(\yh1; \yh2|\uh) \notag\\
& +\lambda I(\xh1;\phi_1|\uh,\vh)  +\lambda I(\xh2;\phi_2|\uh,\vh) - \lambda I(\xh2;\xh1|\uh,\vh) \notag\\
= & (\lambda+1) I(\yh1;\xh2|\uh)  +(\lambda+1) I(\yh2;\yh1|\uh) - (\lambda+1) I(\yh1; \yh2|\uh) \\
& +\lambda I(\xh1;\xh2|\uh,\vh)  +\lambda I(\xh2;\yh1|\uh,\vh) - \lambda I(\xh2;\xh1|\uh,\vh) \notag\\
= & (\lambda+1) I(\yh1;\xh2|\uh)  +\lambda I(\yh1;\xh2|\uh,\vh). 
\end{align}
Substituting into \eqref{mainIneq}, we find that $(\lambda-1)  I(\yh1;\xh2|\uh,\vh) \leq 0 \Rightarrow  I(\yh1;\xh2|\uh,\vh)=0$.  Therefore, \eqref{mainIneq} is met with equality, and it  follows that:
\begin{align}
F(\lambda) & = I(\xh1;\uh|\xh2) - \lambda I(\yh1;\uh|\xh2) + I(\yh1;\vh|\uh,\xh2) - \lambda I(\xh1;\vh|\uh,\xh2) \\
&=  I(\xh2;\uh|\yh1) - \lambda I(\yh2;\uh|\yh1) + I(\yh2;\vh|\uh,\yh1) - \lambda I(\xh2;\vh|\uh,\yh1).
\end{align}
Since $\uh,\xh1,\yh1,\vh1$ conditioned on $\{\yh1=y\}$ is a candidate minimizer of \eqref{funL}, the second assertion of the claim follows.
By a symmetric argument, if we set $\phi_1 = \yh2, \phi_2 = \xh1$, the roles of the indices are reversed, and we find that 
\begin{align}
F(\lambda) & = I(\xh1;\uh|\yh2) - \lambda I(\yh1;\uh|\yh2) + I(\yh1;\vh|\uh,\yh2) - \lambda I(\xh1;\vh|\uh,\yh2) \\
&=  I(\xh2;\uh|\xh1) - \lambda I(\yh2;\uh|\xh1) + I(\yh2;\vh|\uh,\xh1) - \lambda I(\xh2;\vh|\uh,\xh1).
\end{align}
This establishes the first assertion of the claim and completes the proof.
\end{proof}

\begin{lemma} \label{lem:ConserveDerivative}
If $U_{\lambda},V_{\lambda}$ are valid minimizers of the functional \eqref{funL} for parameter $\lambda$, then
\begin{align}
I(Y;U_{\lambda}) + I(X;V_{\lambda}|U_{\lambda}) = -F'(\lambda)  ~~~\mbox{for a.e.\ $\lambda$.}
\end{align}
\end{lemma}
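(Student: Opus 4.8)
The plan is to recognize $F$ as a pointwise infimum of affine functions of $\lambda$ and then invoke the elementary fact that a finite concave function is differentiable almost everywhere with a unique supergradient at each point of differentiability. For each admissible pair $(U,V)$ with $U-X-Y-V$, introduce the affine function
\begin{align}
\ell_{U,V}(\lambda) \triangleq \big(I(X;U) + I(Y;V|U)\big) - \lambda\big(I(Y;U) + I(X;V|U)\big),
\end{align}
so that, by definition \eqref{funL}, $F(\lambda) = \inf_{U,V} \ell_{U,V}(\lambda)$, where the infimum ranges over all such pairs. A pointwise infimum of affine functions is concave; moreover $F$ is finite on $[0,\infty)$, since it is nonpositive (take $U,V$ degenerate, which gives $\ell_{U,V}\equiv 0$) and since the infimum is attained by a valid minimizer with a finite objective value for every $\lambda$ (Lemma \ref{lem:InfAttained} / Appendix \ref{app:InfimaObtained}). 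A finite concave function on an interval is differentiable off a countable set, so $F'(\lambda)$ exists for a.e.\ $\lambda$.

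Now fix such a $\lambda$ at which $F$ is differentiable, and let $(U_{\lambda},V_{\lambda})$ be \emph{any} valid minimizer. Then $F(\lambda) = \ell_{U_{\lambda},V_{\lambda}}(\lambda)$, while $F(\lambda') \leq \ell_{U_{\lambda},V_{\lambda}}(\lambda')$ for every $\lambda' \geq 0$. Hence the line $\lambda' \mapsto \ell_{U_{\lambda},V_{\lambda}}(\lambda')$ is a supporting line for the concave function $F$ at the abscissa $\lambda$, so its slope
\begin{align}
-\big(I(Y;U_{\lambda}) + I(X;V_{\lambda}|U_{\lambda})\big)
\end{align}
is a supergradient of $F$ at $\lambda$. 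Since at a point of differentiability the supergradient set of a concave function is the singleton $\{F'(\lambda)\}$, we conclude $-\big(I(Y;U_{\lambda}) + I(X;V_{\lambda}|U_{\lambda})\big) = F'(\lambda)$, i.e.\ $I(Y;U_{\lambda}) + I(X;V_{\lambda}|U_{\lambda}) = -F'(\lambda)$ for a.e.\ $\lambda$, as claimed. As a byproduct, the quantity $I(Y;U_{\lambda}) + I(X;V_{\lambda}|U_{\lambda})$ does not depend on which minimizer is chosen at such $\lambda$.

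The only genuinely delicate point is the finiteness/well-posedness bookkeeping underlying the first paragraph: one must make sure that the infimum defining $F$ may be restricted to pairs $(U,V)$ for which all four mutual informations are finite (so that $\ell_{U,V}$ is literally affine with finite coefficients and the concavity/envelope argument applies verbatim), and that the attained minimum is finite. Both facts are exactly what the attainment result cited above provides, so once that is in hand the remainder of the argument is the standard envelope observation for concave functions and involves no further computation.
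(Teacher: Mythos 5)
Your proof is correct. Both arguments hinge on the same underlying observation that $F$ is concave (hence differentiable off a countable set), but the mechanism you use is cleaner: you explicitly recognize $F$ as a pointwise infimum of the affine maps $\ell_{U,V}$, note that $\ell_{U_\lambda,V_\lambda}$ is a supporting line for $F$ at $\lambda$, and invoke the fact that a concave function's supergradient set is a singleton $\{F'(\lambda)\}$ at every differentiability point. The paper instead works with two-sided difference quotients: it picks minimizers at $\lambda\pm\Delta$, shows $b(\lambda+\Delta) \ge -F'(\lambda) \ge b(\lambda-\Delta)$ (where $b(\lambda)$ denotes the slope quantity), and then passes to the limit using monotonicity and almost-everywhere continuity of $F'$. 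Your route avoids that limiting argument entirely and gives the equality pointwise at every differentiability point (not just through an a.e.\ limit), which is a modest strengthening. The one caveat is the finiteness bookkeeping you flag at the end: Lemma \ref{lem:InfAttained} asserts attainment but does not literally state that the four individual mutual informations are finite for a minimizer. That does follow — e.g.\ from the rearrangement $F(\lambda)+2\lambda I(X;Y) = I(X;U)+\lambda I(X;Y|U)+I(Y;V)+\lambda I(X;Y|V)+(\lambda-1)I(U;V)$ used in Appendix \ref{app:InfimaObtained}, which for $\lambda>1$ is a nonnegative sum bounded by the finite quantity $F(\lambda)+2\lambda I(X;Y)$, so each summand is finite, and the remaining terms are controlled by data processing; the case $\lambda\le 1$ is trivial since $F(\lambda)=0$ with degenerate optimizers. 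It would be worth adding a sentence of this kind rather than attributing the finiteness directly to the attainment lemma, but this is a presentational point and not a gap in the argument.
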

\begin{proof}
To begin, let  $U_{\lambda+\Delta},V_{\lambda+\Delta}$ be arbitrary, valid minimizers of the functional \eqref{funL} for parameter $\lambda+\Delta$, and let  $U_{\lambda-\Delta},V_{\lambda-\Delta}$ be arbitrary, valid minimizers of the functional  \eqref{funL} for parameter $\lambda-\Delta$.  Next, note that $F(\lambda)$ is concave and (strictly) monotone  decreasing in $\lambda$, and hence $F'(\lambda)$ exists for a.e. $\lambda$.  Thus, for any $\Delta > 0$, 
\begin{align}
\frac{F(\lambda+\Delta)-F(\lambda)}{\Delta} =& \frac{1}{\Delta} \Big(I(X;U_{\lambda+ \Delta}) + I(Y;V_{\lambda+ \Delta}|U_{\lambda+ \Delta}) - (\lambda+\Delta) \left( I(Y;U_{\lambda+ \Delta}) + I(X;V_{\lambda+ \Delta}|U_{\lambda+ \Delta})  \right) \Big)\notag\\
&  - \frac{1}{\Delta} \Big(I(X;U_{\lambda}) + I(Y;V_{\lambda}|U_{\lambda}) - \lambda \left( I(Y;U_{\lambda}) + I(X;V_{\lambda}|U_{\lambda})  \right) \Big) \notag\\
=& -\Big( I(Y;U_{\lambda+ \Delta}) + I(X;V_{\lambda+ \Delta}|U_{\lambda+ \Delta})  \Big) \\
& + \frac{1}{\Delta} \Big(I(X;U_{\lambda+ \Delta}) + I(Y;V_{\lambda+ \Delta}|U_{\lambda+ \Delta}) - \lambda \left( I(Y;U_{\lambda+ \Delta}) + I(X;V_{\lambda+ \Delta}|U_{\lambda+ \Delta})  \right) \Big)\notag\\
&  - \frac{1}{\Delta} \Big(I(X;U_{\lambda}) + I(Y;V_{\lambda}|U_{\lambda}) - \lambda \left( I(Y;U_{\lambda}) + I(X;V_{\lambda}|U_{\lambda})  \right) \Big) \notag\\
\geq& -\Big( I(Y;U_{\lambda+ \Delta}) + I(X;V_{\lambda+ \Delta}|U_{\lambda+ \Delta})  \Big),
\end{align}
 where the last inequality follows since $U_{\lambda + \Delta},V_{\lambda + \Delta}$ is a candidate minimizer of \eqref{funL} with parameter $\lambda$.
 
Similarly, 
\begin{align}
\frac{F(\lambda)-F(\lambda-\Delta)}{\Delta} =&  \frac{1}{\Delta} \Big(I(X;U_{\lambda}) + I(Y;V_{\lambda}|U_{\lambda}) - \lambda \left( I(Y;U_{\lambda}) + I(X;V_{\lambda}|U_{\lambda})  \right) \Big)\notag\\
&  - \frac{1}{\Delta} \Big(I(X;U_{\lambda-\Delta}) + I(Y;V_{\lambda-\Delta}|U_{\lambda-\Delta}) - (\lambda-\Delta) \left( I(Y;U_{\lambda-\Delta}) + I(X;V_{\lambda-\Delta}|U_{\lambda-\Delta})  \right) \Big)\notag\\
=& -\Big( I(Y;U_{\lambda- \Delta}) - I(X;V_{\lambda+ \Delta}|U_{\lambda+ \Delta})  \Big) \\
& + \frac{1}{\Delta} \Big(I(X;U_{\lambda}) + I(Y;V_{\lambda}|U_{\lambda}) - \lambda \left( I(Y;U_{\lambda}) + I(X;V_{\lambda}|U_{\lambda})  \right) \Big)\notag\\
&  - \frac{1}{\Delta} \Big(I(X;U_{\lambda-\Delta}) + I(Y;V_{\lambda-\Delta}|U_{\lambda-\Delta}) - \lambda \left( I(Y;U_{\lambda-\Delta}) + I(X;V_{\lambda-\Delta}|U_{\lambda-\Delta})  \right) \Big)\notag\\
\leq& -\Big( I(Y;U_{\lambda- \Delta}) + I(X;V_{\lambda- \Delta}|U_{\lambda- \Delta})  \Big),
\end{align}
 where the last inequality follows since $U_{\lambda - \Delta},V_{\lambda - \Delta}$ is a candidate minimizer of \eqref{funL} with parameter $\lambda$.  Recalling concavity of $F(\lambda)$, we have shown
\begin{align}
I(Y;U_{\lambda+ \Delta}) + I(X;V_{\lambda+ \Delta}|U_{\lambda+ \Delta})   \geq -F'(\lambda) \geq I(Y;U_{\lambda- \Delta}) + I(X;V_{\lambda- \Delta}|U_{\lambda- \Delta}).%
\end{align}
As $F'$ is monotone and well-defined up to a set of measure zero, we are justified in writing
\begin{align}
- \lim_{z\rightarrow \lambda^+}F'(z)\geq I(Y;U_{\lambda}) + I(X;V_{\lambda}|U_{\lambda}) \geq - \lim_{z\rightarrow \lambda^-}F'(z).
\end{align}
Since $F'$ is monotone, it is almost everywhere continuous, and so the LHS and RHS above coincide with $-F'(\lambda)$ for almost every $\lambda$.
 \end{proof}

 Since the derivative $F'(\lambda)$ is just a function of $F$ itself, and not of a particular minimizer, we have the following
\begin{corollary}\label{cor:Stability}
If $U_{\lambda},V_{\lambda}$ are valid minimizers of the functional \eqref{funL}  for parameter $\lambda$, then 
\begin{align}
I(X;{U}_{\lambda}|Y) + I(Y;{V}_{\lambda}|X) = F(\lambda)-(\lambda-1)F'(\lambda)  \mbox{~~for a.e. $\lambda$}.
\end{align}
\end{corollary}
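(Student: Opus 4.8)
The plan is to reduce the claimed identity to Lemma~\ref{lem:ConserveDerivative} and the definition of $F(\lambda)$ in \eqref{funL} by means of a few elementary Markov-chain manipulations. Throughout, fix a valid minimizer $U_\lambda,V_\lambda$ of \eqref{funL} and suppress the subscript $\lambda$. First I would record the structural consequences of the long chain $U-X-Y-V$: marginalizing out one variable at a time shows that each of $U-X-Y$, $U-Y-V$, $U-X-V$, and $X-Y-V$ is itself a Markov chain, so that $I(U;Y|X)=I(U;V|Y)=I(U;V|X)=I(V;X|Y)=0$.

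With these in hand, each term on the left-hand side can be rewritten in the ``language'' of the functional \eqref{funL}. Expanding $I(X,Y;U)$ by the chain rule in both orders and using $I(Y;U|X)=0$ gives $I(X;U|Y)=I(X;U)-I(Y;U)$. Expanding $I(Y,U;V)$, $I(X,U;V)$, and $I(X,Y;V)$ similarly, and using $I(U;V|Y)=I(U;V|X)=I(V;X|Y)=0$, gives $I(Y;V|U)=I(Y;V)-I(U;V)$, $I(X;V|U)=I(X;V)-I(U;V)$, and $I(Y;V|X)=I(Y;V)-I(X;V)$; subtracting the first two of these from the third yields $I(Y;V|X)=I(Y;V|U)-I(X;V|U)$. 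Adding the two reductions,
\[
I(X;U|Y)+I(Y;V|X)=\big(I(X;U)-I(Y;U)\big)+\big(I(Y;V|U)-I(X;V|U)\big).
\]

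Finally I would compare with $F(\lambda)=I(X;U)-\lambda I(Y;U)+I(Y;V|U)-\lambda I(X;V|U)$: the right-hand side above equals $F(\lambda)+(\lambda-1)\big(I(Y;U)+I(X;V|U)\big)$, and by Lemma~\ref{lem:ConserveDerivative} we have $I(Y;U)+I(X;V|U)=-F'(\lambda)$ for almost every $\lambda$ (the point being, as noted just before the statement, that $F'$ depends only on $F$ and not on the particular minimizer, so the identity applies to the minimizer we began with). Substituting gives $I(X;U|Y)+I(Y;V|X)=F(\lambda)-(\lambda-1)F'(\lambda)$ for a.e.\ $\lambda$, as claimed. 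There is essentially no obstacle here: the entire argument is bookkeeping with Markov-chain information identities followed by a single substitution, and the only step deserving a moment's care is the verification that the sub-chains of $U-X-Y-V$ are genuinely Markov, which is immediate by marginalization.
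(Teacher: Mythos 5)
Your proof is correct and takes essentially the same approach as the paper: both rewrite $F(\lambda)$ in terms of $I(X;U_\lambda|Y)+I(Y;V_\lambda|X)$ using the identities $I(X;U|Y)=I(X;U)-I(Y;U)$, $I(Y;V|U)=I(Y;V)-I(U;V)$, $I(X;V|U)=I(X;V)-I(U;V)$, and $I(Y;V|X)=I(Y;V)-I(X;V)$ that follow from the long Markov chain, and then substitute $I(Y;U_\lambda)+I(X;V_\lambda|U_\lambda)=-F'(\lambda)$ from Lemma~\ref{lem:ConserveDerivative}. The only difference is cosmetic (you build up the left-hand side while the paper transforms $F(\lambda)$ downward), so there is nothing to add.
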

\begin{proof}
Suppose $U_{\lambda},V_{\lambda}$ are valid minimizers.  Then, we can write:
\begin{align}
F(\lambda) &= I(X;U_{\lambda}) - \lambda I(Y;U_{\lambda}) + I(Y;V_{\lambda}|U_{\lambda})-\lambda I(X;V_{\lambda}|U_{\lambda})\\
&= I(X;U_{\lambda}) - \lambda I(Y;U_{\lambda}) + I(Y;V_{\lambda})-\lambda I(X;V_{\lambda}) + (\lambda-1)I(U_{\lambda};V_{\lambda})\\
&= I(X;U_{\lambda}|Y) + I(Y;V_{\lambda}|X)  +(\lambda-1)\Big(I(U_{\lambda};V_{\lambda}) - I(Y;U_{\lambda}) - I(X;V_{\lambda})\Big)\\
&=I(X;U_{\lambda}|Y) + I(Y;V_{\lambda}|X)  +(\lambda-1)F'(\lambda),
\end{align}
where the last line follows from Lemma \ref{lem:ConserveDerivative}.
\end{proof}

\begin{lemma} \label{lem:inductionLemma}
If $U,V$ are valid minimizers of the functional \eqref{funL}, and $\uh,\xh1,\yh1,\xh2,\yh2,\vh$ are constructed as described above, then there exist valid minimizers $\widetilde{U},\widetilde{V}$ such that
\begin{align}
I(X;U|Y) \geq I(X;\widetilde{U}|Y) + \frac{1}{2} I(\xh1;\xh2 | \uh, \yh1,\yh2).
\end{align}
\end{lemma}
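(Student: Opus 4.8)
The plan is to pass to the doubled‑and‑rotated variables $\xh1,\xh2,\yh1,\yh2,\uh,\vh$, rewrite $I(X;U|Y)$ via a chain‑rule identity that exposes the term $I(\xh1;\xh2\mid\uh,\yh1,\yh2)$, and then recognize the leftover quantities as averages of ``$I(X;U|Y)$'' over the two families of valid minimizers furnished by Lemma~\ref{lem:candidateMinimizers}.

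First I would record two elementary facts. (i) The pairs $(\xh1,\yh1)$ and $(\xh2,\yh2)$ are independent: they are jointly Gaussian and the relevant cross‑covariances vanish (e.g.\ $\mathrm{Cov}(\xh1,\xh2)=\tfrac12(\mathrm{Var}\,X_1-\mathrm{Var}\,X_2)=0$, and similarly for $\mathrm{Cov}(\xh1,\yh2)$, $\mathrm{Cov}(\yh1,\xh2)$, $\mathrm{Cov}(\yh1,\yh2)$); in particular $I(\xh1;\xh2\mid\yh1,\yh2)=0$, and $(\xh1,\yh1)$ retains its unit‑variance, correlation‑$\rho$ Gaussian law when conditioned on any value of $\yh2$ (and symmetrically). (ii) Mutual information is unchanged under the orthogonal bijection $(X_1,X_2,Y_1,Y_2)\leftrightarrow(\xh1,\xh2,\yh1,\yh2)$ and under $(U_1,U_2)\leftrightarrow\uh$. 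Using (ii) together with the independence of the two copies,
\[
2I(X;U|Y)=I(X_1,X_2;U_1,U_2\mid Y_1,Y_2)=I(\xh1,\xh2;\uh\mid\yh1,\yh2),
\]
and then expanding by the chain rule and using $I(\xh1;\xh2\mid\yh1,\yh2)=0$ gives
\[
I(\xh1,\xh2;\uh\mid\yh1,\yh2)=I(\xh1;\uh\mid\yh1,\yh2)+I(\xh2;\uh\mid\yh1,\yh2)+I(\xh1;\xh2\mid\uh,\yh1,\yh2),
\]
so that
\[
I(X;U|Y)-\tfrac12 I(\xh1;\xh2\mid\uh,\yh1,\yh2)=\tfrac12\Big(I(\xh1;\uh\mid\yh1,\yh2)+I(\xh2;\uh\mid\yh1,\yh2)\Big).
\]

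Next I would interpret the two terms on the right. Writing $I(\xh1;\uh\mid\yh1,\yh2)=\EE_{\yh2}\big[I(\xh1;\uh\mid\yh1,\yh2=y)\big]$ and invoking the first assertion of Lemma~\ref{lem:candidateMinimizers}, for a.e.\ $y$ the tuple $(\uh,\xh1,\yh1,\vh)$ conditioned on $\{\yh2=y\}$ is a valid minimizer of \eqref{funL}, and $I(\xh1;\uh\mid\yh1,\yh2=y)$ is precisely its value of ``$I(X;U|Y)$''. Hence some $y^\ast$ (in the full‑measure set where the lemma applies) satisfies $I(\xh1;\uh\mid\yh1,\yh2=y^\ast)\le I(\xh1;\uh\mid\yh1,\yh2)$; call the corresponding valid minimizer $(U_A,V_A)$. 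Symmetrically, $I(\xh2;\uh\mid\yh1,\yh2)=\EE_{\yh1}\big[I(\xh2;\uh\mid\yh2,\yh1=z)\big]$, and the second assertion of Lemma~\ref{lem:candidateMinimizers} yields a value $z^\ast$ and a valid minimizer $(U_B,V_B)$ with $I(X;U_B|Y)\le I(\xh2;\uh\mid\yh1,\yh2)$. Letting $(\widetilde U,\widetilde V)$ be whichever of $(U_A,V_A),(U_B,V_B)$ has the smaller value of ``$I(X;\,\cdot\,|Y)$'',
\[
\tfrac12\Big(I(\xh1;\uh\mid\yh1,\yh2)+I(\xh2;\uh\mid\yh1,\yh2)\Big)\ \ge\ \tfrac12\big(I(X;U_A|Y)+I(X;U_B|Y)\big)\ \ge\ I(X;\widetilde U|Y),
\]
and combining with the identity above gives the claimed inequality.

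The one point requiring care — and the reason the correction term is genuinely needed — is that one cannot simply take $\widetilde U=(\uh,\yh2)$ with $\widetilde V=\vh$: since $\vh$ is not conditionally independent of $\yh2$ given $\yh1$, the chain $(\uh,\yh2)-\xh1-\yh1-\vh$ fails, so ``appending the time‑sharing variable to $U$'' is unavailable. The fix is instead to freeze a well‑chosen deterministic value of the time‑sharing variable, and — crucially — to do this for \emph{both} decompositions in Lemma~\ref{lem:candidateMinimizers} (the family $(\uh,\xh1,\yh1,\vh)\mid\{\yh2=y\}$ and the family $(\uh,\xh2,\yh2,\vh)\mid\{\yh1=z\}$): a single conditioning absorbs only one of the two nonnegative terms $I(\xh1;\uh\mid\yh1,\yh2)$ and $I(\xh2;\uh\mid\yh1,\yh2)$, and the factor $\tfrac12$ on the left of the target inequality is exactly what makes passing to the smaller of the two resulting valid minimizers sufficient.
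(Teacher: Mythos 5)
Your proposal is correct and follows essentially the same route as the paper: it rests on the identity $2I(X;U|Y)=I(\xh1;\uh|\yh1,\yh2)+I(\xh2;\uh|\yh1,\yh2)+I(\xh1;\xh2|\uh,\yh1,\yh2)$ and then invokes Lemma~\ref{lem:candidateMinimizers} to freeze a good value of the conditioning variable. The only cosmetic difference is that the paper relabels indices $1\leftrightarrow 2$ without loss of generality and then applies one assertion of Lemma~\ref{lem:candidateMinimizers}, whereas you apply both assertions and take the better of the two resulting minimizers; these are the same idea in slightly different dress.
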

\begin{proof}
To begin, note that:
\begin{align}
I(\xh1;\xh2|\uh,\yh1,\yh2) &= I(\xh1;\xh2,\uh|\yh1,\yh2) - I(\xh1;\uh|\yh1,\yh2)\\
&=I(\xh1;\uh|\yh1,\yh2,\xh2) - I(\xh1;\uh|\yh1,\yh2)\\
&=I(\xh1,\xh2;\uh|\yh1,\yh2) - I(\xh1;\uh|\yh1,\yh2) - I(\xh2;\uh|\yh1,\yh2) \\
&=2 I(X;U|Y) - I(\xh1;\uh|\yh1,\yh2) - I(\xh2;\uh|\yh1,\yh2). 
\end{align}
Thus, without loss of generality (relabeling indices 1 and 2 if necessary), we can assume 
\begin{align}
I(X;U|Y) \geq I(\xh1;\uh|\yh1,\yh2) + \frac{1}{2}I(\xh1;\xh2|\uh,\yh1,\yh2).%
\end{align}
Lemma \ref{lem:candidateMinimizers} asserts that, for almost every $y$, the tuple $\uh,\xh1,\yh1,\vh$ conditioned on $\{\yh2=y\}$ is a valid minimizer of \eqref{funL}.  Hence, there must exist a $y^*$ such that
\begin{align}
I(X;U|Y) \geq I(\xh1;\uh|\yh1,\yh2=y^*) + \frac{1}{2}I(\xh1;\xh2|\uh,\yh1,\yh2),
\end{align}
and  $\uh,\xh1,\yh1,\vh$ conditioned on $\{\yh2=y^*\}$ is a valid minimizer of \eqref{funL}.
Therefore, the claim follows by letting $\widetilde{U},X,Y,\widetilde{V}$ be equal in distribution to $\uh,\xh1,\yh1,\vh$ conditioned on $\{\yh2=y^*\}$.
\end{proof}

\begin{corollary} \label{cor:MIEqZero}
There exist $U,V$ which are valid minimizers of the functional \eqref{funL}, and satisfy 
\begin{align}
I(\xh1;\xh2 | \uh, \yh1,\yh2)=0,
\end{align}
where $\uh,\xh1,\yh1,\xh2,\yh2,\vh$ are constructed as described above.
\end{corollary}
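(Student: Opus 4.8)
The plan is to single out, among all valid minimizers of \eqref{funL}, one that makes $I(X;U|Y)$ as small as possible, and then to feed it into Lemma \ref{lem:inductionLemma}: that lemma produces a new valid minimizer with a strictly smaller value of $I(X;\,\cdot\,|Y)$ unless the defect $I(\xh1;\xh2\mid\uh,\yh1,\yh2)$ already vanishes, so at the minimizing choice this defect must be zero.

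Concretely, fix a $\lambda$ for which Corollary \ref{cor:Stability} holds and set
\begin{align}
\kappa \triangleq \inf\, I(X;U|Y),
\end{align}
where the infimum ranges over all valid minimizers $U,V$ of \eqref{funL}. The set of such minimizers is nonempty by Lemma \ref{lem:InfAttained}, so $\kappa$ is well defined; it is bounded below by $0$; and it is finite because, by Corollary \ref{cor:Stability}, every valid minimizer satisfies $I(X;U|Y)\le F(\lambda)-(\lambda-1)F'(\lambda)<\infty$. The key step is to show that $\kappa$ is \emph{attained}. For this I would take valid minimizers $(U_m,V_m)$ with $I(X;U_m|Y)\to\kappa$ and re-run the tightness and weak-compactness argument of Appendix \ref{app:InfimaObtained} — now applied to the restricted problem of minimizing $I(X;U|Y)$ over the set of minimizers of \eqref{funL}, rather than to minimizing \eqref{funL} itself. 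Passing to a subsequence, the joint laws of $(X,Y,U_m,V_m)$ converge weakly to the law of some $(X,Y,U^\star,V^\star)$ with $U^\star-X-Y-V^\star$; one must check, exactly as in Appendix \ref{app:InfimaObtained}, that the value of \eqref{funL} is preserved in the limit (so that $(U^\star,V^\star)$ is again a valid minimizer) and that $I(X;U|Y)$ is lower semicontinuous along it (so that $I(X;U^\star|Y)\le\kappa$, hence $=\kappa$).

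Finally, apply the construction introduced just before Lemma \ref{lem:candidateMinimizers} to $(U^\star,V^\star)$, and then Lemma \ref{lem:inductionLemma}: this yields valid minimizers $\widetilde U,\widetilde V$ with
\begin{align}
\kappa \;=\; I(X;U^\star|Y)\;\ge\; I(X;\widetilde U|Y)+\tfrac12\,I(\xh1;\xh2\mid\uh,\yh1,\yh2).
\end{align}
Since $\widetilde U,\widetilde V$ are valid minimizers, $I(X;\widetilde U|Y)\ge\kappa$ by definition of $\kappa$, and therefore $I(\xh1;\xh2\mid\uh,\yh1,\yh2)\le 0$, i.e.\ $I(\xh1;\xh2\mid\uh,\yh1,\yh2)=0$. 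Taking $U=U^\star$ and $V=V^\star$ then establishes the corollary. The main obstacle is the attainment of $\kappa$, i.e.\ verifying that the compactness apparatus of Appendix \ref{app:InfimaObtained} transfers to the restricted minimization; the remainder is straightforward bookkeeping on top of Lemma \ref{lem:inductionLemma}. (One could instead iterate Lemma \ref{lem:inductionLemma} to obtain a sequence of valid minimizers whose defects are summable, hence tend to $0$, and then take a limit, but this relies on the same compactness input and I would favor the argument above.)
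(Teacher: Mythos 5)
Your proof is correct, and it takes a genuinely different route from the paper. The paper iterates Lemma \ref{lem:inductionLemma} to build a sequence $\{U^{(k)}\}$ of valid minimizers, uses the conservation identity of Corollary \ref{cor:Stability} to telescope and bound $\sum_k I(\xh1;\xh2|\uh^{(k)},\yh1,\yh2)$ by $2\bigl(F(\lambda)-(\lambda-1)F'(\lambda)\bigr)$, deduces that the defects tend to zero, and then passes to a weak limit of the $U^{(k)}$ via the Appendix~\ref{app:InfimaObtained} machinery (lower semicontinuity of mutual information then forces the defect of the limit to vanish). You instead minimize $I(X;U|Y)$ over the (weakly compact) set of valid minimizers of \eqref{funL}, argue the infimum $\kappa$ is attained by some $(U^\star,V^\star)$, and apply Lemma \ref{lem:inductionLemma} just once: minimality of $\kappa$ forces $\tfrac12 I(\xh1;\xh2|\uh,\yh1,\yh2)\le 0$. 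You even anticipate the paper's route in your parenthetical remark at the end.

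Both arguments lean on the same weak-compactness infrastructure of Appendix~\ref{app:InfimaObtained}; neither is really ``cheaper'' in that respect. What your version buys is conceptual economy — a one-shot application of the induction lemma rather than an iteration plus a summability estimate — at the cost of having to verify that the restricted infimum $\kappa$ is attained. The latter does go through: tightness and closedness of the minimizer set are exactly as in Lemma~\ref{lem:InfAttained}, and lower semicontinuity of $U\mapsto I(X;U|Y)$ follows by writing $I(X;U|Y)=I(X;U,Y)-I(X;Y)$, where $I(X;Y)$ is fixed along the sequence and $I(X;U,Y)=D(P_{XUY}\,\|\,P_X P_{UY})$ is lower semicontinuous by Lemma~\ref{lem:DSemiCont}. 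One should also note, for either proof, that the weak convergence of the single-copy laws propagates to the two-copy/hatted laws because the rotation is continuous and $\uh=g(U_1,U_2)$ can be replaced by $(U_1,U_2)$ inside the mutual informations without changing their values. Finally, Corollary~\ref{cor:Stability} only provides the needed bound for a.e.\ $\lambda$, so (just as in the paper) the conclusion is obtained a.e.\ in $\lambda$ and extended by continuity of $F$ where it is used downstream.
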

\begin{proof}
Applying Lemma \ref{lem:inductionLemma}, we can inductively construct a sequence of valid minimizers $\{U^{(k)},X,Y,V^{(k)}\}_{k\geq 1}$ which satisfy 
\begin{align}
I(X;U^{(k)}|Y) \geq I(X;U^{(k+1)}|Y) + \frac{1}{2}I(\xh1;\xh2|\uh^{(k)},\yh1,\yh2) \mbox{~~for $k=1,2,\dots$,}
\end{align}
where $\uh^{(k)},\xh1,\yh1,\xh2,\yh2,\vh^{(k)} $ are constructed from two independent copies of $U^{(k)},X,Y,V^{(k)}$.
By Corollary \ref{cor:Stability}, we must also have
\begin{align}
I(X;U^{(k)}|Y) + I(Y;V^{(k)}|X) = F(\lambda)-(\lambda-1)F'(\lambda)
\end{align}
for all $k=1,2,\dots$. Therefore, for any $n$, we have:
\begin{align}
F(\lambda)-(\lambda-1)F'(\lambda) &= \frac{1}{n}\sum_{k=1}^n I(X;U^{(k)}|Y) + I(Y;V^{(k)}|X)\\
&\geq \frac{1}{n}\sum_{k=2}^n \Big( I(X;U^{(k)}|Y) + I(Y;V^{(k)}|X) \Big) + \frac{1}{n}\Big( I(X;U^{(n+1)}|Y) + I(Y;V^{(1)}|X)\Big) \\
&~~+ \frac{1}{2n}\sum_{k=1}^n I(\xh1;\xh2|\uh^{(k)},\yh1,\yh2)\notag\\
&\geq \frac{n-1}{n}\Big(F(\lambda)-(\lambda-1)F'(\lambda)  \Big) +  \frac{1}{2n}\sum_{k=1}^n I(\xh1;\xh2|\uh^{(k)},\yh1,\yh2),
\end{align}
and thus
\begin{align}
\sum_{k=1}^n I(\xh1;\xh2|\uh^{(k)},\yh1,\yh2) \leq 2\Big(F(\lambda)-(\lambda-1)F'(\lambda)  \Big). \label{sumConverges}
\end{align}
Hence, the sum on the LHS of \eqref{sumConverges} must converge as $n\rightarrow \infty$, implying
\begin{align}
\lim_{k\rightarrow \infty}   I(\xh1;\xh2|\uh^{(k)},\yh1,\yh2) = 0.
\end{align}
Arguing as in the proof of Lemma \ref{lem:InfAttained} in Appendix \ref{app:InfimaObtained}, we can conclude that there exists an optimizer $U,V$ for which $ I(\xh1;\xh2|\uh,\yh1,\yh2)$ is exactly zero.
\end{proof}

\begin{lemma}\cite{ghurye1962characterization}\label{SDcharacterization}
Let $\mathbf{A}_1$ and $\mathbf{A}_2$ be mutually independent $n$-dimensional random vectors.  If $\mathbf{A}_1+\mathbf{A}_2$ is independent of $\mathbf{A}_1-\mathbf{A}_2$, then  $\mathbf{A}_1$ and $\mathbf{A}_2$ are normally distributed.
\end{lemma}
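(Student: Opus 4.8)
The statement is the multivariate form of Bernstein's characterization of the Gaussian law (due to Ghurye and Olkin \cite{ghurye1962characterization}), and the natural route is through characteristic functions. Writing $\phi_j(t)=\mathbb{E}[e^{i\langle t,\mathbf{A}_j\rangle}]$ for $t\in\mathbb{R}^n$, $j=1,2$, independence of $\mathbf{A}_1$ and $\mathbf{A}_2$ gives $\mathbb{E}[e^{i\langle s,\mathbf{A}_1+\mathbf{A}_2\rangle+i\langle t,\mathbf{A}_1-\mathbf{A}_2\rangle}]=\phi_1(s+t)\phi_2(s-t)$, while independence of $\mathbf{A}_1+\mathbf{A}_2$ and $\mathbf{A}_1-\mathbf{A}_2$ shows this also equals $\phi_1(s)\phi_2(s)\phi_1(t)\phi_2(-t)$. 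So the first thing I would record is the functional equation
\begin{align}
\phi_1(s+t)\,\phi_2(s-t)=\phi_1(s)\phi_2(s)\,\phi_1(t)\phi_2(-t)\qquad\text{for all }s,t\in\mathbb{R}^n,\label{plan:fe}
\end{align}
together with the analogous identity obtained by swapping $\mathbf{A}_1\leftrightarrow\mathbf{A}_2$ (the hypothesis is symmetric in the two vectors).

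Next I would prove that $\phi_1$ and $\phi_2$ never vanish. Taking moduli in \eqref{plan:fe} and in its swapped counterpart, the continuous nonnegative functions $p=|\phi_1|^2$ and $q=|\phi_2|^2$ satisfy $p(s+t)q(s-t)=q(s+t)p(s-t)=p(s)q(s)p(t)q(t)$; the substitution $(a,b)=(s-t,s+t)$ turns the first equality into $p(b)q(a)=q(b)p(a)$, and $a=0$ forces $p\equiv q$, hence $p(s+t)p(s-t)=p(s)^2p(t)^2$. On a small ball around the origin (where $p>0$ by continuity) $\log p$ is then a continuous solution of the parallelogram identity, so $p=e^{-Q}$ there for some positive semidefinite quadratic form $Q$; the relation $p(2s)=p(s)^4=e^{-4Q(s)}=e^{-Q(2s)}$ bootstraps this identity to balls of every radius, giving $|\phi_1|=|\phi_2|=e^{-Q/2}>0$ everywhere (equivalently, one may invoke the theory of analytic characteristic functions).

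With non-vanishing established, $\phi_j=e^{\psi_j}$ for continuous $\psi_j$ with $\psi_j(0)=0$ (as $\mathbb{R}^n$ is simply connected), and \eqref{plan:fe} reads $\psi_1(s+t)+\psi_2(s-t)=\psi_1(s)+\psi_2(s)+\psi_1(t)+\psi_2(-t)$. Replacing $t$ by $-t$ and adding the two relations shows $h:=\psi_1+\psi_2$ obeys $h(s+t)+h(s-t)=2h(s)+h(t)+h(-t)$; decomposing $h$ into its even and odd parts in $s$, the even part solves the quadratic (parallelogram) functional equation and the odd part solves Cauchy's equation, so by continuity $h$ is a quadratic form plus a linear form. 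Subtracting instead shows $d:=\psi_1-\psi_2$ obeys $d(s+t)-d(s-t)=d(t)-d(-t)$, and the same even/odd split forces the even part of $d$ to vanish and the odd part to be additive, so $d$ is linear. Hence each of $\psi_1=\tfrac12(h+d)$ and $\psi_2=\tfrac12(h-d)$ is a quadratic form plus a linear form; using $\phi_j(-t)=\overline{\phi_j(t)}$ the quadratic part is real and the linear part purely imaginary, i.e. $\phi_j(t)=\exp\!\big(-\tfrac12 t^{\mathsf T}\Sigma_j t+i\mu_j^{\mathsf T}t\big)$ with $\Sigma_j\succeq 0$, so $\mathbf{A}_j\sim N(\mu_j,\Sigma_j)$ (a singular $\Sigma_j$ covers the degenerate case).

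I expect the non-vanishing step, and with it the passage from a local to a global identity, to be the main obstacle: until one knows the characteristic functions are zero-free there is no global logarithm, and every downstream manipulation depends on having one. Once that is in hand, the functional-equation bookkeeping that pins down $\psi_1$ and $\psi_2$, and the final recognition of Gaussian characteristic functions, are routine.
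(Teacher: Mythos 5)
The paper does not prove Lemma \ref{SDcharacterization} at all; it is an external citation to Ghurye and Olkin \cite{ghurye1962characterization}, quoted as a black box and used once (in Corollary \ref{existUV_XuGauss}) to deduce conditional Gaussianity from the vanishing of a conditional mutual information. So there is no internal proof to compare against. What you have written is the standard characteristic--function proof of this Bernstein/Darmois--Skitovich-type characterization, and it is essentially correct: the functional equation \eqref{plan:fe} is derived properly, the non-vanishing argument (reduce to $p\equiv q$, show $p(2s)=p(s)^4$, express $p=e^{-Q}$ on a small ball and bootstrap) is sound, the global logarithm is available because $\mathbb{R}^n$ is simply connected, the sum/difference decomposition pins down $\psi_1,\psi_2$ as quadratic-plus-linear, and the Hermitian symmetry $\phi_j(-t)=\overline{\phi_j(t)}$ together with $|\phi_j|\leq 1$ forces the Gaussian form.

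Two small remarks. First, the swapped counterpart of \eqref{plan:fe} is not actually needed to conclude $p\equiv q$: setting $s=-t$ in the original modulus identity $p(s+t)q(s-t)=p(s)q(s)p(t)q(t)$ already gives $q(2s)=p(s)^2q(s)^2$, while $s=t$ gives $p(2s)=p(s)^2q(s)^2$, so $p\equiv q$ directly. Second, after substituting $s\mapsto -s$ and adding/subtracting, the odd part of $h=\psi_1+\psi_2$ satisfies the Jensen equation $h_o(s+t)+h_o(s-t)=2h_o(s)$ rather than Cauchy's equation; with continuity and $h_o(0)=0$ this still yields linearity, so the conclusion is unaffected, but the labeling should be corrected. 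Neither point affects the validity of the argument.
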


\begin{corollary} \label{existUV_XuGauss}
There exist optimizers $U,V$ such that $X|\{U = u\}$ is Gaussian for a.e. $u$.
\end{corollary}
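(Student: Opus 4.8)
The plan is to invoke Corollary \ref{cor:MIEqZero} to obtain valid minimizers $U,V$ with $I(\xh1;\xh2\mid\uh,\yh1,\yh2)=0$, and then to derive Gaussianity of $X$ given $\{U=u\}$ in two stages: first that $X$ conditioned on $\{U=u,Y=y\}$ is Gaussian for a.e.\ $(u,y)$, and then a deconvolution step that removes the conditioning on $Y$.

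For the first stage: since $g$ is one-to-one and $(\xh1,\xh2,\yh1,\yh2)\leftrightarrow(X_1,X_2,Y_1,Y_2)$ is a linear bijection, we have $\sigma(\uh)=\sigma(U_1,U_2)$ and $\sigma(\yh1,\yh2)=\sigma(Y_1,Y_2)$, so the conclusion of Corollary \ref{cor:MIEqZero} is equivalent to $I(X_1+X_2\,;\,X_1-X_2\mid U_1,U_2,Y_1,Y_2)=0$. Conditioning on $\{U_1=u_1,U_2=u_2,Y_1=y_1,Y_2=y_2\}$ and using independence of the two copies together with the Markov chains $U_i-X_i-Y_i$, one finds that under this conditioning $X_1$ and $X_2$ are independent with laws $P_{X\mid U=u_1,Y=y_1}$ and $P_{X\mid U=u_2,Y=y_2}$, and that $X_1+X_2$ and $X_1-X_2$ are independent. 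Lemma \ref{SDcharacterization}, applied with $\mathbf{A}_1=X_1,\mathbf{A}_2=X_2$ on the conditional probability space, then forces both to be (possibly degenerate) Gaussian; as this holds for a.e.\ $(u_1,u_2,y_1,y_2)$ under the product of two copies of $P_{U,Y}$, Fubini gives that $P_{X\mid U=u,Y=y}$ is Gaussian for $P_{U,Y}$-a.e.\ $(u,y)$.

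For the second stage, fix a $u$ for which $P_{X\mid U=u,Y=y}$ is Gaussian for (Lebesgue-)a.e.\ $y$; it suffices to show the prior $\mu_u\triangleq P_{X\mid U=u}$ is Gaussian. Writing $Y=\rho X+Z$ with $Z\sim N(0,1-\rho^2)$ independent of $(X,U)$, Bayes' rule identifies $P_{X\mid U=u,Y=y}$ with the exponential tilt, by parameter $\theta=\rho y/(1-\rho^2)$, of the finite measure $\widetilde\mu_u(dx)\propto e^{-\rho^2 x^2/(2(1-\rho^2))}\mu_u(dx)$. The cumulant generating function of $\widetilde\mu_u$ is finite on all of $\mathbb{R}$ (for each $\theta$ the integrand $e^{\theta x-\rho^2 x^2/(2(1-\rho^2))}$ is bounded in $x$), hence real-analytic, and its second derivative is the variance of the $\theta$-tilt, which is a.e.\ constant; therefore the cumulant generating function is a quadratic, so $\widetilde\mu_u$---and hence $\mu_u(dx)\propto e^{\rho^2 x^2/(2(1-\rho^2))}\widetilde\mu_u(dx)$---is Gaussian. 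Finally, since $P_{Y\mid U=u}$ is a Gaussian convolution it has a strictly positive density and is thus equivalent to Lebesgue measure, so the stage-one statement passes to ``$P_{X\mid U=u}$ Gaussian for $P_U$-a.e.\ $u$'', as desired.

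The main obstacle is the second stage: Corollary \ref{cor:MIEqZero} and Lemma \ref{SDcharacterization} only pin down the conditional law of $X$ given the pair $(U,Y)$, and recovering Gaussianity of the law of $X$ given $U$ alone requires the deconvolution argument above---equivalently, the fact that a probability measure all of whose Gaussian-noise posteriors are Gaussian must itself be Gaussian. The first stage, by contrast, is essentially bookkeeping: once one recognizes that conditioning on $(\uh,\yh1,\yh2)$ is the same as conditioning on $(U_1,U_2,Y_1,Y_2)$, the two copies decouple and we land precisely in the hypothesis of Lemma \ref{SDcharacterization}.
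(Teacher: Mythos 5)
Your proof is correct, and the first stage (invoking Corollary~\ref{cor:MIEqZero}, decoupling the two copies after observing $\sigma(\uh)=\sigma(U_1,U_2)$, then applying Lemma~\ref{SDcharacterization}) is the same as the paper's; the second stage, however, is genuinely different. The paper removes the conditioning on $Y$ by a very short Markov-factorization computation: from $P(u)P(y|u)P(x|u,y)P(v|y)=P(u)P(x|u)P(y|x)P(v|y)$ one reads off
\[
\log P(x|u) = \log P(y|u) + \log P(x|u,y) - \log P(y|x),
\]
and then observes that for a.e.\ $(u,y)$ the right-hand side is a quadratic polynomial in $x$ (the first term is constant in $x$, the other two are quadratic by stage one and by joint Gaussianity of $(X,Y)$), forcing $\log P(x|u)$ to be quadratic in $x$. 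Your exponential-tilting/CGF route proves the same deconvolution fact---a measure all of whose Gaussian-channel posteriors are Gaussian must itself be Gaussian---via analyticity of the CGF of the tilted measure $\widetilde\mu_u$. This is a valid and arguably more self-contained alternative, though one line needs tightening: you assert that $K''$, the variance of the $\theta$-tilt, is ``a.e.\ constant,'' but Gaussianity of each tilt by itself does not directly say the variances agree across $\theta$. The clean way to close this is to note that for any single $\theta_0$ in the full-measure set, $K(\theta_0+\phi)-K(\theta_0)$ is a quadratic polynomial in $\phi$ on all of $\mathbb{R}$ (it is the CGF of a Gaussian), and since $\theta_0+\phi$ ranges over all of $\mathbb{R}$ this already exhibits $K$ as a global quadratic; equivalently, vanishing of the third cumulant of the tilt gives $K'''=0$ a.e., and analyticity upgrades this to $K'''\equiv 0$. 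With that small fix, both proofs are sound; the paper's is shorter, while yours makes the role of the Gaussian channel and the analyticity structure more explicit.
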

\begin{proof}
By construction and Corollary \ref{cor:MIEqZero}, we can conclude that there exist optimizers $U,V$ for which 
\begin{align}
I(X_1;X_2|U_1,U_2,Y_1,Y_2) = I(\xh1;\xh2|U_1,U_2,Y_1,Y_2) =0.
\end{align}
Therefore, by Lemma \ref{SDcharacterization}, there exist optimizers $U,V$ such that $X|\{U,Y = u,y\}$ is Gaussian for a.e. $u,y$.

Letting $P(x,y,u,v)$ denote the joint distribution of the above $X,Y,U,V$, we can use Markovity to write:
\begin{align}
P(x,y,u,v) &=P(u)P(y|u)P(x|u,y)P(v|y) \\
&= P(u)P(x|u)P(y|x)P(v|y).
\end{align}
Taking logarithms and rearranging, we have the identity
\begin{align}
\log(P(x|u)) = \log(P(y|u))+\log(P(x|u,y))-\log(P(y|x)).  \label{fromUVtoU}
\end{align}
Since $X|\{U,Y = u,y\}$ is Gaussian for a.e. $u,y$, and $X,Y$ are jointly Gaussian by assumption, the RHS of \eqref{fromUVtoU} is a quadratic function of $x$ for a.e. $u,y$.  Hence, $\log(P(x|u))$ is quadratic in $x$ for a.e. $u$, and the claim follows.  
\end{proof}

\begin{lemma}\label{OohamaLemma} \cite{Oohama1997}%
\label{lem:epi}
For any $U$ satisfying $U-X-Y$, the following inequality holds:
\begin{align}
 2^{-2 I(Y;U)}  \geq 1-\rho^2+\rho^2 2^{-2I(X;U)}.\label{oohamaEPI}
\end{align}
\end{lemma}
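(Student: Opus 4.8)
The plan is to obtain \eqref{oohamaEPI} as a one-line consequence of the conditional entropy power inequality, essentially reproducing Oohama's original argument. Since $X$ and $Y$ are jointly Gaussian, each with unit variance and correlation $\rho$, we may write $Y = \rho X + W$, where $W\sim N(0,1-\rho^2)$ is independent of $X$. The first thing to check is that the Markov relation $U-X-Y$ upgrades $W\perp X$ to $W\perp (U,X)$: conditioning on $X$, the map $Y\mapsto Y-\rho X = W$ is deterministic, so $U\perp Y\,|\,X$ gives $U\perp W\,|\,X$, and combining this with $W\perp X$ yields $W\perp(U,X)$. In particular $X$ and $W$ are conditionally independent given $U$, which is exactly what the conditional EPI requires.

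Next I would apply the conditional entropy power inequality to $Y=\rho X + W$, conditioned on $U$:
\begin{align}
2^{2h(Y|U)} \;\geq\; 2^{2h(\rho X|U)} + 2^{2h(W|U)} \;=\; \rho^2\, 2^{2h(X|U)} + 2\pi e\,(1-\rho^2),
\end{align}
using $h(\rho X|U) = h(X|U) + \log|\rho|$ and $h(W|U) = h(W) = \tfrac12\log\!\big(2\pi e(1-\rho^2)\big)$ since $W\perp U$.

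Then I would convert the conditional entropy powers into mutual informations. Because $X$ and $Y$ are each standard Gaussian, $h(X) = h(Y) = \tfrac12\log(2\pi e)$, so $h(X|U) = \tfrac12\log(2\pi e) - I(X;U)$ and likewise $h(Y|U) = \tfrac12\log(2\pi e) - I(Y;U)$, i.e.\ $2^{2h(X|U)} = 2\pi e\,2^{-2I(X;U)}$ and $2^{2h(Y|U)} = 2\pi e\,2^{-2I(Y;U)}$. Substituting these into the displayed inequality and dividing through by $2\pi e$ gives precisely $2^{-2I(Y;U)} \geq 1-\rho^2 + \rho^2\,2^{-2I(X;U)}$, which is \eqref{oohamaEPI}.

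The only genuine subtlety is the degenerate case in which some conditional entropy fails to be finite (for instance $I(X;U)=\infty$ when $U=X$). This is handled routinely: one uses the form of the conditional EPI that remains valid with the convention $2^{2\cdot(-\infty)}=0$, or applies the inequality to a truncated version of $X$ and passes to the limit; either way \eqref{oohamaEPI} survives, and the two extreme cases ($U$ constant and $U=X$) meet it with equality, providing a sanity check. So the "hard part" here is essentially bookkeeping around finiteness — the substance of the lemma is the single invocation of the conditional EPI.
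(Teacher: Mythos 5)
Your proof is correct and follows exactly the same route as the paper: write $Y=\rho X+W$ with $W\sim N(0,1-\rho^2)$ independent of $(U,X)$, apply the conditional entropy power inequality given $U$, and normalize by $2\pi e$. The extra bookkeeping you supply (verifying $W\perp(U,X)$ from Markovity, handling degenerate conditional entropies) is sound but goes beyond what the paper spells out.
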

\begin{proof}
Consider any $U$ satisfying $U-X-Y$.  Let $Y_u, X_u$ denote the random variables $X,Y$ conditioned on $U=u$.  By Markovity and definition of $X,Y$, we have that $Y_u = \rho X_u + Z$, where $Z\sim N(0,1-\rho^2)$ is independent of $X_u$.  Hence, the conditional entropy power inequality implies that 
\begin{align*}
2^{2h(Y|U)} &\geq \rho^2 2^{2h(X|U)} + 2 \pi e(1-\rho^2) = 2 \pi e \rho^2 2^{-2I(X;U)} + 2 \pi e(1-\rho^2).
\end{align*}
From here, the lemma easily follows.
\end{proof}
\begin{lemma}\label{OohamaExplicit} 
\begin{align}
&\inf_{U: U-X-Y} \Big\{ I(X;U)-\lambda I(Y;U) \Big\} =  
\begin{cases}
\frac{1}{2}\left[\log\left(\frac{\rho^2(\lambda-1)}{1-\rho^2}\right)-\lambda\log 
\left(\frac{\lambda-1}{\lambda(1-\rho^2)}\right)\right] & \mbox{If $\lambda \geq 1/\rho^2$}\\
0 & \mbox{If $0 \leq \lambda \leq 1/\rho^2$.}
\end{cases}\label{uonly}
\end{align}
\end{lemma}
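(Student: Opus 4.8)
The plan is to use Oohama's inequality (Lemma~\ref{OohamaLemma}), which is tight for Gaussian test channels, to collapse the optimization over all $U$ with $U-X-Y$ into a one-parameter calculus problem. Write $a = 2^{-2I(X;U)} \in (0,1]$ and $b = 2^{-2I(Y;U)}\in(0,1]$, so that $I(X;U) - \lambda I(Y;U) = \tfrac12 \log(b^{\lambda}/a)$. Since $\lambda \geq 0$ and $\log$ is increasing, Lemma~\ref{OohamaLemma} gives $b \geq 1-\rho^2+\rho^2 a$, hence
\[
I(X;U) - \lambda I(Y;U) \ \geq\ \tfrac12\bigl[\lambda\log(1-\rho^2+\rho^2 a) - \log a\bigr] =: \tfrac12 f(a).
\]
For the reverse direction I would take $U = X + N$ with $N\sim N(0,\sigma^2)$ independent of $X$: a short computation (using $Y = \rho X + Z$) shows that then $b = 1-\rho^2 + \rho^2 a$ exactly, while $a = \sigma^2/(1+\sigma^2)$ ranges over all of $(0,1)$ as $\sigma^2$ varies in $(0,\infty)$, and $a=1$ is realized by a degenerate $U$. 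Consequently the infimum in \eqref{uonly} equals $\inf_{a\in(0,1]}\tfrac12 f(a)$.

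It then remains to minimize $f(a) = \lambda\log(1-\rho^2+\rho^2 a) - \log a$ on $(0,1]$. Since $f(a)\to+\infty$ as $a\to 0^+$, the minimum is either interior or at $a=1$. Setting $f'(a) = \tfrac{\lambda\rho^2}{1-\rho^2+\rho^2 a} - \tfrac1a = 0$ gives the unique stationary point $a^{\star} = \tfrac{1-\rho^2}{\rho^2(\lambda-1)}$, and one checks that $f'$ is negative for $a<a^\star$ and positive for $a>a^\star$. The point $a^\star$ lies in $(0,1]$ exactly when $\lambda\geq 1/\rho^2$; in that case $a^\star$ is the global minimizer, and substituting $1-\rho^2+\rho^2 a^\star = (1-\rho^2)\lambda/(\lambda-1)$ yields
\[
\tfrac12 f(a^\star) = \tfrac12\Bigl[\lambda\log\bigl(\tfrac{\lambda(1-\rho^2)}{\lambda-1}\bigr) - \log\bigl(\tfrac{1-\rho^2}{\rho^2(\lambda-1)}\bigr)\Bigr],
\]
which is the first branch of \eqref{uonly} after rearranging the logarithms. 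When $0\leq\lambda\leq 1/\rho^2$ (this covers $\lambda\leq 1$ as well as $1<\lambda<1/\rho^2$, where $a^\star>1$ so the stationary point falls outside $(0,1]$), one has $f'(a)<0$ throughout $(0,1]$, so the minimum is attained at $a=1$, where $f(1)=0$; this gives the second branch.

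The argument is essentially routine; the only points demanding care are (i) verifying that the Gaussian test channel $U = X+N$ attains Oohama's bound \eqref{oohamaEPI} with equality, so that the reduction to $\inf_a \tfrac12 f(a)$ is an exact identity rather than merely a lower bound, and (ii) the elementary sign analysis of $f'$ needed to place the minimizer correctly in each regime of $\lambda$ and to confirm continuity of the two branches at $\lambda = 1/\rho^2$.
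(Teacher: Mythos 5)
Your proposal is correct and is essentially the same argument as the paper's: both reduce the optimization to Oohama's one-dimensional inequality (Lemma~\ref{OohamaLemma}), note that Gaussian test channels attain it with equality, and then minimize an explicit single-variable function. The paper packages the calculus into Lemma~\ref{calculusLemma} parameterized by $t = I(Y;U)$ (so it can be reused in the vector case, where $a_1+a_2<1$ forces the $\max\{f(t),0\}$ clipping), whereas you parameterize directly by $a = 2^{-2I(X;U)}$ and do the derivative analysis in place; this is a stylistic difference only.
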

\begin{proof}
The claim follows from Lemma \ref{OohamaLemma} and Lemma \ref{calculusLemma} in Appendix \ref{app:Calculus} by identifying $a_1 \leftarrow \rho^2$ and $a_2 \leftarrow (1-\rho^2)$.
\end{proof}

\begin{lemma} \label{lem:characterizeF}
\begin{align}
F(\lambda) =\inf_{U: U-X-Y} \Big\{ I(X;U)-\lambda I(Y;U) \Big\}.
\end{align}
\end{lemma}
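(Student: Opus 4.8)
Write $G(\lambda):=\inf_{U:\,U-X-Y}\{I(X;U)-\lambda I(Y;U)\}$ for the right-hand side of the claimed identity; its explicit value is recorded in Lemma~\ref{OohamaExplicit}. The inequality $F(\lambda)\le G(\lambda)$ is immediate: taking $V$ degenerate in \eqref{funL} kills the terms $I(Y;V|U)$ and $I(X;V|U)$, leaving precisely $\inf_{U}\{I(X;U)-\lambda I(Y;U)\}$. For the reverse inequality I would split on $\lambda$. If $\lambda\le 1$, then for \emph{any} $U,V$ with $U-X-Y-V$, data processing along $U-X-Y$ gives $I(X;U)-\lambda I(Y;U)\ge(1-\lambda)I(Y;U)\ge0$, and data processing along the conditional chain $X-Y-V$ gives $I(Y;V|U)-\lambda I(X;V|U)\ge(1-\lambda)I(X;V|U)\ge0$; hence $F(\lambda)\ge0=G(\lambda)$ by Lemma~\ref{OohamaExplicit}.

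The substantive case is $\lambda>1$. Here I would use Corollary~\ref{existUV_XuGauss} to fix optimizers $U,V$ of \eqref{funL} with $X\mid\{U=u\}\sim N(m_u,t_u)$ for almost every $u$. Conditioned on $\{U=u\}$, and writing $X_u,Y_u,V_u$ for $X,Y,V$, we have $Y_u=\rho X_u+Z$ with $Z\sim N(0,1-\rho^2)$ independent of $X_u$, so $(X_u,Y_u)$ is jointly Gaussian with correlation $\tilde\rho_u$ determined by $\tilde\rho_u^{2}=\rho^2 t_u/(\rho^2 t_u+1-\rho^2)$, and $X_u-Y_u-V_u$ is a Markov chain. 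A short computation using the maximum-entropy property of the Gaussian then yields
\[
F(\lambda)=I(X;U)-\lambda I(Y;U)+I(Y;V|U)-\lambda I(X;V|U)=\EE_u\!\left[\ell(t_u)+I(Y_u;V_u)-\lambda I(X_u;V_u)\right],
\]
where $\ell(t):=-\tfrac12\log t+\tfrac{\lambda}{2}\log(\rho^2 t+1-\rho^2)$. Since mutual information is unchanged by scaling, Lemma~\ref{OohamaExplicit}---applied with the roles of $X$ and $Y$ interchanged and $\rho$ replaced by $\tilde\rho_u$---lower-bounds the correction term: $I(Y_u;V_u)-\lambda I(X_u;V_u)\ge q(\tilde\rho_u,\lambda)$, where $q(\tilde\rho,\lambda)$ denotes the right-hand side of \eqref{uonly} with $\rho$ replaced by $\tilde\rho$ (in particular $q(\tilde\rho,\lambda)=0$ whenever $\lambda\le1/\tilde\rho^{2}$).

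The crux is a short calculus argument. Let $t^\ast:=\tfrac{1-\rho^{2}}{\rho^{2}(\lambda-1)}$, which is the unique minimizer of $\ell$. Substituting $\tilde\rho_u=\tilde\rho(t_u)$ and noting that $\lambda\le1/\tilde\rho(t)^{2}$ is equivalent to $t\le t^\ast$, one checks that the two branches of \eqref{uonly} fit together so that $\ell(t)+q(\tilde\rho(t),\lambda)=\ell(t\wedge t^\ast)=:\Phi(t)$ (for $t\ge t^\ast$ the sum collapses to the constant $\ell(t^\ast)$). Because $\ell$ is convex on $(0,t^\ast]$---its convexity in fact persists up to $\tfrac{1-\rho^{2}}{\rho^{2}(\sqrt\lambda-1)}>t^\ast$---and $\ell'(t^\ast)=0$, the function $\Phi$ is convex and non-increasing on $(0,\infty)$; and comparing with \eqref{uonly} gives $\Phi(1)=G(\lambda)$ (when $\lambda\ge1/\rho^{2}$ one has $t^\ast\le1$, so $\Phi(1)=\ell(t^\ast)$, which is exactly the explicit value of $G(\lambda)$; when $1<\lambda<1/\rho^{2}$ one has $t^\ast>1$, so $\Phi(1)=\ell(1)=0=G(\lambda)$). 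Finally $\EE_u[t_u]=\EE[\operatorname{Var}(X\mid U)]\le\operatorname{Var}(X)=1$, so Jensen's inequality together with the monotonicity of $\Phi$ gives
\[
F(\lambda)\ \ge\ \EE_u[\Phi(t_u)]\ \ge\ \Phi\big(\EE_u[t_u]\big)\ \ge\ \Phi(1)\ =\ G(\lambda),
\]
completing the argument.

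I expect the main obstacle to be the calculus of the last paragraph---verifying the identity $\ell(t)+q(\tilde\rho(t),\lambda)=\ell(t\wedge t^\ast)$ together with the convexity and monotonicity of $\Phi$---though each of these is a routine, if slightly tedious, computation. The only other point requiring care is the application of Lemma~\ref{OohamaExplicit} to the conditional pair $(X_u,Y_u)$, which is Gaussian with correlation $\tilde\rho_u$ but need not have unit variances; this is handled by rescaling, under which both the mutual informations and the correlation coefficient are invariant.
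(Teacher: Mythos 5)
Your proof is correct. The initial cases and the reduction via Corollary~\ref{existUV_XuGauss} and Lemma~\ref{OohamaExplicit} (applied conditionally on $u$, after rescaling) mirror the paper, and your algebraic identity $\ell(t)+q(\tilde\rho(t),\lambda)=\ell(t\wedge t^\ast)$ is exactly the computation the paper carries out inside its integral bound. Where you genuinely diverge from the paper is in how the two non-substantive regimes of $\lambda$ are dispatched and in how the pointwise bounds are aggregated. You split on $\lambda\le 1$ versus $\lambda>1$, handling the first by two applications of the data processing inequality, and then treat all of $\lambda>1$ by a single Jensen argument built on the convexity and monotonicity of $\Phi(t)=\ell(t\wedge t^\ast)$ together with the variance budget $\EE_u[t_u]=\EE[\operatorname{Var}(X\mid U)]\le 1$. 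The paper instead proves the bound only for $\lambda\ge1/\rho^2$, where the integrand is \emph{pointwise} at least $G(\lambda)$ for every $u$ (no Jensen, no use of $\EE_u[t_u]\le1$), and then covers $0\le\lambda<1/\rho^2$ by a one-line appeal to the monotonicity of $F$: $F$ is non-increasing, $F(0)=0$, and $F(1/\rho^2)=0$. The paper's route is more economical because it avoids the convexity analysis entirely; your route is arguably more self-contained (no separate monotonicity-of-$F$ argument) and handles the regime $1<\lambda<1/\rho^2$ head-on, where the pointwise bound $\Phi(t_u)\ge G(\lambda)$ actually fails for $t_u>t^\ast$ and Jensen plus the variance constraint is genuinely needed. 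Both are valid; just be aware that your proof relies on the convexity of $\ell$ on $(0,t^\ast]$, which you assert correctly (the inflection point is at $\tfrac{1-\rho^2}{\rho^2(\sqrt\lambda-1)}>t^\ast$) but which the paper never has to verify.
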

\begin{proof}
We will assume $\lambda \geq 1/\rho^2$.  The claim that $F(\lambda)=0$ in the complementary case will then follow immediately by monotonicity of $F(\lambda)$.  To this end, let $U,V$ be optimizers such that $X|\{U = u\}$ is Gaussian for a.e. $u$.  The existence of such $U,V$ is guaranteed by Corollary \ref{existUV_XuGauss}.  Let $X_u,Y_u$ denote the random variables $X,Y$ conditioned on $U=u$.  By Markovity, $X_u,Y_u$ are jointly Gaussian with
\begin{align}
Y_u = \rho X_u + Z,
\end{align}
where $Z\sim N(0,1-\rho^2)$ is independent of $X_u$.  Letting $\sigma_u^2$ be the variance of $X_u$, the variance of $Y_u$ is $\rho^2 \sigma_u^2 + (1-\rho^2)$.  Moreover, the squared linear correlation of $X_u$ and $Y_u$ is given by
\begin{align}
\rho_u^2 \triangleq \frac{\rho^2 \sigma_u^2}{\rho^2 \sigma_u^2 + (1-\rho^2)}.
\end{align}
 By Lemma  \ref{OohamaExplicit},
\begin{align}
&\inf_{V: V-Y_u-X_u} \Big\{ I(Y_u;V)-\lambda I(X_u;V) \Big\} =  \frac{1}{2}\left[\log\left(\frac{\rho_u^2(\lambda-1)}{1-\rho_u^2}\right)-\lambda\log
\left(\frac{\lambda-1}{\lambda(1-\rho_u^2)}\right)\right] \label{simpleOpt}
\end{align}
whenever $\lambda \geq 1/\rho_u^2$, and the infimum is equal to zero otherwise.

By definition, we have
\begin{align}
F(\lambda) &= I(X;U)- \lambda I(Y;U) + I(Y;V|U) - \lambda I(X;V|U) \\
&= \int  \Big(h(X)-h(X|u) - \lambda(h(Y) -h(Y|U=u))  + I(Y;V|U=u) - \lambda I(X;V|U=u) \Big)dP_U(u) \\
&= \int \Big(-\frac{1}{2}\log \sigma_u^2 + \frac{\lambda}{2}\log( \rho^2 \sigma_u^2 +(1-\rho^2))  + I(Y;V|U=u) - \lambda I(X;V|U=u) \Big)dP_U(u). \label{integrandScalar}
\end{align}
If $\lambda \geq 1/\rho_u^2$, we can apply \eqref{simpleOpt} to bound the integrand in \eqref{integrandScalar} as follows
\begin{align}
-\frac{1}{2}\log \sigma_u^2& + \frac{\lambda}{2}\log( \rho^2 \sigma_u^2 +(1-\rho^2))  + I(Y;V|U=u) - \lambda I(X;V|U=u)\notag\\
\geq & -\frac{1}{2}\log \sigma_u^2 + \frac{\lambda}{2}\log( \rho^2 \sigma_u^2 +(1-\rho^2)) + \frac{1}{2}\left[\log\left(\frac{\rho_u^2(\lambda-1)}{1-\rho_u^2}\right)-\lambda\log
\left(\frac{\lambda-1}{\lambda(1-\rho_u^2)}\right)\right] \\
=& \frac{1}{2}\left[\log\left(\frac{\rho^2(\lambda-1)}{1-\rho^2}\right)-\lambda\log
\left(\frac{\lambda-1}{\lambda(1-\rho^2)}\right)\right].
\end{align}
On the other hand, if $\lambda \leq 1/\rho_u^2$, then we can bound the integrand in \eqref{integrandScalar} by
\begin{align}
-\frac{1}{2}\log \sigma_u^2 & + \frac{\lambda}{2}\log( \rho^2 \sigma_u^2 +(1-\rho^2))  + I(Y;V|U=u) - \lambda I(X;V|U=u)\notag\\
\geq & -\frac{1}{2}\log \sigma_u^2 + \frac{\lambda}{2}\log( \rho^2 \sigma_u^2 +(1-\rho^2)) \\
\geq& \frac{1}{2}\left[\log\left(\frac{\rho^2(\lambda-1)}{1-\rho^2}\right)-\lambda\log
\left(\frac{\lambda-1}{\lambda(1-\rho^2)}\right)\right],
\end{align}
where the final inequality follows since $\lambda \leq 1/\rho_u^2\Rightarrow \sigma_u^2 \leq \frac{1-\rho^2}{\rho^2(\lambda-1)}$, and $-\frac{1}{2}\log \sigma_u^2 + \frac{\lambda}{2}\log( \rho^2 \sigma_u^2 +(1-\rho^2))$ is monotone decreasing in $\sigma_u^2$ for $ \sigma_u^2 \leq \frac{1-\rho^2}{\rho^2(\lambda-1)}$.  Therefore, we have established the inequality
\begin{align}
F(\lambda) \geq \frac{1}{2}\left[\log\left(\frac{\rho^2(\lambda-1)}{1-\rho^2}\right)-\lambda\log
\left(\frac{\lambda-1}{\lambda(1-\rho^2)}\right)\right]. %
\end{align}
The definition of $F(\lambda)$ together with Lemma \ref{OohamaExplicit} implies the reverse inequality,
completing the proof.
\end{proof}
Since \eqref{uonly} is a dual characterization of the inequality \eqref{oohamaEPI}, we have proved Theorem \ref{thm:scalar}.
\begin{remark}
 Although Lemma \ref{lem:characterizeF} implies that the functional \eqref{funL} is minimized when either $U$ or $V$ is degenerate, there are also minimizers for which this is not the case.  For example, if $-1\leq \rho_u,\rho_v \leq 1$ satisfy
\begin{align}
(1-\rho^2)(1-\rho^2\rho_u^2\rho_v^2) = \rho^2(\lambda-1)(1-\rho_u^2)(1-\rho_v^2),
\end{align}
then $U,V$ defined according to  
\begin{align}
U&= \rho_u X + Z_u\\
V&= \rho_v Y + Z_v,
\end{align}
where $Z_u \sim N(0,1-\rho_u^2)$ and  $Z_v \sim N(0,1-\rho_v^2)$ are  independent of everything else, also minimize \eqref{funL}. 
 \end{remark}

\section{Vector Setting}\label{sec:vector}
Now, we turn our attention to the vector case.  Throughout the remainder of this section, let $\Sigma_X, \Sigma_Z$ be positive definite $n\times n$ matrices.  Suppose $\mathbf{X}\sim N(\mu_X,\Sigma_X)$ and  $\mathbf{Z}\sim N(\mu_Z,\Sigma_Z)$ are independent $n$-dimensional Gaussian vectors, and define $\mathbf{Y} = \mathbf{X} + \mathbf{Z}$.  We recall the statement of Theorem \ref{thm:vector} here, along with conditions for equality:

\noindent \textbf{Theorem \ref{thm:vector}.}
\emph{
For any $U,V$ such that $U-\mathbf{X}-\mathbf{Y}-V$, 
\begin{align}
2^{-\frac{2}{n} (I(\mathbf{Y};U)+ I(\mathbf{X};V|U))} &\geq \frac{| \Sigma_X |^{1/n}}{| \Sigma_X  + \Sigma_Z |^{1/n}}  ~2^{-\frac{2}{n} (I(\mathbf{X};U)+I(\mathbf{Y};V|U))} + 2^{-\frac{2}{n} I(\mathbf{X};\mathbf{Y})}.\label{VecMain}%
\end{align}
Moreover, equality holds iff $\mathbf{X}|\{U=u\} \sim N(\mu_u ,\Sigma_{X|U} )$ for all $u$, where $\mu_u \triangleq \EE[\mathbf{X}|U=u]$, and $\Sigma_{X|U}$ is proportional to $\Sigma_Z$.}

\subsection{Proof of Theorem \ref{thm:vector}}
Instead of working directly with inequality \eqref{VecMain}, it will be convenient to consider a dual form.  As before, for $\lambda\geq 0$, define
\begin{align}
\mathbf{F}(\lambda)\triangleq \inf_{U,V: U-\mathbf{X}-\mathbf{Y}-V}\Big\{ I(\mathbf{X};U)-\lambda I(\mathbf{Y};U) + I(\mathbf{Y};V|U)-\lambda I(\mathbf{X};V|U)\Big\} \label{funLVector}
\end{align}
The remainder of this section is devoted to bounding the function $\mathbf{F}(\lambda)$.  

To begin, we remark that the extension of the results up to Lemma \ref{lem:epi} for the scalar setting immediately generalize to the present vector case by repeating the proofs verbatim.  Namely, we have the key observation:
\begin{corollary} \label{existUV_XuGaussVector}
There exist  $U,V$ which minimize the functional \eqref{funLVector} such that $\mathbf{X}|\{U = u\}$ is Gaussian for a.e. $u$.
\end{corollary}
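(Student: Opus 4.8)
The plan is to transcribe the scalar development of Section~\ref{sec:scalar} almost verbatim into the vector setting, the only substantive change being that the Skitovich--Darmois-type step is supplied by the vector characterization already recorded as Lemma~\ref{SDcharacterization}. Concretely, I would first re-run the doubling construction: take two independent copies $(U_1,\mathbf{X}_1,\mathbf{Y}_1,V_1)$ and $(U_2,\mathbf{X}_2,\mathbf{Y}_2,V_2)$ of a minimizer of \eqref{funLVector}, set $\xh1=(\mathbf{X}_1+\mathbf{X}_2)/\sqrt2$, $\xh2=(\mathbf{X}_1-\mathbf{X}_2)/\sqrt2$, define $\yh1,\yh2$ analogously, and put $\uh=g(U_1,U_2)$, $\vh=g(V_1,V_2)$ for a Borel isomorphism $g$. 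The joint map $(\mathbf{X}_1,\mathbf{X}_2)\mapsto(\xh1,\xh2)$, applied with the same rotation to the $\mathbf{Y}$'s, is orthogonal, so $(\xh1,\xh2,\yh1,\yh2)$ is again equal in distribution to $(\mathbf{X}_1,\mathbf{X}_2,\mathbf{Y}_1,\mathbf{Y}_2)$; this is the single place where joint Gaussianity of $(\mathbf{X},\mathbf{Y})$ enters, and the argument is insensitive to the dimension $n$.

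Next, I would observe that Lemma~\ref{lem:candidateMinimizers}, Lemma~\ref{lem:ConserveDerivative}, Corollary~\ref{cor:Stability}, Lemma~\ref{lem:inductionLemma}, and Corollary~\ref{cor:MIEqZero} are proved by purely information-theoretic manipulations: chain rules for mutual information, the conditional independences $\xh1\perp\xh2\mid(\yh1,\yh2)$ and $\mathbf{X}_1\perp\mathbf{X}_2\mid(U_1,U_2,\mathbf{Y}_1,\mathbf{Y}_2)$ coming for free from independence of the two copies, concavity and strict monotonicity of $\mathbf{F}(\lambda)$ (which hold for the functional \eqref{funLVector} verbatim), and the telescoping/averaging argument of Corollary~\ref{cor:MIEqZero}. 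None of these uses a distributional hypothesis beyond the Markov chain, so repeating the proofs word for word with $(X,Y,U,V,F)$ replaced by $(\mathbf{X},\mathbf{Y},U,V,\mathbf{F})$ produces minimizers $U,V$ of \eqref{funLVector} with $I(\xh1;\xh2\mid\uh,\yh1,\yh2)=0$.

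Finally, I would conclude exactly as in the proof of Corollary~\ref{existUV_XuGauss}. Since $(\uh,\yh1,\yh2)$ determines $(U_1,U_2,\mathbf{Y}_1,\mathbf{Y}_2)$ and $(\xh1,\xh2)$ is an invertible linear image of $(\mathbf{X}_1,\mathbf{X}_2)$, the condition $I(\xh1;\xh2\mid\uh,\yh1,\yh2)=0$ says that, conditioned on a.e.\ value $(u_1,u_2,y_1,y_2)$ of $(U_1,U_2,\mathbf{Y}_1,\mathbf{Y}_2)$, the vectors $\mathbf{X}_1+\mathbf{X}_2$ and $\mathbf{X}_1-\mathbf{X}_2$ are independent; together with the (free) conditional independence $\mathbf{X}_1\perp\mathbf{X}_2$ given the same variables, Lemma~\ref{SDcharacterization} applied conditionally forces $\mathbf{X}_1\mid\{U_1,U_2,\mathbf{Y}_1,\mathbf{Y}_2\}$, i.e.\ $\mathbf{X}\mid\{U=u,\mathbf{Y}=y\}$ by Markovity, to be Gaussian for a.e.\ $(u,y)$. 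Taking logarithms in the factorization $P(\mathbf{x},\mathbf{y},u,v)=P(u)P(\mathbf{y}\mid u)P(\mathbf{x}\mid u,\mathbf{y})P(v\mid\mathbf{y})=P(u)P(\mathbf{x}\mid u)P(\mathbf{y}\mid\mathbf{x})P(v\mid\mathbf{y})$ gives $\log P(\mathbf{x}\mid u)=\log P(\mathbf{y}\mid u)+\log P(\mathbf{x}\mid u,\mathbf{y})-\log P(\mathbf{y}\mid\mathbf{x})$; the right-hand side is a quadratic function of $\mathbf{x}$ for a.e.\ $(u,y)$ because $\mathbf{X}\mid\{U,\mathbf{Y}\}$ is Gaussian, $\mathbf{Y}\mid\mathbf{X}$ is Gaussian, and $\log P(\mathbf{y}\mid u)$ has no $\mathbf{x}$-dependence, so $\log P(\mathbf{x}\mid u)$ is quadratic in $\mathbf{x}$ for a.e.\ $u$, whence $\mathbf{X}\mid\{U=u\}$ is Gaussian.

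I do not expect a genuinely new obstacle: the scalar proof was organized precisely so that the only appeal to Gaussianity is a Skitovich--Darmois statement, which in $n$ dimensions is exactly Lemma~\ref{SDcharacterization}. The one step deserving a little care is the conditional application of that lemma --- one must verify that the relevant independences hold for $P$-a.e.\ value of $(U_1,U_2,\mathbf{Y}_1,\mathbf{Y}_2)$, so that the conclusion is ``Gaussian for a.e.\ $(u,y)$'' --- and then that the quadratic-log-density argument correctly descends this to ``Gaussian for a.e.\ $u$'', with the quadratic form automatically negative definite since $P(\cdot\mid u)$ integrates to one.
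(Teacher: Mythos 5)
Your proposal is correct and takes essentially the same approach as the paper: the paper itself simply remarks that the scalar development (doubling construction, Lemma~\ref{lem:candidateMinimizers} through Corollary~\ref{cor:MIEqZero}, followed by the Ghurye--Olkin characterization and the quadratic log-density argument) generalizes verbatim, and your write-up spells out exactly that translation, correctly identifying Lemma~\ref{SDcharacterization} as the only dimension-dependent ingredient.
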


Therefore, we pick up at this point and prove a vector version of Lemma \ref{lem:epi}.

\begin{lemma}\label{lem:vecEPI}
For any $U$ such that $U-\mathbf{X}-\mathbf{Y}$, 
\begin{align}
2^{-2 I(\mathbf{Y};U)/n} &\geq \frac{| \Sigma_X |^{1/n}}{| \Sigma_X  + \Sigma_Z |^{1/n}}  ~2^{-2 I(\mathbf{X};U)/n} + 2^{-2 I(\mathbf{X};\mathbf{Y})/n}.%
\end{align}
Moreover, equality holds iff $\mathbf{X}|\{U=u\} \sim N(\mu_u ,\Sigma )$ for all $u$, where $\mu_u \triangleq \EE[\mathbf{X}|U=u]$, and $\Sigma  \propto \Sigma_Z$.
\end{lemma}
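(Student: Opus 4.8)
The plan is to deduce Lemma~\ref{lem:vecEPI} from the conditional vector entropy power inequality, in exact analogy with the proof of the scalar Lemma~\ref{lem:epi}, and then to obtain the equality characterization by tracking when the EPI step is tight. First I would record the conditioning structure: since $\mathbf{Z} = \mathbf{Y} - \mathbf{X}$ and, given $\mathbf{X}$, the vectors $\mathbf{Y}$ and $\mathbf{Z}$ determine one another, the hypothesis $U-\mathbf{X}-\mathbf{Y}$ is equivalent to $U-\mathbf{X}-\mathbf{Z}$; combined with $\mathbf{Z}\perp\mathbf{X}$ this yields $\mathbf{Z}\perp(U,\mathbf{X})$. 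Hence, conditioned on $\{U=u\}$ we have $\mathbf{Y}_u = \mathbf{X}_u + \mathbf{Z}$ with $\mathbf{Z}\sim N(\mu_Z,\Sigma_Z)$ independent of $\mathbf{X}_u$, and $h(\mathbf{Z}\mid U) = h(\mathbf{Z})$.

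Next I would invoke the conditional vector EPI, which gives $2^{2h(\mathbf{Y}\mid U)/n} \ge 2^{2h(\mathbf{X}\mid U)/n} + 2^{2h(\mathbf{Z})/n}$. Substituting $h(\mathbf{Y}\mid U) = h(\mathbf{Y}) - I(\mathbf{Y};U)$, $h(\mathbf{X}\mid U) = h(\mathbf{X}) - I(\mathbf{X};U)$, the Gaussian values $2^{2h(\mathbf{Y})/n} = 2\pi e\,|\Sigma_X+\Sigma_Z|^{1/n}$, $2^{2h(\mathbf{X})/n} = 2\pi e\,|\Sigma_X|^{1/n}$, $2^{2h(\mathbf{Z})/n} = 2\pi e\,|\Sigma_Z|^{1/n}$, and then dividing by $2\pi e\,|\Sigma_X+\Sigma_Z|^{1/n}$, one obtains
\[
2^{-2I(\mathbf{Y};U)/n} \;\ge\; \frac{|\Sigma_X|^{1/n}}{|\Sigma_X+\Sigma_Z|^{1/n}}\,2^{-2I(\mathbf{X};U)/n} + \frac{|\Sigma_Z|^{1/n}}{|\Sigma_X+\Sigma_Z|^{1/n}} .
\]
The last term equals $2^{-2I(\mathbf{X};\mathbf{Y})/n}$, since $I(\mathbf{X};\mathbf{Y}) = h(\mathbf{Y}) - h(\mathbf{Y}\mid\mathbf{X}) = h(\mathbf{Y}) - h(\mathbf{Z}) = \tfrac12\log(|\Sigma_X+\Sigma_Z|/|\Sigma_Z|)$, which is the claimed inequality. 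The degenerate case $I(\mathbf{X};U) = +\infty$ is handled separately: there the left-hand summand vanishes and the claim reduces to the data-processing bound $I(\mathbf{Y};U)\le I(\mathbf{X};\mathbf{Y})$.

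For the equality case I would run the EPI through its ``concavity of entropy power'' form with the mixing weight held fixed. Put $\lambda \triangleq 2^{2h(\mathbf{X}\mid U)/n}\big/\big(2^{2h(\mathbf{X}\mid U)/n}+2^{2h(\mathbf{Z})/n}\big)$, and apply the inequality $h(\sqrt{\lambda}\,\mathbf{A}+\sqrt{1-\lambda}\,\mathbf{B}) \ge \lambda\,h(\mathbf{A}) + (1-\lambda)\,h(\mathbf{B})$ (valid for independent $\mathbf{A},\mathbf{B}$) to $\mathbf{A} = \mathbf{X}_u/\sqrt{\lambda}$ and $\mathbf{B} = \mathbf{Z}/\sqrt{1-\lambda}$, giving, for every $u$,
\[
h(\mathbf{Y}\mid U=u) \;\ge\; \lambda\,h(\mathbf{X}\mid U=u) + (1-\lambda)\,h(\mathbf{Z}) - \tfrac{n}{2}\big(\lambda\log\lambda + (1-\lambda)\log(1-\lambda)\big);
\]
averaging over $u$ and exponentiating recovers precisely the EPI used above. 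Thus equality in Lemma~\ref{lem:vecEPI} forces equality in this inequality for $P_U$-a.e.\ $u$, which, by the equality case of the EPI (equivalently, strict concavity of $\log\det$), holds iff $\mathbf{X}_u/\sqrt{\lambda}$ and $\mathbf{Z}/\sqrt{1-\lambda}$ are Gaussian with identical covariance, i.e.\ $\mathbf{X}\mid\{U=u\}\sim N(\mu_u,\Sigma)$ with $\Sigma = \tfrac{\lambda}{1-\lambda}\Sigma_Z$. Since $\lambda$ is a single constant, $\Sigma$ is one fixed matrix proportional to $\Sigma_Z$, independent of $u$; conversely, under this condition all variables are jointly Gaussian and equality follows by a direct determinant computation.

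The proof contains no deep step; the only delicate point is the equality analysis, where one must ensure that the proportionality constant in $\Sigma_{X\mid U=u}\propto\Sigma_Z$ does not depend on $u$. This is exactly why the fixed-$\lambda$ form of the EPI is the right tool here: applying the unconditional EPI pointwise in $u$ and averaging afterward would not, by itself, pin the conditional covariance down to a single matrix.
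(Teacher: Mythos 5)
Your proof is correct, and the main inequality is established by essentially the same route as the paper's: apply the conditional vector EPI to $\mathbf{Y}=\mathbf{X}+\mathbf{Z}$ given $U$ (using that $\mathbf{Z}\perp(U,\mathbf{X})$), then divide by the relevant Gaussian entropy powers. You skip the diagonalization that the paper carries out---the paper normalizes $\mathbf{X}$ and $\mathbf{Y}$ to identity covariance and writes $\mathbf{Y}'=B\mathbf{X}'+W$ before invoking the EPI, but the determinant bookkeeping cancels and yields the identical inequality, so this is only a cosmetic difference. Where you genuinely add value is the equality analysis. The paper merely asserts that the equality condition ``follows from the necessary conditions for equality in the conditional entropy power inequality,'' without addressing why the proportionality constant in $\Sigma_{X|U=u}\propto\Sigma_Z$ must be the same for a.e.\ $u$. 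Your use of the fixed-weight Lieb form of the EPI resolves exactly this subtlety: the parameter $\lambda$ is chosen once from the averaged conditional entropies, and pointwise equality then forces $\Sigma_{X|U=u}=\tfrac{\lambda}{1-\lambda}\Sigma_Z$ with that single $\lambda$ for a.e.\ $u$. This is a correct and more careful justification of a step the paper glosses over.
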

\begin{proof}
Without loss of generality, we can assume $\mu_X=\mu_Z=0$.  Note that $\Sigma_Y = \Sigma_X  + \Sigma_Z$.  We can diagonalize the covariance matrices as 
\begin{align}
\Sigma_X &= U_X \Lambda_X U_X^T\\
\Sigma_Y &= U_Y \Lambda_Y U_Y^T,%
\end{align}
where $U_X$ and $U_Y$ are unitary.
Define $\mathbf{Y}' = \Lambda_Y^{-1/2} U_Y^T \mathbf{Y}$, implying $\mathbf{Y}'\sim N(0,I_n)$ and 
\begin{align}
\mathbf{Y}' = \Lambda_Y^{-1/2} U_Y^T  \mathbf{X} + \Lambda_Y^{-1/2} U_Y^T \mathbf{Z}.
\end{align}
Further, define $\mathbf{X}' = \Lambda_X^{-1/2} U_X^T \mathbf{X}$, so that $\mathbf{X}'\sim N(0,I_n)$ and 
\begin{align}
\mathbf{Y}' = \Lambda_Y^{-1/2} U_Y^T  U_X \Lambda_X^{1/2} \mathbf{X}' + \Lambda_Y^{-1/2} U_Y^T \mathbf{Z} = B \mathbf{X}' + W,
\end{align}
where $B \triangleq \Lambda_Y^{-1/2} U_Y^T  U_X \Lambda_X^{1/2}$ and $W\triangleq \Lambda_Y^{-1/2} U_Y^T \mathbf{Z}$, implying $W\sim N (0,\Lambda_Y^{-1/2} U_Y^T \Sigma_Z  U_Y \Lambda_Y^{-1/2} )$.
For any $U$ such that $U - \mathbf{X} - \mathbf{Y}$, $\mathbf{X}'$ and $W$ are independent given $U$.  Thus, the conditional entropy power inequality implies
\begin{align}
2^{2 h(\mathbf{Y}'|U)/n} &\geq 2^{2 h( B\mathbf{X}'|U)/n} + 2^{2 h( W | U)/n}\\
&= |B|^{2/n} 2^{2 h(\mathbf{X}'|U)/n} + 2 \pi e | \Lambda_Y^{-1/2} U_Y^T \Sigma_Z  U_Y \Lambda_Y^{-1/2}  |^{1/n} \\
&= | \Lambda_Y^{-1/2} U_Y^T  U_X \Lambda_X^{1/2} |^{2/n} 2^{2 h(\mathbf{X}'|U)/n} + 2 \pi e \frac{|\Sigma_Z|^{1/n}}{|\Sigma_Y|^{1/n}} \\
&=   \frac{|\Sigma_X|^{1/n}}{|\Sigma_Y|^{1/n}}  2^{2 h(\mathbf{X}'|U)/n} + 2 \pi e \frac{|\Sigma_Z|^{1/n}}{|\Sigma_Y|^{1/n}}.
\end{align}
Multiplying both sides by $2^{-2 h(\mathbf{Y}')/n} = 2^{-2 h(\mathbf{X}')/n} =\frac{1}{2\pi e}$, we obtain:
\begin{align}
2^{-2 I(\mathbf{Y}';U)/n} &\geq  \frac{|\Sigma_X|^{1/n}}{|\Sigma_Y|^{1/n}}  2^{-2 I(\mathbf{X}';U)/n} + \frac{|\Sigma_Z|^{1/n}}{|\Sigma_Y|^{1/n}}.
\end{align}
Since mutual information is invariant under one-to-one transformations of support, we also have
\begin{align}
2^{-2 I(\mathbf{Y};U)/n} &\geq    \frac{|\Sigma_X|^{1/n}}{|\Sigma_Y|^{1/n}}  2^{-2 I(\mathbf{X};U)/n} + \frac{|\Sigma_Z|^{1/n}}{|\Sigma_Y|^{1/n}}\\
&=\frac{| \Sigma_X |^{1/n}}{|\Sigma_X  + \Sigma_Z|^{1/n}}  2^{-2 I(\mathbf{X};U)/n} + 2^{-2 I(\mathbf{X};\mathbf{Y})/n}.%
\end{align}
The condition for equality follows from  the necessary conditions for equality in the conditional entropy power inequality.
\end{proof}

\begin{lemma} \label{lem:MatrixExplicit}Let $U$ be such that $U-\mathbf{X}-\mathbf{Y}$.
\begin{enumerate}
\item If $\lambda \geq  1 + |\Sigma_X^{-1} \Sigma_Z|^{1/n}$, then
\begin{align}
&I(\mathbf{X};U) - \lambda I(\mathbf{Y};U) \geq 
\frac{n}{2}\left[\log\left(\frac{|\Sigma_X |^{1/n}(\lambda-1)}{|\Sigma_Z|^{1/n}}\right)-\lambda\log
\left(\frac{| \Sigma_X + \Sigma_Z|^{1/n}(\lambda-1)}{ |\Sigma_Z|^{1/n} \lambda }\right)\right]. \label{matrixLBa}
\end{align}
\item If $0 \leq \lambda \leq  1 + |\Sigma_X^{-1} \Sigma_Z|^{1/n}$, then
\begin{align}
&I(\mathbf{X};U) - \lambda I(\mathbf{Y};U) \geq -\frac{\lambda n}{2}\log \left( \frac{|\Sigma_{X}+\Sigma_Z|^{1/n}}{|\Sigma_{X}|^{1/n} + |\Sigma_{Z}|^{1/n}}\right). \label{matrixLB2a}
\end{align}
\end{enumerate}
\end{lemma}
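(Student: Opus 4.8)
The plan is to feed the vector entropy-power inequality of Lemma~\ref{lem:vecEPI} into the same one-variable optimization that underlies the scalar Lemma~\ref{OohamaExplicit}. Write $a_1 \triangleq |\Sigma_X|^{1/n}/|\Sigma_X+\Sigma_Z|^{1/n}$ and $a_2 \triangleq |\Sigma_Z|^{1/n}/|\Sigma_X+\Sigma_Z|^{1/n}$, and note that $a_2 = 2^{-2I(\mathbf{X};\mathbf{Y})/n}$ since $I(\mathbf{X};\mathbf{Y}) = \tfrac12\log(|\Sigma_X+\Sigma_Z|/|\Sigma_Z|)$. In this notation, Lemma~\ref{lem:vecEPI} states that every $U$ with $U-\mathbf{X}-\mathbf{Y}$ satisfies $2^{-2I(\mathbf{Y};U)/n} \geq a_1\,2^{-2I(\mathbf{X};U)/n} + a_2$; moreover, Minkowski's determinant inequality $|\Sigma_X+\Sigma_Z|^{1/n} \geq |\Sigma_X|^{1/n} + |\Sigma_Z|^{1/n}$ gives $a_1+a_2\leq 1$ (with equality iff $\Sigma_X\propto\Sigma_Z$), so the constraint set appearing below is nonempty.

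Next I would set $s\triangleq I(\mathbf{X};U)/n\geq 0$ and $t\triangleq I(\mathbf{Y};U)/n\geq 0$, so that $I(\mathbf{X};U)-\lambda I(\mathbf{Y};U)=n(s-\lambda t)$ and the pair $(s,t)$ lies in $\{(s,t):s,t\geq0,\ 2^{-2t}\geq a_1 2^{-2s}+a_2\}$. Therefore $I(\mathbf{X};U)-\lambda I(\mathbf{Y};U)\geq n\cdot\inf\{s-\lambda t : s,t\geq0,\ 2^{-2t}\geq a_1 2^{-2s}+a_2\}$, and this last infimum is exactly what Lemma~\ref{calculusLemma} evaluates (it is the very calculation invoked in Lemma~\ref{OohamaExplicit}, only with the general parameters $a_1,a_2$ in place of $\rho^2,1-\rho^2$): it equals $\tfrac12\big[\log(a_1(\lambda-1)/a_2)-\lambda\log((\lambda-1)/(\lambda a_2))\big]$ when $\lambda\geq(a_1+a_2)/a_1$, and $\tfrac{\lambda}{2}\log(a_1+a_2)$ when $0\leq\lambda\leq(a_1+a_2)/a_1$. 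Multiplying by $n$ and substituting back the definitions of $a_1,a_2$ — using $a_2/a_1=|\Sigma_Z|^{1/n}/|\Sigma_X|^{1/n}=|\Sigma_X^{-1}\Sigma_Z|^{1/n}$ to rewrite the breakpoint as $\lambda=1+|\Sigma_X^{-1}\Sigma_Z|^{1/n}$, and $a_1+a_2=(|\Sigma_X|^{1/n}+|\Sigma_Z|^{1/n})/|\Sigma_X+\Sigma_Z|^{1/n}$ in the low-$\lambda$ branch — produces exactly \eqref{matrixLBa} and \eqref{matrixLB2a}.

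I do not expect a genuine obstacle here: all of the analytic content is in Lemma~\ref{lem:vecEPI} (already established) and in the appendix calculus lemma (already used for the scalar case), so what remains is essentially determinant bookkeeping to line the abstract optimum up with the two displayed expressions. The one point absent from the scalar story is that here $a_1+a_2<1$ unless $\Sigma_X\propto\Sigma_Z$, so the low-$\lambda$ branch of the optimum is the strictly negative constant $\tfrac{\lambda n}{2}\log(a_1+a_2)$ rather than $0$; this merely requires the calculus lemma in the form valid for arbitrary $a_1,a_2>0$ with $a_1+a_2\leq1$ (the same form needed, with $a_1+a_2=1$, for Lemma~\ref{OohamaExplicit}). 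Note also that the resulting bound need not be tight for every $\lambda$, since the data-processing inequality $I(\mathbf{Y};U)\leq I(\mathbf{X};U)$ was not imposed in the optimization — but a lower bound is all that is claimed.
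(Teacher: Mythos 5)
Your proposal is correct and follows essentially the same path as the paper: feed Lemma~\ref{lem:vecEPI} into Lemma~\ref{calculusLemma} with the identifications $a_1 = |\Sigma_X|^{1/n}/|\Sigma_X+\Sigma_Z|^{1/n}$ and $a_2 = |\Sigma_Z|^{1/n}/|\Sigma_X+\Sigma_Z|^{1/n}$, and invoke the Minkowski determinant inequality to verify $a_1+a_2\leq 1$. The paper states this identification in a single sentence; you supply the intermediate reduction to the $(s,t)$ optimization, which is a helpful expansion but not a different argument, and your observations about the strict inequality $a_1+a_2<1$ away from the proportional case and about not imposing data processing are both accurate and harmless.
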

\begin{proof}
The claim follows from Lemma \ref{calculusLemma} in Appendix \ref{app:Calculus} by identifying $a_1 \leftarrow \frac{|\Sigma_X|^{1/n}}{|\Sigma_X+\Sigma_Z|^{1/n}}$ and $a_2 \leftarrow 2^{-2 I(\mathbf{X};\mathbf{Y})/n} = \frac{|\Sigma_Z|^{1/n}}{|\Sigma_X+\Sigma_Z|^{1/n}}$.  The hypothesis that $a_1 + a_2 \leq 1$ is satisfied since the Minkowski determinant theorem \cite{marcus1992survey} asserts that $|\Sigma_X|^{1/n} + |\Sigma_Z|^{1/n} \leq |\Sigma_X+\Sigma_Z|^{1/n}$.
\end{proof}

Note that the lower bound \eqref{matrixLBa} is achieved if $U$ can be chosen such that $\mathbf{X}|\{U=u\} \sim N (\mu_u , \Sigma_{X|U})$ for each $u$, and $ \Sigma_{X|U}  = \alpha \Sigma_{Z}$.  In this case
\begin{align}
\frac{1}{n}I(\mathbf{X};U) - \frac{\lambda}{n} I(\mathbf{Y};U) &= \frac{1}{2n}\log\left( \frac{|\Sigma_{X}|}{|\Sigma_{X|U}|}\right) - \frac{\lambda}{2n}\log \left( \frac{|\Sigma_{X}+\Sigma_Z|}{|\Sigma_{X|U} + \Sigma_Z |} \right)   \\
&= \frac{1}{2n}\log\left( 
\frac{|\Sigma_{X}|}
{\alpha^n |\Sigma_Z|}
\right) 
- \frac{\lambda}{2n}\log \left( 
\frac{|\Sigma_{X}+\Sigma_Z|}
{(1+\alpha)^n |\Sigma_Z |}
 \right) \\
 &= \frac{1}{2}\left[\log\left(\frac{| \Sigma_X |^{1/n}(\lambda-1)}{|\Sigma_Z|^{1/n}}\right)-\lambda\log
\left(\frac{|\Sigma_X + \Sigma_Z|^{1/n}(\lambda-1)}{ |\Sigma_Z|^{1/n} \lambda }\right)\right],
\end{align}
where we set $\alpha = \frac{1}{\lambda-1}$ to arrive at the final equality.  The lower bound \eqref{matrixLB2a} is only attainable if $\Sigma_X$ and $\Sigma_Z$ are proportional.  In this case, the RHS of \eqref{matrixLB2a} is precisely zero.

\begin{lemma} \label{lem:FVecExplicit}
~

\begin{enumerate}
\item If $\lambda \geq  1 + |\Sigma_X^{-1} \Sigma_Z|^{1/n}$, then
\begin{align}
&\mathbf{F}(\lambda) \geq 
\frac{n}{2}\left[\log\left(\frac{|\Sigma_X |^{1/n}(\lambda-1)}{|\Sigma_Z|^{1/n}}\right)-\lambda\log
\left(\frac{| \Sigma_X + \Sigma_Z|^{1/n}(\lambda-1)}{ |\Sigma_Z|^{1/n} \lambda }\right)\right]. \label{matrixLB}
\end{align}
\item If $0 \leq \lambda \leq  1 + |\Sigma_X^{-1} \Sigma_Z|^{1/n}$, then
\begin{align}
&\mathbf{F}(\lambda) \geq -\frac{\lambda n}{2}\log \left( \frac{|\Sigma_{X}+\Sigma_Z|^{1/n}}{|\Sigma_{X}|^{1/n} + |\Sigma_{Z}|^{1/n}}\right). \label{matrixLB2}
\end{align}
\end{enumerate}
\end{lemma}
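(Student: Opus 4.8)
The strategy follows the proof of Lemma~\ref{lem:characterizeF}: after conditioning on $U$, the term $I(\mathbf{Y};V|U)-\lambda I(\mathbf{X};V|U)$ becomes a one-shot optimization over $V$, which we bound explicitly via the (conditional) entropy-power inequality, and then we average over $u$. The only genuinely new ingredient, compared with the scalar case, is a convexity/Jensen step needed to handle the regime $\lambda\le 1+|\Sigma_X^{-1}\Sigma_Z|^{1/n}$. First, by Corollary~\ref{existUV_XuGaussVector}, fix minimizers $U,V$ of \eqref{funLVector} for which $\mathbf{X}\,|\,\{U=u\}\sim N(\mu_u,\Sigma_u)$, with $\Sigma_u:=\Sigma_{X|U=u}$, for a.e.\ $u$. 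Then, conditioned on $\{U=u\}$, we have $\mathbf{Y}_u=\mathbf{X}_u+\mathbf{Z}$ with $\mathbf{Z}\sim N(0,\Sigma_Z)$ independent of $\mathbf{X}_u$ and the Markov chain $V-\mathbf{Y}_u-\mathbf{X}_u$; since $\mathbf{X},\mathbf{Y},\mathbf{X}_u,\mathbf{Y}_u$ are all Gaussian,
\begin{align*}
\mathbf{F}(\lambda)=\int\Big(\tfrac12\log\tfrac{|\Sigma_X|}{|\Sigma_u|}-\tfrac{\lambda}{2}\log\tfrac{|\Sigma_X+\Sigma_Z|}{|\Sigma_u+\Sigma_Z|}+I(\mathbf{Y};V|U=u)-\lambda I(\mathbf{X};V|U=u)\Big)\,dP_U(u),
\end{align*}
where each $I(\mathbf{Y};V|U=u)-\lambda I(\mathbf{X};V|U=u)$ is bounded below by the infimum of the inner functional over all $V'-\mathbf{Y}_u-\mathbf{X}_u$.

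For each fixed $u$, the conditional entropy-power inequality — the diagonalization argument used to prove Lemma~\ref{lem:vecEPI}, now applied to the pair $(\mathbf{X}_u,\mathbf{Y}_u)$ with the auxiliary $V$ attached to the noisier vector $\mathbf{Y}_u$ (a direct computation shows the coefficients are unchanged by this role reversal, since $\mathbf{Y}_u=\mathbf{X}_u+\mathbf{Z}$ with $\mathbf{Z}$ independent) — yields
\begin{align*}
2^{-2I(\mathbf{X}_u;V)/n}\ \ge\ \frac{|\Sigma_u|^{1/n}}{|\Sigma_u+\Sigma_Z|^{1/n}}\,2^{-2I(\mathbf{Y}_u;V)/n}+\frac{|\Sigma_Z|^{1/n}}{|\Sigma_u+\Sigma_Z|^{1/n}},
\end{align*}
the sum of the two coefficients being at most $1$ by the Minkowski determinant theorem. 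Consequently Lemma~\ref{lem:MatrixExplicit} — equivalently, Lemma~\ref{calculusLemma} with $a_1\leftarrow|\Sigma_u|^{1/n}/|\Sigma_u+\Sigma_Z|^{1/n}$ and $a_2\leftarrow|\Sigma_Z|^{1/n}/|\Sigma_u+\Sigma_Z|^{1/n}$ — lower-bounds $I(\mathbf{Y}_u;V)-\lambda I(\mathbf{X}_u;V)$ explicitly in terms of $|\Sigma_u|^{1/n}$ and $|\Sigma_u+\Sigma_Z|^{1/n}$.

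Substituting this bound into the displayed integrand, the $|\Sigma_u+\Sigma_Z|^{1/n}$ factors cancel against the $\log|\Sigma_u+\Sigma_Z|$ term, and one is left with integrand $\ge\Psi(t_u)$, where $t_u:=|\Sigma_u|^{1/n}$ and, writing $A:=|\Sigma_X|^{1/n}$, $B:=|\Sigma_X+\Sigma_Z|^{1/n}$, $c:=|\Sigma_Z|^{1/n}$,
\begin{align*}
\Psi(t):=\begin{cases}\dfrac n2\Big[\log\dfrac{A(\lambda-1)}{c}-\lambda\log\dfrac{B(\lambda-1)}{c\lambda}\Big],& t\ge\dfrac{c}{\lambda-1},\\[10pt]\dfrac n2\Big[\log A-\lambda\log B-\log t+\lambda\log(t+c)\Big],& t<\dfrac{c}{\lambda-1},\end{cases}
\end{align*}
the first branch being vacuous when $\lambda\le1$. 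One checks that the two branches agree at $t=c/(\lambda-1)$ and that $s\mapsto\Psi(e^s)$ is convex and non-increasing on $\mathbb{R}$: on the lower branch its derivative is $\tfrac n2\big(\lambda e^s/(e^s+c)-1\big)$, which is negative for $e^s<c/(\lambda-1)$ and equals $0$ at $e^s=c/(\lambda-1)$ (matching the zero slope of the constant upper branch), while its second derivative is $\tfrac n2\lambda c e^s/(e^s+c)^2\ge0$. Since $h(\mathbf{X}|U)\le h(\mathbf{X})$ gives $\EE[\log t_u]\le\log A$, Jensen's inequality together with monotonicity gives $\mathbf{F}(\lambda)\ge\EE[\Psi(e^{\log t_u})]\ge\Psi(e^{\EE[\log t_u]})\ge\Psi(A)$. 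If $\lambda\ge1+|\Sigma_X^{-1}\Sigma_Z|^{1/n}=1+c/A$ then $A\ge c/(\lambda-1)$, so $\Psi(A)$ equals the first branch, which is exactly the right-hand side of \eqref{matrixLB}; if $0\le\lambda\le1+c/A$ then $A\le c/(\lambda-1)$ (or $\lambda\le1$), so $\Psi(A)$ equals the second branch at $t=A$, namely $\tfrac{\lambda n}2\log\tfrac{A+c}{B}$, which is exactly the right-hand side of \eqref{matrixLB2}. This proves both parts.

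The main obstacle is the regime $\lambda\le1+|\Sigma_X^{-1}\Sigma_Z|^{1/n}$: a pointwise estimate does not suffice there, because $\Psi(c/(\lambda-1))<\Psi(A)$, so a minimizer placing $U=u$ on values with atypically large $|\Sigma_u|$ drives the integrand below the target constant; what rescues the argument is the convexity of $s\mapsto\Psi(e^s)$ combined with the entropy constraint $\EE[\log|\Sigma_u|]\le\log|\Sigma_X|$. (No such difficulty arose in the scalar proof only because the counterpart of \eqref{matrixLB2} there is identically $0$, so monotonicity alone closed that case.) A secondary technical point is justifying that the entropy-power inequality governing the inner $V$-optimization has the stated coefficients even though $V$ sits on the noisier side $\mathbf{Y}_u$ of the conditional channel $\mathbf{X}_u\to\mathbf{Y}_u$; this is immediate from the diagonalization in the proof of Lemma~\ref{lem:vecEPI}.
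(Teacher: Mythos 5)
Your proof is correct, and it actually gives more than the paper does: the paper's proof of Lemma~\ref{lem:FVecExplicit} explicitly treats only part~1 (it assumes $\lambda \ge 1+|\Sigma_X^{-1}\Sigma_Z|^{1/n}$ throughout and remarks that restricting to this regime is ``sufficient''), and it never states how part~2 is obtained. For part~1 the paper uses a strictly pointwise argument: for every $u$ the integrand --- what you call $\Psi(t_u)$ --- equals the desired constant when $t_u(\lambda-1)\ge|\Sigma_Z|^{1/n}$, and is bounded below by that same constant when $t_u(\lambda-1)<|\Sigma_Z|^{1/n}$ because the lower branch is decreasing in $t_u$ and agrees with the flat branch at the threshold; Jensen is never invoked. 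For part~2 the intended (but unwritten) argument is through $\mathbf{F}$ itself: $\mathbf{F}$ is concave, $\mathbf{F}(0)=0$, so $\mathbf{F}(\lambda)\ge(\lambda/\lambda^*)\mathbf{F}(\lambda^*)$ for $\lambda\in[0,\lambda^*]$ with $\lambda^*=1+|\Sigma_X^{-1}\Sigma_Z|^{1/n}$, and plugging the part-1 bound at $\lambda=\lambda^*$ reproduces exactly \eqref{matrixLB2}. Your route instead establishes convexity of $s\mapsto\Psi(e^s)$ and combines Jensen with the entropy constraint $\EE[\log|\Sigma_u|]\le\log|\Sigma_X|$ (from $h(\mathbf{X}|U)\le h(\mathbf{X})$), giving a unified direct proof of both parts at the level of the integrand. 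You correctly identify that a pointwise bound fails in the small-$\lambda$ regime precisely because $\Psi$ dips below $\Psi(A)$ on $t>A$, and you correctly note that the scalar case escaped this only because its small-$\lambda$ bound is identically zero. For part~1 your argument is heavier than the paper's (the pointwise bound $\Psi(t_u)\ge\Psi(\infty)$ already suffices there), but for part~2 it supplies a self-contained proof that the paper leaves implicit. Your computation of the reversed-channel EPI coefficients for $V-\mathbf{Y}_u-\mathbf{X}_u$ and the cancellation of the $|\Sigma_u+\Sigma_Z|$ terms both match the paper. One small presentational nit: your case boundary ``$t\ge c/(\lambda-1)$'' is meaningless for $\lambda\le 1$ (the threshold is negative or infinite); the cleaner statement is that the flat branch is invoked only when $\lambda>1$ \emph{and} $t\ge c/(\lambda-1)$, i.e.\ exactly when case~1 of Lemma~\ref{calculusLemma} applies with $a_2/a_1=c/t$, which is what you intend.
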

\begin{proof}
Similar to the scalar setting, it is sufficient to restrict our attention to the setting where $\lambda \geq  1 + |\Sigma_X^{-1} \Sigma_Z|^{1/n}$.  Therefore, we assume this throughout the proof.

Let $U,V$ be valid minimizers of \eqref{funLVector}, where $\mathbf{X}|\{U=u\}$ is Gaussian for a.e. $u$.  The existence of such $U,V$ is guaranteed by Corollary \ref{existUV_XuGaussVector}. By definition, we have
\begin{align}
\mathbf{F}(\lambda) &= I(\mathbf{X};U)- \lambda I(\mathbf{Y};U) + I(\mathbf{Y};V|U) - \lambda I(\mathbf{X};V|U) \\
&= \int  \Big(h(\mathbf{X})-h(\mathbf{X}|u) - \lambda(h(\mathbf{Y}) -h(\mathbf{Y}|U=u))  + I(\mathbf{Y};V|U=u) - \lambda I(\mathbf{X};V|U=u) \Big)dP_U(u) \label{MatrixIntegrand}
\end{align}
  
Let $\mathbf{X}_u$ and $\mathbf{Y}_u$ denote the random variables $\mathbf{X},\mathbf{Y}$ conditioned on $\{U=u\}$.  Suppose $\mathbf{X}_u \sim N(\mu_u, \Sigma_{X_u})$.  Then $\mathbf{X}_u,\mathbf{Y}_u$ are jointly normal, and we can write
$ \mathbf{X}_u =  B \mathbf{Y}_u + W$, where $W\sim N(0, \Sigma_{X_u }-\Sigma_{X_u Y_u}  \Sigma^{-1}_{ Y_u} \Sigma_{ Y_u X_u} )$ is independent of $\mathbf{Y}_u$, and $B = \Sigma_{X_u Y_u}  \Sigma^{-1}_{ Y_u}$.  Note that Markovity implies $\mathbf{Y}_u = \mathbf{X}_u + \mathbf{Z}$, so that $\Sigma_{X_u Y_u} = \Sigma_{Y_u X_u} = \Sigma_{X_u}$, which simplifies $\Sigma_{X_u Y_u}  \Sigma^{-1}_{ Y_u} \Sigma_{ Y_u X_u}$ to $\Sigma_{X_u}  \Sigma^{-1}_{ Y_u} \Sigma_{X_u}$, and also implies $\Sigma_{Y_u}-\Sigma_{X_u}=\Sigma_Z$.

Suppose $\Sigma_{X_u}$ is such that $\lambda \geq  1 + \frac{|\Sigma_{X_u }-\Sigma_{X_u }  \Sigma^{-1}_{ Y_u} \Sigma_{  X_u}|^{1/n}}{|\Sigma_{X_u }  \Sigma^{-1}_{ Y_u} \Sigma_{  X_u}|^{1/n}} = 1 + \frac{|\Sigma_Z|^{1/n}}{|\Sigma_{X_u}|^{1/n}}$.  Then, Lemma \ref{lem:MatrixExplicit} allows us to 
lower bound the integrand in \eqref{MatrixIntegrand} as:
\begin{align}
h(\mathbf{X}) - h(\mathbf{X}|u)& - \lambda(h(\mathbf{Y}) - h(\mathbf{Y}|u)) +  I(\mathbf{Y};V|U=u) - \lambda I(\mathbf{X};V|U=u) \notag\\
\geq& 
\frac{n}{2}\left[ \log\left( \frac{|\Sigma_X|^{1/n}}{|\Sigma_{X_u}|^{1/n}}\right) - \lambda \log\left( \frac{|\Sigma_Y|^{1/n}}{|\Sigma_{Y_u}|^{1/n}}\right)\right]\\
&+\frac{n}{2}\left[\log\left(\frac{|B \Sigma_{Y_u} B^T |^{1/n}(\lambda-1)}{|\Sigma_W |^{1/n}}\right)-\lambda\log
\left(\frac{|B \Sigma_{Y_u} B^T+  \Sigma_W |^{1/n}(\lambda-1)}{ | \Sigma_W |^{1/n} \lambda }\right)\right]\notag\\
=&\frac{n}{2}\left[ \log\left( \frac{|\Sigma_X|^{1/n}}{|\Sigma_{X_u}|^{1/n}}\right) - \lambda \log\left( \frac{|\Sigma_Y|^{1/n}}{|\Sigma_{Y_u}|^{1/n}}\right)\right]\\
&+\frac{n}{2}\left[\log\left(\frac{|\Sigma_{X_u }  \Sigma^{-1}_{ Y_u} \Sigma_{  X_u}|^{1/n}(\lambda-1)}{|\Sigma_{X_u }-\Sigma_{X_u }  \Sigma^{-1}_{ Y_u} \Sigma_{  X_u}|^{1/n}}\right)
-\lambda\log
\left(\frac{|\Sigma_{Xu}|^{1/n}(\lambda-1)}{ |\Sigma_{X_u }-\Sigma_{X_u }  \Sigma^{-1}_{ Y_u} \Sigma_{  X_u} |^{1/n} \lambda }\right)\right]\notag\\
=&\frac{n}{2}\left[ \log\left( \frac{|\Sigma_X|^{1/n}}{|\Sigma_{X_u}|^{1/n}}\right) - \lambda \log\left( \frac{|\Sigma_Y|^{1/n}}{|\Sigma_{Y_u}|^{1/n}}\right)\right]\\
&+\frac{n}{2}\left[\log\left(\frac{|\Sigma_{X_u } |^{1/n}(\lambda-1)}{|\Sigma_{Y_u }-\Sigma_{X_u } |^{1/n}}\right)
-\lambda\log
\left(\frac{|\Sigma_{Y_u}|^{1/n}(\lambda-1)}{ |\Sigma_{Y_u }-\Sigma_{X_u } |^{1/n} \lambda }\right)\right]\notag\\
=&\frac{n}{2}\left[
\log\left(
\frac{|\Sigma_X|^{1/n} (\lambda-1)}
{  |\Sigma_{Z }|^{1/n}}\right)
-\lambda\log\left(
\frac{|\Sigma_{X}+\Sigma_Z|^{1/n}(\lambda-1)}
{ |\Sigma_{Z} |^{1/n} \lambda }
\right)\right]. \label{finalDesired}
\end{align}
On the other hand, if $\Sigma_{X_u}$ is such that $0 \leq \lambda \leq 1 + \frac{|\Sigma_Z|^{1/n}}{|\Sigma_{X_u}|^{1/n}}$.  Then, Lemma \ref{lem:MatrixExplicit} allows us to 
lower bound the integrand in \eqref{MatrixIntegrand} as:
\begin{align}
h(\mathbf{X}) - h(\mathbf{X}|u)& - \lambda(h(\mathbf{Y}) - h(\mathbf{Y}|u)) +  I(\mathbf{Y};V|U=u) - \lambda I(\mathbf{X};V|U=u) \notag \\
\geq& 
\frac{n}{2}\left[ \log\left( \frac{|\Sigma_X|^{1/n}}{|\Sigma_{X_u}|^{1/n}}\right) - \lambda \log\left( \frac{|\Sigma_Y|^{1/n}}{|\Sigma_{Y_u}|^{1/n}}\right)\right]\\
&-\frac{\lambda n}{2}\log \left( \frac{|\Sigma_{X_u}|^{1/n}}{|\Sigma_{X_u}\Sigma_{Y_u}^{-1}\Sigma_{X_u}|^{1/n} + |\Sigma_{X_u} - \Sigma_{X_u}\Sigma_{Y_u}^{-1}\Sigma_{X_u}|^{1/n}}\right)\notag\\
=& 
\frac{n}{2}\left[ \log\left( \frac{|\Sigma_X|^{1/n}}{|\Sigma_{X_u}|^{1/n}}\right) 
- \lambda \log \left( \frac{|\Sigma_Y|^{1/n}}{|\Sigma_{X_u} |^{1/n} + |\Sigma_{Z}|^{1/n}}\right)\right]\\
\geq & 
\frac{n}{2}\left[ \log\left( \frac{|\Sigma_X|^{1/n}(\lambda-1) }{|\Sigma_{Z}|^{1/n}}\right) 
- \lambda \log \left( \frac{|\Sigma_Y|^{1/n}(\lambda-1)}{ |\Sigma_{Z}|^{1/n} \lambda }\right)\right],
\end{align}
where the last line follows since $ -\log\left({|\Sigma_{X_u}|^{1/n}}\right) +
 \lambda \log \left( {|\Sigma_{X_u} |^{1/n} + |\Sigma_{Z}|^{1/n}}\right)$ is a monotone decreasing function of $|\Sigma_{X_u} |^{1/n}$ provided  $\lambda \leq 1 + \frac{|\Sigma_Z|^{1/n}}{|\Sigma_{X_u}|^{1/n}}$.  Thus, setting $|\Sigma_{X_u} |^{1/n} = \frac{1}{\lambda-1}|\Sigma_{Z} |^{1/n}$ only weakens the inequality.
 \end{proof}
The combination of Lemmas \ref{lem:vecEPI}-\ref{lem:FVecExplicit} proves Theorem \ref{thm:vector}.
\section{Closing Remarks}\label{sec:conclusion}
The focus of this paper was on  the extremal result asserted by Theorem \ref{thm:vector}, and not on operational coding problems.  However, since the entropy-power-like inequality of Theorem \ref{thm:vector}  leads to what is arguably the simplest solution for the two-encoder quadratic Gaussian source coding problem (an archetypical problem in network information theory), we have little doubt that it will find other interesting applications.  We provided Theorem \ref{thm:Ellipsoid} as one such example.  As another example, Theorem \ref{thm:scalar} can be applied to show that jointly Gaussian auxiliaries exhaust the rate region for multiterminal source coding under logarithmic loss \cite{bib:CourtadeWeissman_LogLoss_TransIT2014} when the sources are Gaussian.  This leads to yet another  solution for the  two-encoder quadratic Gaussian source coding problem, and unifies the two problems under the paradigm of compression under logarithmic loss.

\section*{Acknowledgement}
The authors acknowledge several conversations with Tsachy Weissman, Kartik Venkat, and Vignesh  Ganapathi-Subramanian which led to deeper intuition and insight. %
The first author also wishes to acknowledge an inspiring  discussion with Chandra Nair that took place at the 2014 International Zurich Seminar on Communications following the presentation of \cite{bib:CourtadeIZS}.  This conversation and the following exchanges generated the spark which led to a successful proof of  Theorem \ref{thm:scalar}, a long-held conjecture of the first author that was first made public  on the \emph{Information Theory b-Log} in March, 2013 \cite{BLOG, ITSOCnewsletter}.

\appendices

\section{Proof of Theorem \ref{thm:Ellipsoid}}\label{app:ellipsoidProof}

\noindent \textbf{Converse Part:} 

Define $U=f_x(\vecX_1, \dots, \vecX_k)$, and $V=f_y(\vecY_1, \dots, \vecY_k)$, and suppose that $U,V$ are such that we can determine an ellipsoid $\mathcal{E}(A_x,b_x)$ having volume bounded by
\begin{align}
\Big( \operatorname{vol}(\mathcal{E}(A_x,b_x) ) \Big)^{1/n}= \frac{c_n^{1/n}}{|A_x|^{1/n}} \leq {c_n^{1/n}}{\sqrt{{n \nu_x}  |\Sigma_n|^{1/n}}} \label{VolSupp}
\end{align}
and containing the points $\{\mathbf{X}_i\}_{i=1}^k$ with probability at least $1-\epsilon_n$, where $\lim_{n\rightarrow\infty} \epsilon_n= 0$.

For $i\in \{1,\dots,k\}$, define the indicator random variable $E_i = \mathbf{1}_{\{\mathbf{X}_i\notin \mathcal{E}(A_x,b_x) \}}$, and note that 
\begin{align}
&\left| \mathbb{E}\left[ \left(A_x \mathbf{X}_i - \mathbb{E}[A_x\mathbf{X}_i|U,V,E_i] \right)\left(A_x \mathbf{X}_i - \mathbb{E}[A_x\mathbf{X}_i|U,V,E_i] \right)^T \Big| E_i=0\right]\right|^{1/n} \notag\\
&\leq \frac{1}{n}{\Tr\left( \mathbb{E}\left[ \left(A_x \mathbf{X}_i - \mathbb{E}[A_x\mathbf{X}_i|U,V,E_i] \right)\left(A_x \mathbf{X}_i - \mathbb{E}[A_x\mathbf{X}_i|U,V,E_i] \right)^T \Big| E_i=0 \right] \right)}\label{AGineq}\\
&= \frac{1}{n}\mathbb{E}\left[ \left\| A_x \mathbf{X}_i - \mathbb{E}[A_x\mathbf{X}_i|U,V,E_i] \right\|^2   \Big| E_i=0 \right]\\
& \leq \frac{1}{n}\mathbb{E}\left[ \left\| A_x \mathbf{X}_i -b_x\right\|^2  \Big| E_i=0 \right]\label{mmseEst}\\
&\leq \frac{1}{n},\label{EllipsHyp}
\end{align}
where \eqref{AGineq} follows from the inequality of arithmetic and geometric means, \eqref{mmseEst} follows since conditional expectation minimizes mean square error, and \eqref{EllipsHyp} follows since $\{E_i=0\}\Rightarrow \{\mathbf{X}_i \in \mathcal{E}(A_x,b_x)\}\Rightarrow  \{\left\| A_x \mathbf{X}_i -b_x\right\|\leq 1\}$.  Summarizing the above, we have established that, for each $i=1,\dots,k$, 
\begin{align}
\left| \mathbb{E}\left[ \left(A_x \mathbf{X}_i - \mathbb{E}[A_x\mathbf{X}_i|U,V,E_i] \right)\left(A_x \mathbf{X}_i - \mathbb{E}[A_x\mathbf{X}_i|U,V,E_i] \right)^T \Big| E_i=0\right]\right|^{1/n} \leq \frac{1}{n}. \label{maxEntLem}
\end{align}
Applying \eqref{maxEntLem} in conjunction with the maximum-entropy property of Gaussians, we have
\begin{align}
\frac{1}{n}I(\mathbf{X}_i;U,V|E_i=0) &= \frac{1}{n}h(\mathbf{X}_i|E_i=0)- \frac{1}{n}h(\mathbf{X}_i|U,V,E_i=0)\\
&=\frac{1}{n}h(\mathbf{X}_i|E_i=0)- \frac{1}{n}h\left(A_x \mathbf{X}_i|U,V,E_i=0\right) + \frac{1}{n}\log |A_x | \\
&\geq \frac{1}{n}h(\mathbf{X}_i|E_i=0)  - \frac{1}{2n}\log\left( (2\pi e)^n\right) - \frac{1}{2}\log\left( \frac{1}{n}\right) + \frac{1}{2n}\log \frac{1}{n^n \nu_x^n |\Sigma_n|} \label{keyVolIneq}\\
&= \frac{1}{n}\Big( h(\mathbf{X}_i|E_i=0)-h(\vecX_i)\Big) + \frac{1}{2}\log \frac{1}{\nu_x}.
\end{align}
Since $I(\vecX_i;U,V)\geq I(\vecX_i;U,V|E_i)-H(E_i)\geq \Pr\{E_i=0\}I(\vecX_i;U,V|E_i=0) - H(E_i)$ and $\Pr\{E_i=0\} \geq 1-\epsilon_n \rightarrow 1$, it follows by Lemma \ref{lem:hXE_hX} (see Appendix \ref{app:Calculus}) that, for any $\delta>0$,
\begin{align}
\frac{1}{n} I(\vecX_i;U,V) \geq \frac{1}{2}\log \frac{1}{\nu_x\sqrt{1+\delta}}
\end{align} 
for $n$ sufficiently large.  Since $\vecX_1, \dots, \vecX_k$ are mutually independent, we have
\begin{align}
\frac{1}{k n} I(\vecX_1, \dots, \vecX_k;U,V) \geq \frac{1}{k n} \sum_{i=1}^k I(\vecX_i;U,V) \geq \frac{1}{2}\log \frac{1}{\nu_x\sqrt{1+\delta}}\label{xCons}
\end{align}
for all $n$ sufficiently large.
By a symmetric argument, it also holds that
\begin{align}
\frac{1}{k n} I(\vecY_1, \dots, \vecY_k;U,V) \geq \frac{1}{2}\log \frac{1}{\nu_y\sqrt{1+\delta}}\label{yCons}
\end{align}
for  $n$ sufficiently large.  Thus, since 
\begin{align}
R_x+R_y \geq \frac{1}{k n} H(U,V) \geq \frac{1}{k n} I(\vecX_1,\vecY_1, \dots, \vecX_k,\vecY_k;U,V),
\end{align}
it follows from \eqref{eqnRD} and \eqref{RDineq} that 
\begin{align}
R_x + R_y \geq \frac{1}{2}\log \frac{(1-\rho^2)\beta((1+\delta)\nu_x \nu_y)}{2(1+\delta)\nu_x \nu_y}.
\end{align}
Since $\delta>0$ can be taken arbitrarily small, \eqref{RxyConstratint} must hold.
Likewise, since \eqref{introWhite} implies
\begin{align}
2^{\frac{2}{kn} \left(I(\mathbf{X}_1, \dots, \vecX_k;U|V)-I(\mathbf{X}_1, \dots, \vecX_k;U,V)\right)}  = 2^{-\frac{2}{kn} I(\mathbf{X}_1, \dots, \vecX_k;V)}  &\geq (1-\rho^2) + \rho^2 2^{-\frac{2}{kn} I(\mathbf{Y}_1, \dots, \vecY_k;V)},%
\end{align}
it follows that
\begin{align}
2^{\frac{2}{kn} \left(H(U)-I(\mathbf{X}_1, \dots, \vecX_k;U,V)\right)}  &\geq (1-\rho^2) + \rho^2 2^{-\frac{2}{kn} H(V)}.%
\end{align}
Therefore,   \eqref{xCons} also implies \eqref{RxConstratint}.  By a symmetric argument, we obtain \eqref{RyConstratint}.

\medskip
\noindent \textbf{Direct Part:}

Fix $\epsilon>0$, and assume \eqref{RxConstratint}-\eqref{RxyConstratint} are satisfied.  Further, diagonalize $\Sigma = U_{\Sigma} \Lambda U_{\Sigma}^T$ (throughout the proof, we suppress the explicit dependence of the covariance matrices on $n$ for convenience).

Suppose $(X^n,Y^n)$ is a sequence of independent pairs of random variables, where $(X_j,Y_j)$ are jointly normal with linear correlation $\rho$, and $X_i,Y_i$ each have unit variance.
For $n$ sufficiently large, the achievability result for the two-encoder quadratic Gaussian source coding problem (e.g., \cite[Theorem 1]{WagnerRateRegion2008}) implies that there exist functions $f_x : \mathbb{R}^{n} \rightarrow \{1,2,\dots,2^{n R_x }\}$ and $f_y : \mathbb{R}^{n} \rightarrow \{1,2,\dots,2^{n R_y }\}$
for which
\begin{align}
\Pr\left\{ \mathbb{E} \left[ \| X^n -   \mathbb{E}\left[ X^n |f_x(X^n),f_y(Y^n)\right]  \|^2\right] > n\nu_x  \right\} &< \epsilon\quad\mbox{and}  \\
\Pr\left\{ \mathbb{E} \left[ \| Y^n -   \mathbb{E}\left[  Y^n |f_x(X^n),f_y(Y^n)\right]  \|^2\right] > n\nu_y  \right\} &< \epsilon.
\end{align}
Therefore, the rates $(R_x,R_y)$ are sufficient to communicate $k$ ellipsoids of the form
\begin{align}
\mathcal{E}_i = \{ x : \|A_x x - b_i\| \leq 1 \}  \mbox{~~~~for $i=1,2,\dots, k$,}
\end{align}
where $A_x \triangleq \frac{1}{\sqrt{n \nu_x } }\Lambda^{-1/2}U_{\Sigma}^T$,  $b_i \triangleq   A_x \mathbb{E}\left[  {\mathbf{X}_i}|f_x({\mathbf{X}_i}),f_y({\mathbf{Y}_i})\right]$, and $\vecX_i \in \mathcal{E}_i$ with probability greater than $1-\epsilon$ for $i=1,2,\dots, k$.  This follows since the pair of vectors $(\Lambda^{-1/2}U_{\Sigma}^T \vecX_i, \Lambda^{-1/2}U_{\Sigma}^T \vecY_i)$ is equal in distribution to $(X^n,Y^n)$.

Let $\delta>0$ satisfy $\sqrt{\delta/\nu_x} < \epsilon$, and define $\tau = 1-2\sqrt{\delta}$, and $\gamma = (1-\delta)\tau$ for convenience.  Further, let $u_1, \dots, u_k$ be an orthonormal basis for the vector space spanned by $ b_1, \dots,  b_k$, and define 
\begin{align}
\matr{\mathbf{B_x}} \triangleq
\left(\tau I_n - \gamma \sum_{i=1}^k u_i u_i^T\right)A_x.
\end{align}
We remark that $\matr{\mathbf{B_x}}$ is a random matrix, and is a function of $\{f_x({\mathbf{X}_i}),f_y({\mathbf{Y}_i})\}_{i=1}^k$.  Note that 
\begin{align}
\matr{\mathbf{B_x}} \vecX_i &= \left(\tau I_n - \gamma \sum_{i=1}^k u_i u_i^T\right) A_x \vecX_i = \left(\tau I_n - \gamma \sum_{i=1}^k u_i u_i^T\right) \widetilde{\vecX}_i , 
\end{align}
where $\widetilde{\vecX}_i \triangleq A_x \vecX_i$ for $i=1,2, \dots, k$.  Define $\mathbf{Z}_i=\widetilde{\vecX}_i - b_i$, and continue with
\begin{align}
\matr{\mathbf{B_x}} \vecX_i 
= \left(\tau I_n - \gamma \sum_{j=1}^k u_j u_j^T\right) \widetilde{\vecX}_i 
&=\tau ( \widetilde{\vecX}_i - b_i ) + \tau b_i - \gamma \sum_{j=1}^k u_j u_j^T\left(\mathbf{Z}_i +b_i\right) \\
&=\tau ( \widetilde{\vecX}_i - b_i ) + (\tau-\gamma) b_i - \gamma \sum_{j=1}^k u_j u_j^T \mathbf{Z}_i,\label{orthProj}
\end{align}
where \eqref{orthProj} follows since $\sum_{j=1}^k u_j u_j^T$ is an orthogonal projection onto the space spanned by $b_1, \dots, b_k$.
Applying the triangle inequality, we can conclude
\begin{align}
\|\matr{\mathbf{B_x}} \vecX_i \| &\leq \tau \| \widetilde{\vecX}_i - b_i \| + (\tau-\gamma) \|b_i \| + \gamma \left\| \sum_{j=1}^k u_j u_j^T \mathbf{Z}_i \right\|\\
&= \tau \| \widetilde{\vecX}_i - b_i \| + (\tau-\gamma) \|b_i \| + \gamma \sqrt{\sum_{j=1}^k (u_j^T \mathbf{Z}_i)^2}\\
&\leq \tau \| \widetilde{\vecX}_i - b_i \| + \delta  \|b_i \| +  \sqrt{\sum_{j=1}^k (u_j^T \mathbf{Z}_i)^2}.
\end{align}
Using Jensen and Markov's inequalities, we can bound
\begin{align}
\Pr\left\{ \|b_i \|  \geq \frac{1}{\sqrt{\delta}} \right\} &= \Pr\left\{ \left\|\mathbb{E}\left[  {\widetilde{\mathbf{X}}_i}|f_x({\mathbf{X}_i}),f_y({\mathbf{Y}_i})\right] \right\|  \geq \frac{1}{\sqrt{\delta}} \right\}\\
&\leq  \Pr\left\{ \mathbb{E}\left[  \left\|{\widetilde{\mathbf{X}}_i} \right\| \Big| f_x({\mathbf{X}_i}),f_y({\mathbf{Y}_i})\right]  \geq \frac{1}{\sqrt{\delta}} \right\}\\
&\leq  \sqrt{\delta} \mathbb{E}\left[  \left\|{\widetilde{\mathbf{X}}_i} \right\|\right] \\
&\leq \sqrt{\delta\mathbb{E}\left[  \left\|{\widetilde{\mathbf{X}}_i} \right\|^2 \right] }\\
&= \sqrt{\frac{\delta}{\nu_x}}\label{usedXDist}\\
&\leq \epsilon,
\end{align}
where \eqref{usedXDist} follows since  $\widetilde{X}_i \sim N\left(0,\frac{1}{n\nu_x} I_n\right)$.

Next, since $b_j$ is the LLMSE estimator of $\widetilde{X}_i$ given $f_x({\mathbf{X}_i}),f_y({\mathbf{Y}_i})$ by construction, we have
\begin{align}
\mathbb{E} \left[  \sum_{j=1}^k (u_j^T \mathbf{Z}_i)^2 \right] &=\sum_{j=1}^k u_j^T\Sigma_{\mathbf{Z}_i} u_j \leq \frac{k}{n \nu_x},
\end{align}
where the  inequality follows since the center quantity is upper bounded by the $k$ largest eigenvalues of $\Sigma_{\mathbf{Z}_i}$, which are themselves upper bounded by the $k$ largest eigenvalues of $\Sigma_{\widetilde{\vecX}_i} = \frac{1}{n\nu_x}I_n$.  Therefore, proceeding with Markov's inequality, we have
\begin{align}
\Pr\left\{ \sum_{j=1}^k (u_j^T \mathbf{Z}_i)^2 \geq \delta \right\} \leq \frac{k }{\delta n \nu_x},
\end{align}
 which is upper-bounded by $\epsilon$ for $n$ sufficiently large.

Also by construction, $\| \widetilde{\vecX}_i - b_i \|  =  \| A_x {\vecX}_i - b_i \| \leq 1$ with probability greater than $1-\epsilon$.  Therefore for $n$ sufficiently large, we can conclude that
\begin{align}
\Pr \Big\{ \|\matr{\mathbf{B_x}} \vecX_i \| \leq 1 \Big\}  > 1-3\epsilon .%
\end{align} 
Finally, note that 
\begin{align}
\left| \matr{\mathbf{B_x}}\right| &= \left| A_x \right| \left| \tau I_n - \gamma \sum_{i=1}^k u_i u_i^T \right| =  \left| A_x\right| \tau^{n-k}(\tau-\gamma)^k,
\end{align}
and so
\begin{align}
\left| \matr{\mathbf{B_x}}\right| ^{1/n} = \tau \delta^{k/n} |A_x|^{1/n} = \frac{(1-2\sqrt{\delta}) \delta^{k/n}}{\sqrt{n \nu_x|\Sigma|^{1/n}}}.
\end{align}
Since $\delta$ can be taken arbitrarily small, a symmetric argument involving the $\vecY_i$'s completes the proof.

\section{Existence of Minimizers}\label{app:InfimaObtained}
We will require the following result on lower semicontinuity of relative entropy.
\begin{lemma} \cite[Lemma 1.4.3]{dupuis2011weak}\label{lem:DSemiCont}
Let $\mathcal{X}$ be a Polish space, and let $\mathcal{P}(\mathcal{X})$ denote the set of probability measures on $\mathcal{X}$.  The relative entropy $D(P \|Q)$ is a lower semicontinuous function of $(P,Q) \in \mathcal{P}(\mathcal{X})\times \mathcal{P}(\mathcal{X})$ with respect to the weak topology.
\end{lemma}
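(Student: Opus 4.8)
The plan is to prove lower semicontinuity by writing $D(\cdot\|\cdot)$ as a pointwise supremum of functionals that are jointly continuous for the (product) weak topology; a supremum of continuous functions is always lower semicontinuous, and such an argument accommodates the value $+\infty$ with no extra work. The representation I would use is the Donsker--Varadhan (Gibbs) variational formula
\begin{align}
D(P\|Q) = \sup_{\phi \in C_b(\mathcal{X})} \left\{ \int_{\mathcal{X}} \phi \, dP - \log \int_{\mathcal{X}} e^{\phi}\, dQ \right\}, \label{eq:DV}
\end{align}
where $C_b(\mathcal{X})$ denotes the bounded continuous real functions on $\mathcal{X}$; this is precisely \cite[Lemma 1.4.3]{dupuis2011weak}.

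To justify \eqref{eq:DV}, the inequality $\int \phi\, dP - \log\int e^{\phi}\, dQ \le D(P\|Q)$ for every $\phi \in C_b(\mathcal{X})$ is the Gibbs variational principle: when $P \ll Q$ it follows from Jensen's inequality applied to $t \mapsto t\log t$, and when $P \not\ll Q$ the right side is $+\infty$. For the reverse inequality one exhibits near-optimal $\phi$: if $P \ll Q$ with $f = dP/dQ$, take $\phi$ a bounded continuous function approximating $\log f$ in $L^1(P)$ (possible since $C_b(\mathcal{X})$ is dense in $L^1$ of a Borel probability measure on a Polish space), while if $P \not\ll Q$ one picks a Borel set $A$ with $P(A) > 0 = Q(A)$, underapproximates $\mathbf{1}_A$ by nonnegative continuous functions (using regularity of Borel measures on Polish spaces) and rescales to send the right side of \eqref{eq:DV} to $+\infty$. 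One may cite this outright or reproduce this short approximation argument; the Polish hypothesis is used only here.

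Granting \eqref{eq:DV}, fix $\phi \in C_b(\mathcal{X})$ and set $\Lambda_\phi(P,Q) = \int\phi\, dP - \log\int e^{\phi}\, dQ$. Since $\phi$ and $e^{\phi}$ are bounded and continuous, $P \mapsto \int\phi\, dP$ and $Q \mapsto \int e^{\phi}\, dQ$ are weakly continuous, and $\int e^{\phi}\, dQ \ge e^{-\|\phi\|_\infty} > 0$ so that composition with $\log$ preserves continuity; hence $\Lambda_\phi$ is continuous on $\mathcal{P}(\mathcal{X}) \times \mathcal{P}(\mathcal{X})$. Because $\mathcal{X}$ is Polish, $\mathcal{P}(\mathcal{X})$ is metrizable, so it suffices to check sequential lower semicontinuity: if $(P_m,Q_m) \to (P,Q)$ weakly, then for every $\phi$, $\liminf_m D(P_m\|Q_m) \ge \lim_m \Lambda_\phi(P_m,Q_m) = \Lambda_\phi(P,Q)$, and taking the supremum over $\phi \in C_b(\mathcal{X})$ gives $\liminf_m D(P_m\|Q_m) \ge D(P\|Q)$. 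The only real obstacle is the nontrivial direction of the variational identity \eqref{eq:DV} — the approximation of $\log(dP/dQ)$ (or of an indicator of a $Q$-null, $P$-nonnull set) by bounded continuous functions — which is exactly where the Polish assumption is invoked; once \eqref{eq:DV} is in hand, the semicontinuity is a one-line ``supremum of continuous maps'' argument using only the definition of weak convergence.
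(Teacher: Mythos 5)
Your proof is correct and is essentially the standard Donsker--Varadhan argument, which is precisely what the cited reference (Dupuis and Ellis, Lemma 1.4.3) uses; the paper itself merely cites this result without reproducing a proof. Representing $D(\cdot\|\cdot)$ as a supremum over $\phi\in C_b(\mathcal{X})$ of jointly weakly continuous functionals and invoking lower semicontinuity of suprema of continuous maps is exactly the right route.
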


\begin{lemma}\label{lem:InfAttained}
The infimum in \eqref{funL} is attained.
\end{lemma}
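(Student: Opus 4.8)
The plan is to run a direct compactness argument on a minimizing sequence for \eqref{funL}, using the lower semicontinuity of relative entropy from Lemma~\ref{lem:DSemiCont}. First I would clear away the easy points. For every $\lambda$ the data processing inequality gives $I(Y;U)\le I(X;Y)$ and $I(X;V|U)=I(X;V)-I(U;V)\le I(X;Y)$, so the functional in \eqref{funL} is bounded below by $-2\lambda I(X;Y)$; in particular $F(\lambda)>-\infty$. Also, when $0\le\lambda\le 1$, data processing gives $I(X;U)-\lambda I(Y;U)\ge 0$ and $I(Y;V|U)-\lambda I(X;V|U)\ge 0$ (conditionally on $U$, $X-Y-V$ is a Markov chain, so $I(X;V|U)\le I(Y;V|U)$), so the functional is nonnegative and vanishes at degenerate $U,V$; hence $F(\lambda)=0$ is attained. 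So I may assume $\lambda>1$.

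Since every mutual information in \eqref{funL} is invariant under one-to-one measurable relabelings of $U$ and of $V$, I would assume $U,V\in[0,1]$. Pick $U_n-X-Y-V_n$ with functional value $m_n\to F(\lambda)$; then $I(X;U_n)+I(Y;V_n|U_n)=m_n+\lambda\bigl(I(Y;U_n)+I(X;V_n|U_n)\bigr)\le m_n+2\lambda I(X;Y)$, so all four mutual informations stay bounded along the sequence. The joint laws $P_n=P_{XYU_nV_n}$ on $\mathbb{R}^2\times[0,1]^2$ are tight (the $(X,Y)$-marginal is the fixed Gaussian one and $[0,1]^2$ is compact), so a subsequence converges weakly, $P_n\Rightarrow P^{\star}$, with $P^{\star}$ retaining the prescribed $(X,Y)$-marginal.

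Two facts must then be verified for $P^{\star}$, and for both the decisive point is that $X\to Y$ is the \emph{fixed} Gaussian channel $Y=\rho X+Z$, $Z\sim N(0,1-\rho^2)$ independent of $(U_n,X_n)$. (i) The Markov chain survives: $P_{U_nX_nY_n}$ is the image of $P_{U_nX_n}\otimes P_Z$ under the fixed continuous map $(u,x,z)\mapsto(u,x,\rho x+z)$, so the limit satisfies $U^{\star}-X^{\star}-Y^{\star}$ with $(U^{\star},X^{\star})\perp Z$; for the $V$-side one uses that the conditional law $y\mapsto P_{V_n\mid Y_n=y}$ enters the computation of joint moments only after being convolved with a Gaussian density (because it is integrated against $P_{U_nX_n}$ through $Y_n=\rho X_n+Z$), which produces a uniformly bounded, uniformly Lipschitz family; an Arzel\`a--Ascoli argument together with the weak convergence of $P_{Y_nV_n}$ then yields $V^{\star}\perp(U^{\star},X^{\star})\mid Y^{\star}$, completing the chain $U^{\star}-X^{\star}-Y^{\star}-V^{\star}$. (ii) The functional does not jump up in the limit: using the Markov relations and $I(Y;U)=I(X;Y)-I(X;Y|U)$, $I(X;V)=I(X;Y)-I(X;Y|V)$, the functional can be rewritten, for $\lambda\ge 1$, as
\[
I(X;U|Y)+I(Y;V|X)+(\lambda-1)\bigl(I(U;V)+I(X;Y|U)+I(X;Y|V)\bigr)-2(\lambda-1)I(X;Y),
\]
a combination of mutual informations with \emph{nonnegative} coefficients plus a constant. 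Here $I(X;U|Y)$, $I(Y;V|X)$ are relative entropies whose reference measures are built from the fixed Gaussian kernels $P_{X|Y}$, $P_{Y|X}$ and from weakly continuous marginals, and $I(U;V)$ is a relative entropy against its (weakly continuous) product marginal, so all three are lower semicontinuous along the sequence by Lemma~\ref{lem:DSemiCont}; and $I(X;Y|U)=h(Y|U)-h(Z)$, $I(X;Y|V)=h(X|V)-h(\tilde Z)$ (with $\tilde Z\sim N(0,1-\rho^2)$) are in fact continuous along the sequence, because the Gaussian smoothing makes $P_{XU}\mapsto I(Y;U)$ and $P_{YV}\mapsto I(X;V)$ weakly continuous on the relevant families of smoothed distributions. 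Hence the functional is lower semicontinuous at $P^{\star}$, so its value there is $\le F(\lambda)$; since $P^{\star}$ is feasible, the reverse inequality is automatic, and the infimum is attained.

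The main obstacle is exactly step (ii): the functional is \emph{not} lower semicontinuous term by term, because of the subtracted quantities $\lambda I(Y;U)$ and $\lambda I(X;V|U)$. The rewriting above isolates the difficulty into the two conditional-entropy terms $h(Y|U)$ and $h(X|V)$, and the real content of the argument is that the regularizing effect of the fixed Gaussian channel — convolution with a Gaussian density — makes these continuous under weak convergence; the same mechanism is what lets the Markov constraint be carried through the limit in step (i).
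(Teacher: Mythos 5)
Your overall strategy — restrict to $\lambda>1$, take a minimizing sequence, use the relabeling trick to pin $U,V$ in $[0,1]$, invoke tightness/Prohorov to extract a weak limit, and then rewrite the functional as a nonnegative combination of mutual-information terms plus a constant so that lower semicontinuity (Lemma~\ref{lem:DSemiCont}) closes the loop — is exactly the paper's approach, and your rewriting
\[
I(X;U|Y)+I(Y;V|X)+(\lambda-1)\bigl(I(U;V)+I(X;Y|U)+I(X;Y|V)\bigr)-2(\lambda-1)I(X;Y)
\]
is the same decomposition the paper uses (up to the additive constant $2I(X;Y)$). However, the two places where you explicitly flag ``the real content'' are asserted rather than proved, and this is where your proposal has genuine gaps.

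First, the preservation of the Markov chain through the weak limit. Your sketch for the $V$-side (``Arzel\`a--Ascoli argument together with the weak convergence of $P_{Y_nV_n}$'') is not an argument: weak convergence of a joint law gives you essentially no control of the conditional kernels $P_{V_n\mid Y_n}$, and the claimed equicontinuity is not substantiated. The paper sidesteps this entirely with a one-line information-theoretic argument: lower semicontinuity of $I(U,X;Y,V)$ combined with the constant value $I(U_n,X_n;Y_n,V_n)=I(X;Y)$ forces $I(U^{\star},X^{\star};Y^{\star},V^{\star})=I(X^{\star};Y^{\star})$, which is exactly the Markov condition $U^{\star}-X^{\star}-Y^{\star}-V^{\star}$. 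You should replace the Arzel\`a--Ascoli sketch with this.

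Second, the semicontinuity of $I(X;Y|U)$ and $I(X;Y|V)$. You claim these are in fact \emph{continuous} along the sequence because of Gaussian smoothing. What is needed is lower semicontinuity of $I(X;Y|U)=h(Y|U)-h(Z)$, i.e.\ upper semicontinuity of $I(Y;U)$ under weak convergence of $P_{Y_nU_n}$; this is the nontrivial direction, since mutual information is generically only \emph{lower} semicontinuous under weak convergence. The assertion that ``$P_{XU}\mapsto I(Y;U)$ is weakly continuous on the relevant families of smoothed distributions'' is plausible but is stronger than needed and is not proved; weak convergence of $P_{X_nU_n}$ gives no pointwise control on the disintegrations $P_{X_n\mid U_n=u}$, and you cannot simply pass the Gaussian convolution through the limit without an argument. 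The paper establishes the required $\liminf_n I(X_n;Y_n|U_n)\ge I(X^{\star};Y^{\star}|U^{\star})$ via a variational representation of conditional mutual information (writing it as an infimum of relative entropies whose $\liminf$ can be controlled term by term using Lemma~\ref{lem:DSemiCont}), not via continuity of $h(Y|U)$. Your decomposition is right and only lower semicontinuity of these two conditional terms is missing, so if you substitute the variational argument (or supply a proof of the upper semicontinuity of $I(Y;U)$ under Gaussian-channel smoothing), the proof goes through.
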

\begin{proof}
We can assume $\lambda>1$, else the data processing inequality implies that $F(\lambda)\geq 0$, which is easily attained.
First, we show that if $\{U_n,X_n,Y_n,V_n\}_{n\geq 1}$ is a sequence of candidate minimizers\footnote{i.e., $(X_n,Y_n)$ equal $(X,Y)$ in distribution, and $U_n-X_n-Y_n-V_n$ for each $n$.} of \eqref{funL} which converge weakly to $(U^*,X^*,Y^*,V^*)$, then $(U^*,X^*,Y^*,V^*)$ is also a candidate minimizer.  To see this, note that Lemma \ref{lem:DSemiCont} asserts that relative entropy is lower semicontinuous with respect to the weak topology, and hence
\begin{align}
0 =\lim_{n\rightarrow \infty} D(P_{X_n Y_n} \| P_{XY}) \geq D(P_{X^* Y^*} \| P_{XY}) \geq 0.
\end{align}
Therefore, we see that $(X^*,Y^*)$ must be equal to $(X,Y)$ in distribution.  Similarly, by recognizing that (unconditional) mutual information is a relative entropy, lower semicontinuity also yields
\begin{align}
I(X;Y) =\lim_{n\rightarrow \infty} I(U_n,X_n;Y_n,V_n) \geq I(U^*,X^*;Y^*,V^*) \geq I(X^*;Y^*) = I(X;Y).
\end{align}
Hence, we can conclude that $U^*-X^*-Y^*-V^*$, and therefore $(U^*,X^*,Y^*,V^*)$ is a candidate minimizer as desired.

Suppose again that $\{U_n,X_n,Y_n,V_n\}_{n\geq 1}$ is a sequence of candidate minimizers that converges weakly to $(U^*,X^*,Y^*,V^*)$.  As  established previously, $(X_n,Y_n) = (X^*,Y^*) = (X,Y)$ in distribution.  Fix an arbitrary conditional distribution $P_{Z | Y}$.  Let $P_{Y} \rightarrow P_{Z | Y} \rightarrow P_{Z}$.
By lower semicontinuity of relative entropy, we have:
\begin{align}
\liminf_{n\rightarrow \infty} D( P_{X_n Y_n U_n} P_{Z} \| P_{X_n U_n} P_{Y Z}) \geq 
D( P_{X^* Y^* U^*} P_{Z} \| P_{X^* U^*} P_{Y Z})
\end{align}
Thus, there exists $\delta(n)\rightarrow 0$ such that 
\begin{align}
D( P_{X_n Y_n U_n} P_{Z} \| P_{X_n U_n} P_{Y Z}) \geq 
D( P_{X^* Y^* U^*} P_{Z} \| P_{X^* U^*} P_{Y Z}) -\delta(n),
\end{align}
which implies
\begin{align}
\inf_{P_{Z|Y}} D( P_{X_n Y_n U_n} P_{Z} \| P_{X_n U_n} P_{Y Z}) \geq 
\inf_{P_{Z|Y}} D( P_{X^* Y^* U^*} P_{Z} \| P_{X^* U^*} P_{Y Z}) -\delta(n).
\end{align}
By the variational representation of mutual information, the infima on the LHS and RHS are attained when $P_{Z|Y} = P_{U_n|Y}$ and $P_{Z|Y} = P_{U^*|Y}$, respectively.  Therefore,
\begin{align}
I(X_n ; Y_n | U_n) \geq  I(X^* ; Y^* | U^* ) -\delta(n).  
\end{align}
Since $\delta(n)$ vanishes, we can conclude that 
\begin{align}
\liminf_{n\rightarrow \infty}   I(X_n ; Y_n | U_n) &\geq  I(X^* ; Y^* | U^* ). \label{semi1}%
\end{align}

Now, we can add $2\lambda I(X;Y)$ to the functional being considered in \eqref{funL} without changing the nature of the optimization problem.  Therefore, we aim to show that the infimum of the functional 
\begin{align}
 &I(X;U)-\lambda I(Y;U) + I(Y;V|U)-\lambda I(X;V|U) + 2 \lambda I(X;Y)\\
 =&I(X;U) +  \lambda I(X;Y|U) + I(Y;V) + \lambda I(X;Y|V) + (\lambda-1) I(U;V).
\end{align}
is attained.  By our previous observation, if $\{U_n,X_n,Y_n,V_n\}_{n\geq 1}$ is a sequence of candidate minimizers which approach the infimum of \eqref{funL} and converge weakly to $(U^*,X^*,Y^*,V^*)$, then we can apply lower semicontinuity again together with \eqref{semi1} and its symmetric variant $\liminf_{n\rightarrow \infty}   I(X_n ; Y_n | V_n) \geq  I(X^* ; Y^* | V^* )$ to obtain
\begin{align}
F(\lambda) + 2 \lambda I(X;Y) &= \lim_{n\rightarrow \infty} I(X_n;U_n) +  \lambda I(X_n;Y_n|U_n) + I(Y_n;V_n) + \lambda I(X_n;Y_n|V_n) + (\lambda-1) I(U_n;V_n) \notag\\
&\geq I(X^*;U^*) +  \lambda I(X^*;Y^*|U^*) + I(Y^*;V^*) + \lambda I(X^*;Y^*|V^*) + (\lambda-1) I(U^*;V^*)\\
&\geq F(\lambda) + 2 \lambda I(X;Y),
\end{align}
implying equality throughout, and optimality of $(U^*,X^*,Y^*,V^*)$.

Thus, we only need to show that if  $\{U_n,X_n,Y_n,V_n\}_{n\geq 1}$ is a sequence of candidate minimizers, then there exists a subsequence $\{U_{n_k},X_{n_k},Y_{n_k},V_{n_k}\}_{k\geq 1}$ which converges weakly to some $(U^*,X^*,Y^*,V^*)$.  Since mutual information is invariant under one-to-one transformations of support, we can assume without loss of generality that $U_n,V_n$ are each supported on the interval $[0,1]$ for all $n$.  Recalling Prohorov's theorem \cite[Theorem 3.9.2]{durrett2010probability}, we only need to show that the sequence of measures $P_n$ is tight, where $P_n$ is the joint distribution of $(U_n,X_n,Y_n,V_n)$.  To this end, note that for any $\epsilon>0$, we can choose $t$ sufficiently large so that
\begin{align}
P_n \left\{ (U_n,X_n,Y_n,V_n) \notin [-t,t]^4\right\} = P_n \left\{ (X_n,Y_n) \notin [-t,t]^2 \right\} < \epsilon.
\end{align}
Thus, the claim is proved.
\end{proof}

\section{Auxiliary Lemmas}\label{app:Calculus}

\begin{lemma}\label{lem:hXE_hX}
For $n= 1, 2, \dots$, suppose $X^n \sim N(0,\Sigma_n)$, where $\Sigma_n \in \mathbb{R}^{n\times n}$ is positive definite, and let $E_n\in\{0,1\}$ be correlated with $X^n$.  If $\lim_{n\rightarrow \infty} \Pr\{E_n = 0\} = 1$, then
\begin{align}
\lim_{n\rightarrow \infty } \frac{1}{n}\Big( h(X^n|E_n=0) - h(X^n) \Big) = 0. \label{normalizeByn}
\end{align}
\end{lemma}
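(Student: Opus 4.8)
The plan is to reduce to a standard Gaussian $X^n$ and then sandwich $\frac{1}{n}h(X^n|E_n=0)$ between $\frac{1}{n}h(X^n)$ plus or minus terms that vanish as $n\to\infty$. First I would observe that the quantity $h(X^n|E_n=0)-h(X^n)$ is invariant under an invertible linear change of variables $X^n\mapsto AX^n$, since both differential entropies shift by $\log|\det A|$ while $\Pr\{E_n=0\}$ is unchanged. Taking $A=\Sigma_n^{-1/2}$, it therefore suffices to treat the case $\Sigma_n=I_n$, so that $h(X^n)=\frac{n}{2}\log(2\pi e)$ and $\mathbb{E}\|X^n\|^2=n$. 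I would also dispose of the degenerate case $\Pr\{E_n=0\}=1$ (where the difference is $0$) and write $p_n\triangleq\Pr\{E_n=0\}\in(0,1)$, recalling that $p_n\to1$.

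For the upper bound I would bound the conditional covariance by $\Tr\Sigma_{X^n|E_n=0}\le\mathbb{E}[\|X^n\|^2\mid E_n=0]\le\mathbb{E}\|X^n\|^2/p_n=n/p_n$, then invoke the Gaussian maximum-entropy inequality together with the arithmetic--geometric mean inequality on the eigenvalues of $\Sigma_{X^n|E_n=0}$ to obtain $h(X^n|E_n=0)\le\frac{n}{2}\log(2\pi e/p_n)$. This yields $\frac{1}{n}\big(h(X^n|E_n=0)-h(X^n)\big)\le\frac{1}{2}\log(1/p_n)\to0$.

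For the matching lower bound I would start from $h(X^n)=h(X^n|E_n)+I(X^n;E_n)\le h(X^n|E_n)+H(E_n)$, expand $h(X^n|E_n)=p_n\,h(X^n|E_n=0)+(1-p_n)\,h(X^n|E_n=1)$, and control $h(X^n|E_n=1)$ by the same max-entropy argument (now $\Tr\Sigma_{X^n|E_n=1}\le n/(1-p_n)$, hence $h(X^n|E_n=1)\le\frac{n}{2}\log\!\big(2\pi e/(1-p_n)\big)$). Rearranging gives
\begin{align}
p_n\,h(X^n|E_n=0)\ \geq\ h(X^n)-H(E_n)-(1-p_n)\,\frac{n}{2}\log\!\big(2\pi e/(1-p_n)\big).
\end{align}
Dividing by $n$ and using that $H(E_n)$ is bounded, that $\frac{1}{n}h(X^n)=\frac{1}{2}\log(2\pi e)$, and the elementary fact $(1-p_n)\log\!\big(1/(1-p_n)\big)\to0$, I would conclude $\liminf_n\frac{p_n}{n}h(X^n|E_n=0)\ge\frac{1}{2}\log(2\pi e)$. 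Since the upper bound and the displayed inequality together show $\frac{1}{n}h(X^n|E_n=0)$ is bounded (above and below) for large $n$, and $p_n\to1$, this upgrades to $\liminf_n\frac{1}{n}h(X^n|E_n=0)\ge\frac{1}{2}\log(2\pi e)=\frac{1}{n}h(X^n)$, which combined with the upper bound proves the claim.

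The only delicate point --- the main obstacle --- is the term coming from the low-probability event $\{E_n=1\}$, on which the conditional differential entropy could tend to $-\infty$. But this term is needed only as an \emph{upper} bound on $h(X^n|E_n=1)$ (used in turn to lower-bound $h(X^n|E_n=0)$), so the one-sided maximum-entropy estimate suffices, and the prefactor $1-p_n$ together with $(1-p_n)\log\!\big(1/(1-p_n)\big)\to0$ makes its contribution vanish in the limit.
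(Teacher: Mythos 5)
Your proof is correct, and it takes a genuinely different route from the paper's. The paper's argument is based on the exact decomposition
\begin{align}
\frac{1}{n}\Big(h(X^n|E_n=0)-h(X^n)\Big) = -\frac{1}{n}D\big(P_{X^n|E_n=0}\,\|\,P_{X^n}\big) + \tfrac{1}{2}\big(\mathbb{E}[W_n]-\mathbb{E}[Z_n]\big),
\end{align}
where $Z_n=\tfrac{1}{n}X^{nT}\Sigma_n^{-1}X^n$ and $W_n$ is its conditional version; it then shows the relative-entropy term vanishes via $I(X^n;E_n)\le H(E_n)$, and kills the second-moment term by establishing $W_n\to1$ in probability (Pinsker plus WLLN) and upgrading this to $\mathbb{E}[W_n]\to1$ by a uniform-integrability argument. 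Your route instead sandwiches $\frac{1}{n}h(X^n|E_n=0)$ directly: the upper bound comes from the trace inequality $\Tr\Sigma_{X^n|E_n=0}\le n/p_n$, AM--GM, and Gaussian maximum entropy; the lower bound comes from the chain rule $h(X^n)=h(X^n|E_n)+I(X^n;E_n)$, the same max-entropy device applied to the rare branch $E_n=1$, and the elementary limit $(1-p_n)\log\frac{1}{1-p_n}\to0$. Your version is more elementary — it avoids Pinsker, the WLLN, and the uniform-integrability step — while the paper's yields a clean identity separating the relative-entropy and moment contributions, which makes the mechanism (and the sharpness discussion that follows the lemma) more transparent. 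One small point worth making explicit in your write-up: the finiteness of $h(X^n|E_n=1)$ is needed for the rearrangement after the chain rule, and it does follow — the max-entropy bound caps it above, and $h(X^n|E_n)=h(X^n)-I(X^n;E_n)$ being finite then forces both conditional entropies to be finite — but this deserves a sentence.
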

\begin{proof}
For the proof, we suppress the explicit dependence of $X^n$ and $\Sigma_n$ on $n$, and simply write $\vecX$ and $\Sigma$.
Note that 
\begin{align}
H(E_n)\geq I(\vecX;E_n) &= \Pr\{E_n=0\}D(P_{\vecX|E_n=0} \| P_{\vecX} ) + \Pr\{E_n=1\}D(P_{\vecX|E_n=1} \| P_{\vecX} ) \\
&\geq \Pr\{E_n=0\}D(P_{\vecX|E_n=0} \| P_{\vecX} ),
\end{align}
and therefore $D(P_{\vecX|E_n=0} \| P_{\vecX} )\rightarrow 0$.  Define $Z_n = \frac{1}{n}\vecX^T \Sigma^{-1} \vecX$, and $W_n = \frac{1}{n}\vecX^T \Sigma^{-1} \vecX |\{E_n=0\}$.  By the data processing theorem for relative entropy, $D(P_{W_n} \| P_{Z_n} )\rightarrow 0$ as well.  Thus, for any $\varepsilon>0$, Pinsker's inequality and the WLLN together imply
\begin{align}
\Pr\{ |W_n-1| \geq \varepsilon \} &= \Pr\{ |W_n-1| \geq \varepsilon \} - \Pr\{ |Z_n-1| \geq \varepsilon \} + \Pr\{ |Z_n-1| \geq \varepsilon \}\\
&\leq \| P_{W_n}-P_{Z_n} \|_{TV} + \Pr\{ |Z_n-1| \geq \varepsilon \}\\
&\leq \varepsilon
\end{align}
for all $n$ sufficiently large, establishing that $W_n\rightarrow 1$ in probability.  For any function $f : \mathbb{R}^n\rightarrow \mathbb{R}$%
\begin{align}
 \mathbb{E}\left[ \vecX^T \Sigma^{-1} \vecX  f(\vecX) \right] = \Pr\{E_n=0\} \mathbb{E}\left[ \vecX^T \Sigma^{-1} \vecX  f(\vecX) | E_n=0 \right]
 +  \Pr\{E_n=1\} \mathbb{E}\left[ \vecX^T \Sigma^{-1} \vecX  f(\vecX) | E_n=1 \right], \notag
\end{align}
so it follows that 
\begin{align}
\Pr\{E_n = 0\} \mathbb{E}\left[ W_n 1_{\{W_n \geq K\}}\right] \leq \mathbb{E}\left[ Z_n 1_{\{Z_n \geq K\}}\right] \label{unifIntegra}
\end{align}
by non-negativity of $W_n$ and $Z_n$.  The Cauchy-Schwarz and Markov inequalities together imply
\begin{align}
\left| \mathbb{E}\left[ Z_n 1_{\{Z_n \geq K\}}\right] \right|^2 \leq \mathbb{E}\left[ Z_n^2 \right] \Pr\{Z_n\geq K\} =\frac{n+2}{n} \Pr\{Z_n\geq K\}\leq \frac{3}{K}, 
\end{align}
and therefore \eqref{unifIntegra} implies that $\{W_n\}_{n\geq 1}$ is uniformly integrable.  It follows that $\mathbb{E}[W_n]\rightarrow 1 = \mathbb{E}[Z_n]$ (e.g., \cite[Theorem 5.5.2]{durrett2010probability}).

To conclude, we observe that
\begin{align}
\frac{1}{n}\Big( h(\vecX|E_n=0) - h(\vecX) \Big)
&= -\frac{1}{n}D(P_{\vecX|E_n=0} || P_\vecX ) + \frac{1}{n}\int_{\mathbb{R}^n} \Big( P_\vecX(\mathbf{x}) - P_{\vecX|E_n=0}(\mathbf{x}) \Big) \log(P_\vecX(\mathbf{x})) d\mathbf{x} \\
&= -\frac{1}{n}D(P_{\vecX|E_n=0} || P_\vecX ) -\frac{1}{2n} \mathbb{E}\left[ \vecX^T \Sigma^{-1} \vecX  \right] + \frac{1}{2n} \mathbb{E}\left[ \vecX^T \Sigma^{-1} \vecX | E_n=0 \right] \\
&= -\frac{1}{n}D(P_{\vecX|E_n=0} || P_\vecX ) +\frac{1}{2}\left(\mathbb{E}[W_n] - \mathbb{E}[Z_n]\right),
\end{align}
which completes the proof.
\end{proof}

Lemma \ref{lem:hXE_hX} can be viewed as a regularity property enjoyed by Gaussian vectors.  Though not needed elsewhere in this paper, it is interesting to note that Lemma \ref{lem:hXE_hX} is sharp in the following sense:  
\begin{proposition}
For $n= 1, 2, \dots$, let $X^n \sim N(0,\Sigma_n)$, where $\Sigma_n \in \mathbb{R}^{n\times n}$ is positive definite.  For any function $g(n)\rightarrow \infty$ arbitrarily slowly, there exists a sequence of random variables $E_n\in\{0,1\}$ such that $\lim_{n\rightarrow \infty} \Pr\{E_n = 0\} = 1$ and
\begin{align}
\lim_{n\rightarrow \infty } \frac{g(n)}{n}\Big( h(X^n|E_n=0) - h(X^n) \Big) = \infty.
\end{align}
In particular, the normalization by $1/n$ in \eqref{normalizeByn} is essential.
\end{proposition}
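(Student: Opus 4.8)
The plan is to reduce to the isotropic case and then carve out a low‑probability region near the origin, so that conditioning on $E_n=0$ slightly inflates the typical radius of $X^n$ and hence its differential entropy. First I would use that $h(X^n\mid E_n=0)-h(X^n)$ is unchanged under the invertible linear map $X^n\mapsto \Sigma_n^{-1/2}X^n$, so it suffices to treat $X^n\sim N(0,I_n)$. For a deterministic $E_n=\mathbf{1}_{A_n}(X^n)$ with $q_n=\Pr\{E_n=1\}$, the same bookkeeping used in the proof of Lemma~\ref{lem:hXE_hX} (now with $\log p_{X^n}(x)=-\tfrac n2\log(2\pi)-\tfrac12\|x\|^2$) gives the exact identity
\begin{align}
h(X^n\mid E_n=0)-h(X^n)=\log(1-q_n)+\frac12\Big(\EE[\|X^n\|^2\mid E_n=0]-n\Big),
\end{align}
and the tower rule turns the bracket into $\frac{q_n}{1-q_n}\big(n-\EE[\|X^n\|^2\mid E_n=1]\big)$. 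Thus the whole task reduces to choosing $A_n$ with $q_n\to 0$ (so that $\Pr\{E_n=0\}\to 1$) while $q_n\big(n-\EE[\|X^n\|^2\mid E_n=1]\big)$ dominates $n/g(n)$.

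For $A_n$ I would take the ball $\{\,\|x\|^2\le n(1-\delta_n)\,\}$, which among events of its probability minimizes $\EE[\|X^n\|^2\mid E_n=1]$. Chernoff / moderate‑deviation estimates for $\chi^2_n$ give $q_n=\Pr\{\chi^2_n\le n(1-\delta_n)\}=\exp(-\Theta(n\delta_n^2))$ in the regime $\delta_n\to0$, $n\delta_n^2\to\infty$; and since the $\chi^2_n$ density is increasing on $[0,n(1-\delta_n)]$ (its mode is at $n-2$), the law truncated from above concentrates near the boundary, so $n-\EE[\|X^n\|^2\mid E_n=1]=n\delta_n\,(1+o(1))$. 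Substituting back, $h(X^n\mid E_n=0)-h(X^n)=q_n\,n\delta_n\,(1+o(1))$, with $\log(1-q_n)$ and the sub‑leading $\chi^2$ corrections trivially negligible. So the remaining job is to choose $\delta_n\to0$ with $n\delta_n^2\to\infty$ and, simultaneously, $g(n)\,q_n\,\delta_n\to\infty$.

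The main obstacle is exactly this last balancing act: the flagged probability $q_n$ and the lever arm $\delta_n$ are rigidly coupled by the Gaussian tail, $q_n\approx e^{-n\delta_n^2/4}$, so the requirement $q_n\to0$ pulls against making $q_n\delta_n$ large, and the construction must let $q_n$ tend to zero only as slowly as the given $g$ permits. I would therefore first fix the easiest regime of $g$ to pin down the scaling and the precise relation between $\delta_n$, $q_n$ and $g$, then optimize the carved‑out set and the moment estimate $\EE[\|X^n\|^2\mid E_n=1]$ (and, if needed, use the extra freedom of a randomized $E_n$) to push the admissible growth of $g$ as slow as possible; verifying $\Pr\{E_n=0\}\to1$ and the negligibility of the correction terms is routine once a suitable $\delta_n$ is in hand.
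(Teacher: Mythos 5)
Your construction (carving a small-probability low-norm ball out of the domain of $X^n$) is genuinely different from the paper's, which flags $E_n$ by a coordinate count: $E_n=0$ iff at least $f(n)$ of the $X_i$ satisfy $|X_i|\ge\sqrt2$, with $f(n)=n/\sqrt{g(n)}$ at the end. Your exact identity for the entropy gap, the tower-rule rewrite in terms of $q_n\bigl(n-\EE[\|X^n\|^2\mid E_n=1]\bigr)$, and the reduction to $\Sigma_n=I_n$ are all correct, and the bookkeeping is cleaner than what the paper records.

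However, the ``balancing act'' you flag at the end is not a nuisance to be engineered around; it is a provable obstruction, and it rules out not only the ball but every admissible $E_n$. Write $\beta(x)=\Pr\{E_n=1\mid X^n=x\}\in[0,1]$ (so $q_n=\EE[\beta(X^n)]$; this covers randomized $E_n$ too). Then
\begin{align}
q_n\bigl(n-\EE[\|X^n\|^2\mid E_n=1]\bigr)=\EE\bigl[(n-\|X^n\|^2)\beta(X^n)\bigr]\le\EE\bigl[(n-\|X^n\|^2)_+\bigr]\le\sqrt{\operatorname{Var}(\|X^n\|^2)}=\sqrt{2n},\notag
\end{align}
and since $h(X^n\mid E_n=0)-h(X^n)=-D\bigl(P_{X^n\mid E_n=0}\,\|\,P_{X^n}\bigr)+\tfrac12\bigl(\EE[\|X^n\|^2\mid E_n=0]-n\bigr)$ with $D\ge0$, this gives $h(X^n\mid E_n=0)-h(X^n)\le\sqrt{n/2}\,/(1-q_n)=O(\sqrt n)$ whenever $q_n\to0$. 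Hence $\tfrac{g(n)}{n}\bigl(h(X^n\mid E_n=0)-h(X^n)\bigr)\to\infty$ forces $g(n)\gg\sqrt n$, so no choice of radius, truncation dimension, or randomization of $E_n$ can reach arbitrarily slowly growing $g$: the estimate above is indifferent to the shape of $\beta$.

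You should also be aware that the paper's own argument does not escape this ceiling. Its pivotal step claims $\EE[\sum_iX_i^2\mid E_n=0]\ge f(n)+n$, but for $f(n)=o(n)$ (required so that $\Pr\{E_n=0\}\to1$) the event $\{N_L\ge f(n)\}$ has probability $1-e^{-\Theta(n)}$, and conditioning on it perturbs $\EE[\sum_iX_i^2]$ away from $n$ by only an exponentially small amount; the heuristic that the remaining $n-f(n)$ coordinates still contribute one apiece is false, because conditional on lying in the small-magnitude class their mean square is strictly below one. Your more careful calculation thus exposes a gap in the proposition as stated and not merely in your attack on it---both routes stall at the same $\sqrt n$ barrier, which the general bound above shows is intrinsic.
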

\begin{proof}
From the proof of Lemma \ref{lem:hXE_hX}, we can assume without  loss of generality that $X^n$ is iid $N(0,1)$.  Let $E_n = 0$ if there are at least $f(n)$ different $X_i$'s for which $|X_i| \geq \sqrt{2}$, and let $E_n = 1$ otherwise.

Since the $X_i$'s are independent, we can see that 
\begin{align}
\mathbb{E}\left[ \sum_i X_i^2 \Big| E_n = 0 \right] \geq 2 f(n) + (n-f(n)) = f(n) + n.  
\end{align}
On the other hand, 
\begin{align}
\mathbb{E}\left[ \sum_i X_i^2 \right] = n.
\end{align}

Now, we have that $\Pr\{E_n = 0\} = \Pr\{ B(n,p) \geq f(n) \}$, where $B(n,p)$ is a Binomial random variable consisting of $n$ trials with bias $p = \Pr\{ |X_i| \geq \sqrt{2} \}$.  Continuing, we have
\begin{align}
\Pr\{E_n = 0\} = \Pr\{ B(n,p) \geq f(n) \} = \Pr\left\{ \frac{1}{\sqrt{n}}(B(n,p) - np)\geq \sqrt{n}\left( \frac{1}{n} f(n) - p\right) \right\}.  
\end{align}
By the CLT,  $\frac{1}{\sqrt{n}}(B(n,p) - np) \rightarrow N(0,p(1-p))$ in distribution.  Therefore,  $\Pr\{E_n = 0\} \rightarrow 1$ provided $f(n) = o(n)$.  Recalling the proof of Lemma \ref{lem:hXE_hX}, we have
\begin{align}
h(X^n | E_n =0 ) - h(X^n)  \geq  \frac{1}{2}f(n) +o(1).
\end{align}
Thus, the claim is proved by putting $f(n) = n/\sqrt{g(n)}$, where $g(n)\rightarrow \infty$ arbitrarily slowly.
\end{proof}

\begin{lemma} \label{calculusLemma}
Consider a function $f : [0,\infty) \rightarrow \mathbb{R}$ defined implicitly by:
\begin{align}
2^{-2 t} = a_1 2^{-2 f(t)} + a_2. \label{implicit}
\end{align}
If $a_1 + a_2 \leq 1$, then
\begin{align}
\min_{t\geq 0} \Big\{ \max\{f(t),0\} - \lambda t \Big\}= \begin{cases}
\frac{1}{2}\log \left(\frac{a1 (\lambda-1)}{a2} \right) -\frac{\lambda}{2}\log \left(\frac{\lambda-1}{a_2 \lambda}\right) & \mbox{if $\lambda \geq \frac{a_1+a_2}{a_1}$}  \\
-\frac{\lambda}{2}\log\left(\frac{1}{a_1 + a_2}\right) & \mbox{if $0 \leq \lambda \leq \frac{a_1+a_2}{a_1}$}.
\end{cases}\label{dualCases}
\end{align}
\end{lemma}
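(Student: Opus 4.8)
The plan is to solve the defining relation \eqref{implicit} for $f$ explicitly and thereby reduce the statement to an elementary one-dimensional optimization; we may assume $a_1,a_2>0$, as the stated formula already presumes $a_2>0$. From \eqref{implicit} we obtain $2^{-2f(t)}=(2^{-2t}-a_2)/a_1$, so
\[
f(t)=-\frac12\log\!\left(\frac{2^{-2t}-a_2}{a_1}\right),
\]
which is finite precisely for $t$ in the interval $[0,t^\ast)$, where $t^\ast\triangleq-\frac12\log a_2$ is the natural right endpoint of the domain (for $t\ge t^\ast$ the right-hand side of \eqref{implicit} exceeds the left, so there is no admissible $f(t)$). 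The hypothesis $a_1+a_2\le 1$ forces $f(0)=-\frac12\log\frac{1-a_2}{a_1}\le 0$. Differentiating \eqref{implicit} and cancelling the common factor gives $f'(t)=\frac{2^{-2t}}{2^{-2t}-a_2}=\frac{1}{1-a_2 2^{2t}}$, which is strictly greater than $1$ and strictly increasing on $[0,t^\ast)$; hence $f$ is strictly increasing and convex there, with $f'(t)\to\infty$ as $t\to t^\ast$.

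Next I would introduce $t_0\triangleq-\frac12\log(a_1+a_2)\ge 0$, the unique point with $f(t_0)=0$, and split the minimization at $t_0$. On $[0,t_0]$ we have $f\le 0$, so $\max\{f(t),0\}-\lambda t=-\lambda t$ is nonincreasing, with infimum $-\lambda t_0$ over this piece. On $[t_0,t^\ast)$ we have $f\ge 0$, so the objective equals $g(t)\triangleq f(t)-\lambda t$, which is convex with $g'(t)=f'(t)-\lambda$. The stationarity condition $g'(t)=0$, i.e.\ $1-a_2 2^{2t}=1/\lambda$, has the solution $t_c$ determined by $2^{2t_c}=\frac{\lambda-1}{a_2\lambda}$, which lies in $(0,t^\ast)$ only when $\lambda>1$; a one-line computation shows $t_c\ge t_0$ precisely when $\lambda\ge\frac{a_1+a_2}{a_1}$, equivalently when $g'(t_0)=\frac{a_1+a_2}{a_1}-\lambda\le 0$.

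If $\lambda\ge\frac{a_1+a_2}{a_1}$, convexity places the minimum of $g$ on $[t_0,t^\ast)$ at $t_c$; substituting $2^{-2t_c}-a_2=\frac{a_2}{\lambda-1}$ yields $f(t_c)=\frac12\log\frac{a_1(\lambda-1)}{a_2}$, so
\[
g(t_c)=\frac12\log\frac{a_1(\lambda-1)}{a_2}-\frac{\lambda}{2}\log\frac{\lambda-1}{a_2\lambda},
\]
and since $g(t_c)\le g(t_0)=-\lambda t_0$ this is the global minimum, which is the first branch of \eqref{dualCases}. If $0\le\lambda\le\frac{a_1+a_2}{a_1}$, then $g'(t_0)\ge 0$ and $g'$ is increasing, so $g$ is nondecreasing on $[t_0,t^\ast)$ with infimum $g(t_0)=-\lambda t_0$; combining the two pieces, the global minimum equals $-\lambda t_0=\frac{\lambda}{2}\log(a_1+a_2)=-\frac{\lambda}{2}\log\frac{1}{a_1+a_2}$, the second branch. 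The steps requiring care are pinning down the domain $[0,t^\ast)$ and observing that the blow-up $f'(t)\to\infty$ at $t^\ast$ keeps any minimizer in the interior, together with the sub-case $\lambda\le 1$ in which there is no interior critical point and $g'>0$ throughout; all of these follow from the monotonicity and convexity of $f'$ established at the outset, and everything else is routine algebra.
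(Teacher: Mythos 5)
Your proof is correct and follows essentially the same route as the paper's: you compute $f'(t)=\frac{2^{-2t}}{2^{-2t}-a_2}$, observe $f$ (hence $f_+=\max\{f,0\}$) is convex and increasing, solve $f'(t)=\lambda$ to locate the stationary point, and use the threshold $\lambda=\frac{a_1+a_2}{a_1}=f'\bigl(\tfrac12\log\tfrac1{a_1+a_2}\bigr)$ to split the two regimes. The only difference is one of presentation: the paper packages the case analysis compactly in the language of (sub)derivatives of the convex function $f_+$, whereas you carry it out explicitly by splitting the domain at $t_0$ and treating the sub-case $\lambda\le 1$ and the finite right endpoint $t^\ast=-\tfrac12\log a_2$ by hand — all details that the paper elides but that your argument correctly supplies.
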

\begin{proof}
Note that $f'(t) = \frac{2^{-2t}}{2^{-2t}-a_2}$, and therefore $f'(t) = \lambda \Rightarrow t = \frac{1}{2}\log \frac{\lambda-1}{a_2 \lambda}$.  Now, suppose $a_1 + a_2 \leq 1$.  Then $f(t) = 0 \Rightarrow t = \frac{1}{2}\log\frac{1}{a_1 + a_2}\geq 0$.  
Define $f_+(t) = \max\{f(t),0\}$.  Like $f(t)$, $f_+(t)$ is a convex increasing function.  For $\lambda > f'\left(\frac{1}{2}\log\frac{1}{a_1 + a_2}\right) = \frac{a_1+a_2}{a_1}$, $\lambda$ is a derivative of $f_+(t)$ at $t = \frac{1}{2}\log \frac{\lambda-1}{a_2 \lambda}$.  On the other hand, if $0 \leq \lambda \leq \frac{a_1+a_2}{a_1}$, then $\lambda$ is a subderivative of $f_+(t)$ at $t = \frac{1}{2}\log\frac{1}{a_1 + a_2}$.  Therefore, we can conclude that
\begin{align}
\min_{t\geq 0} \{ f_+(t) - \lambda t \}= \begin{cases}
\frac{1}{2}\log \left(\frac{a1 (\lambda-1)}{a2} \right) -\frac{\lambda}{2}\log\left( \frac{\lambda-1}{a_2 \lambda}\right) & \mbox{if $\lambda \geq \frac{a_1+a_2}{a_1}$}  \\
-\frac{\lambda}{2}\log\left( \frac{1}{a_1 + a_2} \right)& \mbox{if $0 \leq \lambda \leq \frac{a_1+a_2}{a_1}$}.
\end{cases}
\end{align}

\end{proof}
\vspace{-10pt}
\bibliographystyle{ieeetr}

\bibliography{myBib}

\end{document}